\def\fullversion{1}
\definecolor{Darkblue}{rgb}{0,0,0.4}
\definecolor{Brown}{cmyk}{0,0.81,1.,0.60}
\definecolor{Purple}{cmyk}{0.45,0.86,0,0}
\newcommand{\mydriver}{hypertex}
 \renewcommand{\mydriver}{pdftex}
\newcommand{\lref}[2][]{\hyperref[#2]{#1~\ref*{#2}}}
\newtheorem{theorem}{Theorem}[section]
\newtheorem{corollary}[theorem]{Corollary}
\newtheorem{observation}[theorem]{Observation}
\newtheorem{lemma}[theorem]{Lemma}
\newtheorem{definition}{Definition}
\newtheorem{remark}{Remark}
\newcommand{\Real}{\mathbb R}
\newcommand{\OPT}{{\sf OPT}}
\newcommand{\eps}{\varepsilon}
\newcommand{\one}{\mathbf{1}}
\renewcommand{\Pr}{{\bf Pr}}
\newcommand{\cost}{\textsf{cost}}
\newcommand{\E}{\mathbf{\mathbb{E}}}
\newcommand{\avg}{\mathop{\textrm{avg}}}
\newcommand{\ud}{\mathrm{d}}
\newcommand{\card}[1]{|#1|}
\newcommand{\ess}{\mathcal{S}}
\newcounter{note}[section]
\newcommand{\ktnote}[1]{}
\newcommand{\initOneLiners}{%
    \setlength{\itemsep}{0pt}
    \setlength{\parsep }{0pt}
    \setlength{\topsep }{0pt}
%      \usecounter{myLISTctr}
}
\newenvironment{OneLiners}[1][\ensuremath{\bullet}]
    {\begin{list}
        {#1}
        {\initOneLiners}}
    {\end{list}}
\newcommand{\sse}{\subseteq}
\newcommand{\ts}{\textstyle}
\newcommand{\whp}{\textbf{whp}\xspace}
\begin{document}

\title{Differentially Private Combinatorial Optimization}

\author{
Anupam Gupta
\and
Katrina Ligett
\and
Frank McSherry
\and
Aaron Roth
\and
Kunal Talwar
}

\maketitle
\thispagestyle{empty}
\begin{abstract}
  \bigskip Consider the following problem: given a metric space, some of
  whose points are ``clients,'' select a set of at most $k$ facility locations to
  minimize the average distance from the clients to their nearest facility.
  This is just the well-studied $k$-median problem, for which many
  approximation algorithms and hardness results are known. Note that the
  objective function encourages opening facilities in areas where there
  are many clients, and given a solution, it is often possible to get a
  good idea of where the clients are located. This raises the
  following quandary: what if the locations of the clients are sensitive
  information that we would like to keep private? \emph{Is it even
    possible to design good algorithms for this problem that preserve
    the privacy of the clients?}

  \medskip In this paper, we initiate a systematic study of algorithms
  for discrete optimization problems in the framework of differential
  privacy (which formalizes the idea of protecting the privacy of
  individual input elements). We show that many such problems indeed
  have good approximation algorithms that preserve differential privacy;
  this is even in cases where it is impossible to preserve cryptographic
  definitions of privacy while computing any non-trivial approximation
  to even the \emph{value} of an optimal solution, let alone the entire
  solution.

  \medskip Apart from the $k$-median problem, we consider the problems
  of vertex and set cover, min-cut, facility location, and
  Steiner tree, and give approximation algorithms and lower bounds for
  these problems.  We also consider the recently introduced submodular
  maximization problem, ``Combinatorial Public Projects'' (CPP), shown
  by Papadimitriou et al.  \cite{PSS08} to be inapproximable to
  subpolynomial multiplicative factors by any efficient and
  \emph{truthful} algorithm. We give a differentially private (and hence
  approximately truthful) algorithm that achieves a logarithmic additive
  approximation.
\end{abstract}

\newpage

\section{Introduction}
\label{sec:introduction}

Consider the following problems:
\begin{OneLiners}
\item Assign people using a social network  to one of
  two servers so that most pairs of friends are assigned to the same
  server.
\item Open some number of HIV treatment centers so that the
  average commute time for patients is small.
\item Open a small number of drop-off centers for undercover agents so
  that each agent is able to visit some site convenient to her (each
  providing a list of acceptable sites).
\end{OneLiners}

The above problems can be modeled as instances of well-known
combinatorial optimization problems: respectively the minimum cut
problem, the $k$-median problem, and the set cover problem.  Good
heuristics have been designed for these problems, and hence they may be
considered well-studied and solved.  However, in the above scenarios and
in many others, the input data (friendship relations, medical history,
agents' locations) represent sensitive information about individuals.
Data privacy is a crucial design goal, and it may be vastly preferable
to use a private algorithm that gives somewhat suboptimal solutions to a
non-private optimal algorithm.  This leads us to the following central
questions: \emph{Given that the most benign of actions possibly leaks
  sensitive information, how should we design algorithms for the above
  problems?  What are the fundamental trade-offs between the utility of
  these algorithms and the privacy guarantees they give us?}

The notion of privacy we consider in this paper is that of {\em
  differential privacy}. Informally, differential privacy guarantees
that the distribution of outcomes of the computation does not change
significantly when one individual changes her input data. This is a very strong
privacy guarantee: anything significant about any individual that an
adversary could learn from the algorithm's output, he could also learn
were the individual not participating in the database at all---and this
holds true no matter what auxiliary information the adversary may have.
This definition guarantees privacy of an individual's sensitive data,
while allowing the computation to respond when a large number of
individuals change their data, as any useful computation must do.

\subsection{Our Results}

In this paper we initiate a systematic study of designing algorithms for
combinatorial optimization problems under the constraint of differential
privacy. Here is a short summary of some of the main contributions of
our work.

\begin{itemize}
\item While the exponential mechanism of \cite{MT07} is an easy way to obtain
  \emph{computationally inefficient} private approximation algorithms
  for some problems, the approximation guarantees given by a direct
  application of this can be far from optimal (e.g., see our results on
  min-cut and weighted set cover). In these cases, we have to use
  different techniques---often more sophisticated applications of the
  exponential mechanism---to get good (albeit computationally expensive)
  solutions.

\item However, we want our algorithms to be \emph{computationally
    efficient} and \emph{private} at the same time: here we cannot use
  the exponential mechanism directly, and hence we develop new
  algorithmic ideas.  We give private algorithms for a wide variety of
  \emph{search} problems, where we must not only approximate the
  \emph{value} of the solution, but also produce a solution that
  optimizes this value. See \lref[Table]{tab:results} for our results.

\item For some problems, unfortunately, just outputting an explicit
  solution might leak private information. For example, if we output a
  vertex cover of some graph explicitly, any pair of vertices not output
  reveals that they do not share an edge ---so any private explicit
  vertex cover algorithm must output $n-1$ vertices. To overcome this
  hurdle, we instead privately output an implicit representation of a
  small vertex cover--- we view vertex cover as a location problem, and
  output an orientation of the edges.  Each edge can cover itself using the end point that it points to. The orientation is output privately, and the resulting
  vertex cover approximates the optimal vertex cover well. We deal with
  similar representational issues for other problems like set cover as
  well.

\item We also show lower bounds on the approximation guarantees
  regardless of computational considerations.  For example, for vertex
  cover, we show that any $\epsilon$-differentially private algorithm
  must have an approximation guarantee of $\Omega(1/\epsilon)$.  We show
  that each of our lower bounds are tight: we give (computationally
  inefficient) algorithms with matching approximation guarantees.

\item Our results have implications beyond privacy as well:
  Papadimitriou et al.\ \cite{PSS08} introduce the {\em Combinatorial
    Public Project} problem, a special case of submodular
  maximization, and show that the problem can be well approximated by
  either a truthful mechanism or an efficient algorithm, but not by
  both simultaneously. In contrast to this negative result, we show
  that under differential privacy (which can be interpreted as an
  approximate but robust alternative to truthfulness) we can achieve
  the same approximation factor as the best non-truthful algorithm,
  plus an additive logarithmic loss.

\item Finally, we develop a private amplification lemma: we show how to take
  private algorithms that gives bounds in expectation and efficiently
  convert them (privately) into bounds with high probability. This
  answers an open question in the paper of Feldman et al.~\cite{FFKN09}.
\end{itemize}

\begin{table*} %[htdp]
\begin{minipage}[center]{\textwidth}
%\begin{center}
%{\small
\begin{tabular}{|r|c|cc|c|}
\hline
& Non-private & Efficient Algorithms & & Information Theoretic\\ \hline
Vertex Cover & $2 \times \OPT$ \cite{Pitt85} & $(2 + 16/\epsilon) \times \OPT$ & & $\Theta(1/\epsilon) \times \OPT$ \\ \hline
Wtd. Vertex Cover & $2 \times \OPT$  \cite{Hoc82} & $(16 + 16/\epsilon) \times \OPT$ & & $\Theta(1/\epsilon) \times \OPT$ \\ \hline
Set Cover  & $ \ln n \times \OPT$ \cite{Joh74} & $O(\ln n + \ln m/\epsilon) \times \OPT$ &$\dagger$& $\Theta(\ln m / \epsilon) \times \OPT$\\ \hline
Wtd. Set Cover & $\ln n \times \OPT$ \cite{Chv79} & $O(\ln n (\ln m + \ln\ln n)/\epsilon) \times \OPT$ &$\dagger$& $\Theta(\ln m /\epsilon) \times \OPT$ \\ \hline
Min Cut & $\OPT$ \cite{FF56} & $\OPT + O(\ln n/\epsilon)$ &$\dagger$& $\OPT + \Theta(\ln n /\epsilon)$ \\ \hline
CPPP  & $(1-1/e) \times \OPT$ \cite{NWF78} & $(1 - 1/e) \times \OPT - O(k \ln m/\epsilon)$ &$\dagger$& $\OPT - \Theta(k \ln (m/k) / \epsilon)$\\ \hline
$k$-Median & $(3 + \gamma) \times \OPT$ \cite{AGKMMP01} & $6 \times \OPT + O(k^2 \ln^2 n /\epsilon)$ &&  $\OPT + \Theta( k\ln (n/k) /\epsilon)$\footnote{\cite{FFKN09} independently prove a similar lower bound.} \\ \hline
\end{tabular}
%}
\caption{\label{tab:results} Summary of Results. Results in the
  second and third columns are from this paper.}
%\end{center}
\end{minipage}
\end{table*}
\lref[Table]{tab:results} summarizes the bounds we prove in this
paper.  For each problem, it reports (in the first column) the best
known non-private approximation guarantees, (in the second column) our
best efficient $\epsilon$-differentially private algorithms, and in
each (in the third column) case matching upper and lower bounds for
inefficient $\epsilon$-differentially private algorithms. For a few of
the efficient algorithms (marked with a $\dagger$) the guarantees are
only for an approximate form of differential privacy, incorporating a
failure probability $\delta$, and scaling the effective value of
$\epsilon$ up by $\ln (1/\delta)$.

\subsection{Related Work}
\label{sec:related-work}

Differential privacy is a relatively recent privacy definition (e.g.,
see
~\cite{DMNS06,Dwork-icalp,NRS07,BLR08,KLNRS,FFKN09,DNRRV09},
and see \cite{Dwo08} for an excellent survey), that tries to capture
the intuition of individual
privacy. 
Many algorithms in this framework have focused on measurement,
statistics, and learning tasks applied to statistical data sets,
rather than on processing and producing combinatorial objects.  One
exception to this is the Exponential Mechanism of \cite{MT07}
which allows the selection from a set of discrete alternatives.

Independently, Feldman et al. \cite{FFKN09} also consider the problem of privately approximating $k$-medians for points in $\Re^d$. Their model differs slightly from ours, which makes the results largely incomparable: while our results for general metrics translated to $\Re^d$ give smaller additive errors than theirs, we only output a $k$-median approximation whereas they output coresets for the problem. Their lower bound argument for private coresets is similar to ours.

Prior work on Secure Function Evaluation (SFE) tells us that in fact
the minimum cut in a graph can be computed in a distributed fashion in
such a way that computations {\em reveals nothing that cannot be
  learnt from the output of the computation}. While this is a strong
form of a privacy guarantee, it may be unsatisfying to an individual
whose private data can be inferred from the privately computed
output. Indeed, it is not hard to come up with instances where an
attacker with some limited auxiliary information can infer the
presence or absence of specific edges from local information about the
minimum cut in the graph. By relaxing the whole input privacy
requirement of SFE, differential privacy is able to provide
unconditional per element privacy, which SFE need not provide if the
output itself discloses properties of input.

Feigenbaum et al.~\cite{FIMNSW} extend the notion of SFE to NP hard problems for which efficient algorithms must output an approximation to the optimum, unless P=NP. They defined as {\em functional privacy} the constraint that two inputs with the same output value
(e.g. the size of an optimal vertex cover) must produce the same value
under the approximation algorithm.
Under this constraint, Halevi et
al.~\cite{HKKN01} show that approximating the value of vertex cover to
within $n^{1-\xi}$ is as hard as computing the value itself, for any
constant $\xi$. These hardness results were extended to {\em search}
problems by Beimel et al.~\cite{BCNW06}, where the constraint is
relaxed to only equate those inputs whose sets of optimal solutions
are identical. These results were extended and strengthened by Beimel
et al.~\cite{BHN07, BMNW07}.

Nonetheless, Feigenbaum et al.~\cite{FIMNSW} and others show a number
of positive approximation results under versions of the functional
privacy model. Halevi et al.~\cite{HKKN01} provide positive results in
the function privacy setting when the algorithm is permitted to leak
few bits (each equivalence class of input need not produce identical
output, but must be one of at most $2^b$ possible outcomes).  Indyk
and Woodruff also give some positive results for the approximation of
$\ell_2$ distance and a nearest neighbor problem \cite{IW06}. However,
as functional privacy extends SFE, it does not protect sensitive data that can be inferred from the output.

Nevertheless, SFE provides an implementation of any function in a distributed setting such that nothing other than the output of the function is revealed. One can therefore run a differentially private algorithm is a distributed manner using SFE (see e.g.~\cite{DKMMN06,BeimelNO08}), in the absence of a trusted curator.

\section{Definitions}
\label{sec:notation}

Differential privacy is a privacy definition for computations run
against sensitive input data sets. Its requirement, informally, is that
the computation behaves nearly identically on two input data sets that
are nearly identical; the probability of any outcome must not increase
by more than a small constant factor when the input set is altered by a
single element. Formally,

\begin{definition}[\cite{DMNS06}]
  We say a randomized computation $M$ has $\epsilon$-{\em differential
    privacy} if for any two input sets $A$ and $B$ with symmetric
  difference one, and for any set of outcomes $S \subseteq Range(M)$,
  \vspace{-0.05in}
  \begin{eqnarray}
    \Pr[M(A) \in S] & \le & \exp(\epsilon) \times \Pr[M(B) \in S] \; .
  \end{eqnarray}
\end{definition}
\vspace{-0.05in} The definition has several appealing properties from a
privacy perspective. One that is most important for us is that arbitrary
sequences of differentially private computations are also differentially
private, with an $\epsilon$ parameter equal to the sum of those
comprising the sequence. This is true even when subsequent computations
can depend on and incorporate the results of prior differentially
private computations~\cite{DKMMN06}, allowing repetition of
differentially private steps to improve solutions.

\subsection{Approximate Differential Privacy}

One relaxation of differential privacy \cite{DKMMN06} allows a small additive term in the bound:
\begin{definition}
  We say a randomized computation $M$ has $\delta$-{\em approximate}
  $\epsilon$-{\em differential privacy} if for any two input sets $A$
  and $B$ with symmetric difference one, and for any set of outcomes $S
  \subseteq Range(M)$,
\ifnum\fullversion=1
  \begin{eqnarray}
\Pr[M(A) \in S] & \le & \exp(\epsilon) \times \Pr[M(B) \in S] + \delta\; .
  \end{eqnarray}
\else
  \begin{eqnarray}
\;\;\;\;\;\;\;\;\Pr[M(A) \in S] & \le & \exp(\epsilon) \times \Pr[M(B) \in S] + \delta\; .
  \end{eqnarray}
\fi
\end{definition}

The flavor of guarantee is that although not all events have their
probabilities preserved, the alteration is only for very low probability
events, and is very unlikely to happen. The $\delta$ is best thought of
as $1/poly(n)$ for a data set containing some subset of $n$ candidate
records. We note that there are stronger notions of approximate differential privacy (c.f.~\cite{MachanavajjhalaKAGV08}), but in our settings, they are equivalent upto $poly(n)$ changes in $\delta$. We therefore restrict ourselves to this definition here.

\subsection{The Exponential Mechanism}

One particularly general tool that we will often use is the exponential mechanism of \cite{MT07}. This construction allows differentially private computation over arbitrary domains and ranges, parametrized by a query function $q(A,r)$ mapping a pair of input data set $A$ (a multiset over some domain) and candidate result $r$ to a real valued ``score''. With $q$ and a target privacy value $\epsilon$, the mechanism selects an output with exponential bias in favor of high scoring outputs:
\vspace{-0.05in}
\begin{eqnarray}
Pr[{\mathcal E}_q^\epsilon(A) = r] & \propto & \exp(\epsilon q(A,r)) \; .
\end{eqnarray}

If the query function $q$ has the property that any two adjacent data
sets have score within $\Delta$ of each other, for all possible outputs
$r$, the mechanism provides $2\epsilon\Delta$-differential privacy.
Typically, we would normalize $q$ so that $\Delta = 1$.
We will be using this mechanism almost exclusively over discrete ranges, where we can derive the following simple analogue of a theorem of \cite{MT07}, that the probability of a highly suboptimal output is exponentially low:
\begin{theorem}
  The exponential mechanism, when used to select an output $r \in R$ gives $2\epsilon\Delta$-differential privacy, letting $R_\OPT$ be the subset of $R$ achieving $q(A,r) = \max_r q(A,r)$, ensures that
\vspace{-0.05in}
\ifnum\fullversion=1
  \begin{eqnarray}
    \label{eqn:expmech}
    \Pr[q(A, \mathcal{E}_q^\epsilon(A)) < \max_r q(A,r) -
    \ln(|R|/|R_\OPT|)/\epsilon - t/\epsilon] & \le & \exp(-t) \; .
  \end{eqnarray}
\else
   \begin{equation*}
        \Pr[q(A, \mathcal{E}_q^\epsilon(A)) < \max_r q(A,r) -
    \ln(|R|/|R_\OPT|)/\epsilon - t/\epsilon]
    \end{equation*}
    \vspace{-0.3in}
    \begin{equation}
    \label{eqn:expmech}
    \le
     \exp(-t) \; .
    \end{equation}
\fi
\end{theorem}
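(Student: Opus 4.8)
The plan is to bound directly the probability mass that the exponential mechanism places on ``bad'' outputs, working from the explicit form of its output distribution. Write $\OPT := \max_r q(A,r)$ and let $Z := \sum_{r \in R}\exp(\epsilon q(A,r))$ be the normalizing constant, so that $\Pr[\mathcal{E}_q^\epsilon(A) = r] = \exp(\epsilon q(A,r))/Z$ for every $r \in R$. The first step is to lower bound $Z$: since each $r \in R_\OPT$ contributes exactly $\exp(\epsilon\OPT)$ to the defining sum, $Z \ge |R_\OPT|\,\exp(\epsilon\OPT)$. This is the only place $R_\OPT$ enters the argument, and it is precisely what upgrades a naive factor of $|R|$ in the bound to the ratio $|R|/|R_\OPT|$.

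Next, put $c := \ln(|R|/|R_\OPT|)/\epsilon + t/\epsilon$ and let $B := \{r \in R : q(A,r) < \OPT - c\}$ be the set of ``bad'' outputs. Then $|B| \le |R|$ and every $r \in B$ contributes at most $\exp(\epsilon(\OPT - c))$ to $\sum_{r \in B}\exp(\epsilon q(A,r))$; combining this with the lower bound on $Z$,
\[
\Pr\big[q(A,\mathcal{E}_q^\epsilon(A)) < \OPT - c\big] \;=\; \sum_{r \in B}\frac{\exp(\epsilon q(A,r))}{Z} \;\le\; \frac{|R|\,\exp(\epsilon(\OPT - c))}{|R_\OPT|\,\exp(\epsilon\OPT)} \;=\; \frac{|R|}{|R_\OPT|}\exp(-\epsilon c).
\]
Substituting the value of $c$ gives $\exp(-\epsilon c) = (|R_\OPT|/|R|)\exp(-t)$, the two ratios cancel, and we are left with exactly $\exp(-t)$, which is \eqref{eqn:expmech}.

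For the $2\epsilon\Delta$-differential privacy claim I would invoke the standard argument of \cite{MT07}: passing to an adjacent data set changes each score $q(\cdot,r)$ by at most $\Delta$, hence multiplies every term $\exp(\epsilon q(\cdot,r))$, and therefore also the normalizer $Z$, by a factor lying in $[\exp(-\epsilon\Delta),\exp(\epsilon\Delta)]$; the ratio of the output probabilities on two adjacent data sets is then within a factor $\exp(2\epsilon\Delta)$ of one. I do not expect a substantive obstacle in any of this; the only thing that needs care is the bookkeeping around $R_\OPT$ --- keeping the lower bound on $Z$ and the choice of $c$ tight enough that the final constant comes out to be exactly $\exp(-t)$ rather than something slightly larger.
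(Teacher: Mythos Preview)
Your proposal is correct and follows essentially the same argument as the paper's own (one-paragraph) proof: lower bound the normalizer by the contribution $|R_\OPT|\exp(\epsilon\OPT)$ of the optimal outputs, upper bound each bad outcome's unnormalized weight by $\exp(\epsilon(\OPT-c))$, and then sum over at most $|R|$ bad outputs so that the $|R|/|R_\OPT|$ ratio cancels against the $\exp(-\epsilon c)$ factor to leave exactly $\exp(-t)$. Your handling of the $2\epsilon\Delta$-privacy claim via the standard \cite{MT07} ratio argument is also what the paper intends.
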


The proof of the theorem is almost immediate: any outcome with score less
than $\max_r q(A,r) - \ln(|R|/|R_\OPT|)/\epsilon - t/\epsilon$ will have normalized probability at most $\exp(-t)/|R|$; each has weight at most $\exp(\OPT - t)|R_\OPT|/|R|$, but is normalized by at least $|R_\OPT|\exp(\OPT)$ from the optimal outputs. As there are at most $|R|$ such
outputs their cumulative probability is at most~$\exp(-t)$.

\section{Private Min-Cut}

Given a graph $G = (V,E)$ the minimum cut problem is to find a cut
$(S,S^c)$ so as to minimize $E(S,S^c)$. In absence of privacy
constraints, this problem is efficiently solvable exactly. However,
outputting an exact solution violates privacy, as we show in Section
\ref{sec:mincut-lb}. Thus, we give an
algorithm to output a cut within additive $O(\log n/\epsilon)$ edges
of optimal.

The algorithm has two
stages: First, given a graph $G$, we add edges to the graph to raise the
cost of the min cut to at least $4\ln n/\epsilon$, in a differentially
private manner. Second, we deploy the exponential mechanism over all
cuts in the graph, using a theorem of Karger to show that for graphs
with min cut at least $4\ln n/\epsilon$ the number of cuts within
additive $t$ of \OPT\ increases no faster than exponentially with $t$.
Although the exponential mechanism takes time exponential in $n$, we can
construct a polynomial time version by considering only the polynomially
many cuts within $O(\ln n/\epsilon)$ of \OPT. Below, let $Cost(H,(S,S^c))$ denote the size $E_H(S,S^c)$ of the cut $(S,S^c)$ in a graph $H$.

\begin{algorithm}
  \caption{The Min-Cut Algorithm}\label{alg:mincut}
  \begin{algorithmic}[1]
    \STATE \textbf{Input:} $G = (V,E)$,$\epsilon$.
    \STATE \textbf{Let} $H_0 \subset H_1,  \ldots, \subset H_{n\choose 2}$ be arbitrary strictly increasing sets of edges on $V$.
    \STATE \textbf{Choose} index $i\in[0,{n\choose 2}]$ with probability proportional to $\exp(-\epsilon |\OPT(G \cup H_i) - 8\ln n/\epsilon|)$.
    \STATE \textbf{Choose} a subset $S \in 2^V \setminus \{\emptyset,V\}$ with probability proportional to $\exp(-\epsilon Cost(G \cup H_i, (S,S^c)))$.
    \STATE \textbf{Output} the cut $C=(S,S^c)$.
\end{algorithmic}
\end{algorithm}

Our result relies on a result of Karger about the number of near-minimum
cuts in a graph~\cite{Karger-cuts}
\begin{lemma}[\cite{Karger-cuts}]
  \label{KargerLemma}
  For any graph $G$ with min cut $C$, there are at most $n^{2\alpha}$
  cuts of size at most $\alpha C$.
\end{lemma}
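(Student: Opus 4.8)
The plan is to recover the lemma from Karger's randomized edge‑contraction algorithm. Recall that one contraction step picks a uniformly random edge of the current multigraph and identifies its two endpoints, deleting the resulting self‑loops but keeping parallel edges; running this until $k$ super‑vertices remain produces, in the obvious way, one of the $2^{k-1}-1$ cuts of $G$. The structural fact that makes the argument work is that contraction never violates the min‑cut bound: every intermediate multigraph still has minimum degree at least $C$, so a multigraph on $j$ vertices has at least $jC/2$ edges.

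Assume $\alpha\ge 1$ (the case $\alpha<1$ being vacuous, as no cut has value below $C$). First I would fix an arbitrary target cut $(A,A^c)$ with $\card{E(A,A^c)}\le\alpha C$ and set $k=\lceil 2\alpha\rceil$. Run the contraction algorithm down to $k$ super‑vertices. When $j$ vertices are present, the probability that the random edge crosses $(A,A^c)$ is at most $(\alpha C)/(jC/2)=2\alpha/j\le k/j$, so the probability that $(A,A^c)$ survives all these steps intact --- i.e.\ that it descends to a genuine cut of the final $k$‑vertex multigraph --- is at least
\[
  \prod_{j=k+1}^{n}\Bigl(1-\frac{k}{j}\Bigr)=\prod_{j=k+1}^{n}\frac{j-k}{j}=\binom{n}{k}^{-1}.
\]
Conditioned on survival, outputting a uniformly random one of the (at most $2^{k-1}-1$) cuts of the $k$‑vertex multigraph recovers $(A,A^c)$ with probability at least $(2^{k-1}-1)^{-1}$; hence the whole procedure outputs this particular cut with probability at least $\binom{n}{k}^{-1}(2^{k-1}-1)^{-1}$. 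Since distinct cuts are disjoint outcomes of the same experiment, summing over all cuts of size $\le\alpha C$ shows their number is at most $\binom{n}{k}(2^{k-1}-1)<\binom{n}{k}\,2^{k-1}$.

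The last step is to turn this into the claimed clean bound $n^{2\alpha}$. When $2\alpha$ is an integer we have $k=2\alpha$ and $\binom{n}{k}2^{k-1}\le(n^{k}/k!)\,2^{k-1}\le n^{k}=n^{2\alpha}$, using $2^{k-1}\le k!$ for $k\ge 2$. The one place that needs genuine care --- and the step I expect to be the main obstacle --- is non‑integral $2\alpha$: simply rounding up to $k=\lceil 2\alpha\rceil$ inflates the exponent to $k$, which can strictly exceed $2\alpha$, so one must instead interpolate, e.g.\ by continuing to contract past $\lceil 2\alpha\rceil$ vertices while at each step randomly choosing whether to freeze the current partition, trading the $2^{k-1}$ combinatorial factor against the $\binom{n}{k}$ factor at precisely the exponent $2\alpha$. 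This refined accounting is exactly what \cite{Karger-cuts} carries out, and I would invoke it rather than reproduce it here.
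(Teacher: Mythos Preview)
The paper does not prove this lemma at all: it is stated with a citation to \cite{Karger-cuts} and used as a black box. Your sketch is the standard Karger contraction argument and is correct as far as it goes, including your explicit flagging of the non-integral $2\alpha$ case as the delicate step that requires the interpolation carried out in \cite{Karger-cuts}; since both you and the paper ultimately defer to Karger for the precise $n^{2\alpha}$ bound, there is nothing further to compare.
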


By enlarging the size of the min cut in $G \cup H_i$ to at least $ 4\ln
n /\epsilon$, we ensure that the number of cuts of value $\OPT(G \cup
H_i) + t$ is bounded by $n^2\exp(\epsilon t/2)$. The downweighting of
the exponential mechanism will be able to counteract this growth in
number and ensure that we select a good cut.

\begin{theorem}
For any graph $G$, the expected cost of ALG is at most $\OPT + O(\ln n/\epsilon)$.
\end{theorem}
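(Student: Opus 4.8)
The plan is to follow the two‑stage structure of Algorithm~\ref{alg:mincut} and show that each stage degrades the solution by only an additive $O(\ln n/\epsilon)$. Write $\OPT=\OPT(G)$ and $f(i)=\OPT(G\cup H_i)$. Since the $H_i$ are strictly increasing sets of edges with $H_0=\emptyset$, the map $f$ is non‑decreasing, $f(0)=\OPT$, and $f(i{+}1)-f(i)\le 1$, because inserting a single edge raises the value of every cut — hence the min‑cut value — by at most one. Also, as $G\subseteq G\cup H_i$, the output cut $C=(S,S^c)$ satisfies $\cost(G,C)\le\cost(G\cup H_i,C)$, so it suffices to bound the cost of $C$ in the augmented graph $G':=G\cup H_i$ that is actually chosen. (We work in the regime $8\ln n/\epsilon\le n-1$, so that $f$, which reaches the min cut $n-1$ of the complete graph at $i=\binom n2$, attains values $\ge 8\ln n/\epsilon$; this is the interesting range for the claimed bound.)

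\emph{First stage.} Step~3 is the exponential mechanism with score $q(G,i)=-\,\bigl|f(i)-8\ln n/\epsilon\bigr|$; a single edge change moves every $f(i)$ by at most one, so $q$ has sensitivity $1$, and the range has size $|R|=\binom n2+1\le n^2$. Applying the tail bound~\eqref{eqn:expmech} with $t=2\ln n$, we get that except with probability $n^{-2}$ the chosen index satisfies $\bigl|f(i)-8\ln n/\epsilon\bigr|\le v^{*}+4\ln n/\epsilon$, where $v^{*}=\min_j|f(j)-8\ln n/\epsilon|$. If $\OPT\le 8\ln n/\epsilon$, the unit‑step property forces some $f(j)$ into $[\lfloor 8\ln n/\epsilon\rfloor,\lceil 8\ln n/\epsilon\rceil]$, so $v^{*}\le\tfrac12$ and hence $3\ln n/\epsilon\le f(i)\le \OPT+O(\ln n/\epsilon)$ (using $\OPT\ge 0$ for the upper bound); if $\OPT>8\ln n/\epsilon$, monotonicity gives $v^{*}=\OPT-8\ln n/\epsilon$, whence $f(i)\le\OPT+4\ln n/\epsilon$ while $f(i)\ge f(0)=\OPT>3\ln n/\epsilon$. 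Either way, with probability $\ge 1-n^{-2}$ the augmented graph has $3\ln n/\epsilon\ \le\ \OPT(G')\ \le\ \OPT+O(\ln n/\epsilon)$.

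\emph{Second stage.} Condition on such an $i$. Step~4 is the exponential mechanism over cuts of $G'$ with score $-\cost(G',\cdot)$, of sensitivity $1$. Since $\OPT(G')\ge 3\ln n/\epsilon$, Karger's Lemma~\ref{KargerLemma} (with $\alpha=1+s/\OPT(G')$) bounds the number of cuts of $G'$ of size at most $\OPT(G')+s$ by $n^{2(1+s/\OPT(G'))}\le n^{2}\exp(2\epsilon s/3)$. Feeding this into the weight estimate from the proof following~\eqref{eqn:expmech} — partition the cuts by size $k\ge\OPT(G')+s$, bound the count of each by $n^{2}\exp(2\epsilon(k-\OPT(G'))/3)$, weight each by $\exp(-\epsilon k)$, sum the resulting geometric series in $k$, and divide by the weight $\exp(-\epsilon\OPT(G'))$ of a minimum cut — gives $\Pr[\cost(G',C)\ge\OPT(G')+s]\le O(n^{2}/\epsilon)\exp(-\epsilon s/3)$, which is at most $n^{-2}$ for a suitable $s=O(\ln n/\epsilon)$.

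\emph{Combining.} By a union bound, with probability at least $1-2n^{-2}$ we have $\cost(G,C)\le\cost(G',C)\le\OPT(G')+O(\ln n/\epsilon)\le\OPT+O(\ln n/\epsilon)$. On the complementary event $C$ trivially has cost at most $|E|\le\binom n2\le n^{2}$, contributing at most $2n^{-2}\cdot n^{2}=O(1)$ to the expectation. Hence $\E[\cost(G,C)]\le\OPT+O(\ln n/\epsilon)$. The step I expect to be the crux is the first stage: one must show the index‑selection mechanism reliably lands on an augmented graph whose min cut is simultaneously (i) large enough — $\Omega(\ln n/\epsilon)$ with a sufficiently large constant — that Karger's near‑min‑cut count is subexponential enough to be overcome by the exponential mechanism's downweighting, and (ii) no more than $O(\ln n/\epsilon)$ above $\OPT$. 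The monotone unit‑step behaviour of $i\mapsto\OPT(G\cup H_i)$, together with the choice of target $8\ln n/\epsilon$ (leaving a $\Theta(\ln n/\epsilon)$ cushion on each side), is exactly what makes both hold.
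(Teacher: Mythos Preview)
Your proof is correct and follows essentially the same two-stage approach as the paper: first use the exponential mechanism over indices to land on an augmented graph whose min cut is $\Theta(\ln n/\epsilon)$ and within $O(\ln n/\epsilon)$ of $\OPT$, then invoke Karger's near-min-cut bound to show the second exponential mechanism picks a near-optimal cut with high probability. Your argument is in fact more explicit than the paper's in several places --- you justify the existence of a good index via the unit-step property of $i\mapsto\OPT(G\cup H_i)$, you note $\cost(G,C)\le\cost(G',C)$, and you account for the failure event's $O(1)$ contribution to the expectation --- whereas the paper's proof leaves these implicit; the only difference in the second stage is that the paper does an Abel summation to avoid the stray $1/\epsilon$ factor in your tail bound, but your parenthetical regime assumption $8\ln n/\epsilon\le n-1$ forces $\ln(1/\epsilon)=O(\ln n)$ and absorbs it anyway.
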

\begin{proof}
  First, we argue that the selected index $i$ satisfies $4\ln
  n/\epsilon < \OPT(G \cup H_i) < \OPT(G) + 12\ln n/\epsilon$ with
  probability at least $1-1/n^2$. For $\OPT > 8\ln n /\epsilon$,
  Equation \ref{eqn:expmech} ensures that the probability of exceeding
  the optimal choice ($H_0$) by $4 \ln n /\epsilon$ is at most $1-
  1/n^2$. Likewise, for $\OPT < 8\ln n/\epsilon$, there is some
  optimal $H_i$ achieving min cut size $8\ln n/\epsilon$, and the
  probability we end up farther away than $4\ln n/\epsilon$ is at most
  $1-1/n^2$.

  Assuming now that $\OPT(G\cup H_i) > 4\ln n/\epsilon$, Karger's
  lemma argues that the number $c_t$ of cuts in $G \cup H_i$ of cost
  at most $\OPT(G \cup H_i) + t$ is at most $n^2 \exp(\epsilon t
  /2)$. As we are assured a cut of size $\OPT(G\cup H_i)$ exists, each
  cut of size $\OPT(G\cup H_i) + t$ will receive probability at most
  $\exp(-\epsilon t)$. Put together, the probability of a cut
  exceeding $\OPT(G\cup H_i) + b$ is at most
\ifnum\fullversion=1
\begin{eqnarray}
  \Pr[Cost(G\cup H_i, C) > \OPT(G \cup H_i) + b] & \le & \sum_{t > b} \exp(-\epsilon t) (c_t - c_{t-1}) \\
  & \le & (\exp(\epsilon)-1)\sum_{t > b} \exp(-\epsilon t) c_t \\
  & \le & (\exp(\epsilon)-1)\sum_{t > b} \exp(-\epsilon t/2) n^2 \end{eqnarray}
\else
\begin{align*}
&\Pr[Cost(G\cup H_i, C) > \OPT(G \cup H_i) + b] \\
  & \le  \sum_{t > b} \exp(-\epsilon t) (c_t - c_{t-1})\\
  & \le  (\exp(\epsilon)-1)\sum_{t > b} \exp(-\epsilon t) c_t \\
  & \le  (\exp(\epsilon)-1)\sum_{t > b} \exp(-\epsilon t/2) n^2 \end{align*}
\fi
The sum telescopes to $\exp(-\epsilon b/2) n^2 / (\exp(\epsilon/2) -
1)$, and the denominator is within a constant factor of the leading
factor of $(\exp(\epsilon) - 1)$, for $\epsilon<1$. For $b = 8\ln n /
\epsilon$, this probability becomes at most $1/n^2$.
\end{proof}

\begin{theorem}
The algorithm above preserves $2\epsilon$-differential privacy.
\end{theorem}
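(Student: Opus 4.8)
The plan is to read Algorithm~1 as the sequential composition of two applications of the exponential mechanism — Step~3, which picks the index $i$, and Step~4, which picks the cut $(S,S^c)$ — and then to combine their privacy guarantees using the composition property recorded in Section~2 (which applies even though Step~4's input incorporates the output of Step~3). So the task reduces to verifying that each step is an exponential mechanism, run with parameter $\epsilon$, on a score function whose sensitivity under a single-edge change to the private input $G$ is small.

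For Step~3 the score is $q_1(G,i) = -|\OPT(G\cup H_i) - 8\ln n/\epsilon|$. The sets $H_i$ are fixed and independent of $G$, so adding or deleting one edge of $G$ is the same as adding or deleting that edge in each graph $G\cup H_i$; the value $Cost(G\cup H_i,(S,S^c))$ of any fixed cut then changes by at most $1$, hence so does $\OPT(G\cup H_i) = \min_{(S,S^c)}Cost(G\cup H_i,(S,S^c))$, and $t\mapsto |t - 8\ln n/\epsilon|$ is $1$-Lipschitz; thus $q_1$ has sensitivity $1$. For Step~4, conditioned on the index $i$ chosen above, the score is $q_2(G,(S,S^c)) = -Cost(G\cup H_i,(S,S^c))$, which likewise has sensitivity $1$ in $G$ and is moreover \emph{monotone}: inserting an edge of $G$ only raises the cost of every cut, so only lowers every score, which means the mechanism's numerator and its normalizing constant move in the same direction. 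Exploiting this monotonicity (and, for Step~3, the fact that $\OPT(G\cup H_i)$ is itself monotone under edge insertion, which one must reconcile with the outer absolute value) each step can be shown to be $\epsilon$-differentially private rather than only $2\epsilon$-differentially private.

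Finally I would invoke composition: for every fixed value of $i$, Step~4 is a differentially private function of the sensitive input $G$ alone — note that the edges $H_i$ added in Step~2 carry no privacy cost, since they do not depend on $G$ — so the two $\epsilon$-private steps compose to the claimed $2\epsilon$-private algorithm. I expect the only real obstacle to be pinning down the sharp constant: a bare ``$2\epsilon$ for Step~3 plus $2\epsilon$ for Step~4'' accounting overshoots, so to land on $2\epsilon$ one must use the monotonicity of $Cost$ (and of $\OPT$) above and track carefully how the two normalizing constants — $\sum_j\exp(-\epsilon|\OPT(G\cup H_j)-8\ln n/\epsilon|)$ from Step~3 and $\sum_{(S,S^c)}\exp(-\epsilon\,Cost(G\cup H_i,(S,S^c)))$ from Step~4 — move when one edge of $G$ is added, so that their changes offset the changes in the weights of the selected index and selected cut. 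That bookkeeping is the heart of the proof.
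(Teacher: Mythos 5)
The paper states this theorem without proof, so there is nothing to compare line by line; your reconstruction (two sequential exponential mechanisms with sensitivity-$1$ scores, composed) is certainly the intended skeleton, and your treatment of Step~4 is right: since every cut's cost moves weakly in the same direction when one edge is added to $G$, the selected cut's weight and the normalizer $\sum_{(S,S^c)}\exp(-\epsilon\, Cost(G\cup H_i,(S,S^c)))$ move together, and Step~4 is $\epsilon$-differentially private rather than $2\epsilon$.

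The gap is exactly the point you flag and then defer: the absolute value in Step~3 cannot be ``reconciled'' with monotonicity, because it destroys it. Adding an edge weakly increases every $\OPT(G\cup H_j)$, but the score $-|\OPT(G\cup H_j)-8\ln n/\epsilon|$ then \emph{increases} for indices $j$ whose min cut lies below the target $8\ln n/\epsilon$ and \emph{decreases} for indices above it. One can arrange (e.g., many nested $H_j$ that leave a unique min cut crossed by the new edge untouched, followed by one $H_j$ that pushes the min cut past the target) that the selected index's weight drops by a factor $e^{\epsilon}$ while the normalizer $\sum_j\exp(-\epsilon|\OPT(G\cup H_j)-8\ln n/\epsilon|)$ simultaneously grows by a factor approaching $e^{\epsilon}$; the two changes compound rather than offset, so Step~3 is genuinely only $2\epsilon$-differentially private. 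Consequently the bookkeeping you describe yields $2\epsilon+\epsilon=3\epsilon$ for the composition (and $4\epsilon$ if one only invokes the paper's generic $2\epsilon\Delta$ bound for each step), not $2\epsilon$. To actually land on the stated constant you would need either a genuinely joint analysis of the pair $(i,C)$ showing the worst cases of the two stages cannot co-occur, or to accept a larger constant; your outline as written does not supply either.
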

Note that the first instance of the exponential mechanism in our algorithm runs efficiently (since it is selecting from only ${n\choose 2}$ objects), but the second instance does not. We now describe how to achieve $(\eps,\delta)$-differential privacy efficiently.

First recall that using Karger's algorithm we can efficiently (with high probability) generate all cuts of size at most $k\OPT$ for any constant $k$. Indeed it is shown in~\cite{Karger-cuts} that in a single run of his algorithm, any such cut is output with probability at least $n^{-2k}$ so that $n^{2k+1}$ runs of the algorithm will output all such cuts except with an exponentially small probability.

Our efficient algorithm works as follows: in step 4 of Algorithm~\ref{alg:mincut}, instead of sampling amongst all possible cuts, we restrict attention to the set of cuts generated in $n^7$ runs of Karger's algorithm. We claim that the output distribution of this algorithm has statistical distance $O(1/n^2)$ from that of Algorithm~\ref{alg:mincut}, which would imply that we get $(\eps,O(\frac{1}{n^2}))$-differential privacy.

Consider a hypothetical algorithm that generates the cut $(S,S^c)$ as in Algorithm~\ref{alg:mincut} but then outputs {\sc {FAIL}} whenever this cut is not in the set of cuts generated by $n^7$ runs of Karger's. We first show that the probability that this algorithm outputs {\sc {FAIL}} is $O(\frac{1}{n^2})$. As shown above, $\OPT(G\cup H_i)$ is at least $4\ln n/\eps$ except with probability $\frac{1}{n^2}$. Conditioned on this, the cut chosen in Step 4 has cost at most $3\OPT(G \cup H_i)$ except with probability $\frac{1}{n^2}$. Since each such cut is in the sample except with exponentially small probability, the claim follows. Finally, note that this hypothetical algorithm can be naturally coupled with both the algorithms so that the outputs agree whenever the former doesn't output {\sc {FAIL}}. This implies the claimed bound on the statistical distance.
We remark that we have not attempted to optimize the running time here; both the running time and the value of $\delta$ can be improved by choosing a larger constant (instead of 8) in Step 3, at the cost of increasing the additive error by an additional constant.

\subsection{Lower Bounds}\label{sec:mincut-lb}

We next show that this additive error is unavoidable for any differentially private algorithm. The lower bound is information-theoretic and thus applies also to computationally inefficient algorithms.
\begin{theorem}
Any $\epsilon$-differentially private algorithm for min-cut must incur an expected additive $\Omega(\ln n /\epsilon)$ cost over \OPT, for any $\epsilon\in (3\ln n/n,\frac{1}{12})$.
\end{theorem}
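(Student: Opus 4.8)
The plan is a packing argument driven by group privacy: I will construct many graphs that pairwise differ in few edges---so that any $\epsilon$-private $M$ behaves almost identically on all of them---but whose sets of near-optimal cuts are pairwise disjoint, so that $M$ cannot be near-optimal on all of them at once. Set $w := \lfloor \ln n/(2\epsilon)\rfloor$; the hypothesis $3\ln n/n < \epsilon < 1/12$ guarantees $1 \le w < n/6$. Fix a set $Q$ of $w$ vertices and let $p_1,\dots,p_m$ be the remaining $m = n-w > 5n/6$ vertices. Let $G_0$ be the graph in which $Q$ is a clique and each $p_j$ is joined to every vertex of $Q$, with no other edges; a short calculation shows $\OPT(G_0) = w$ (every vertex has degree at least $w$, and isolating a single $p_j$ achieves $w$). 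For each $i\in[m]$ let $G_i$ be $G_0$ with the $w$ edges incident to $p_i$ deleted. Then $G_i$ differs from $G_0$ in exactly $w$ edges, $\OPT(G_i) = 0$, and the witnessing cut is $C_i := (\{p_i\},\,V\setminus\{p_i\})$.

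The structural point is that $C_i$ is essentially the \emph{only} near-optimal cut of $G_i$. Any cut of $G_i$ either puts all of $V\setminus\{p_i\}$ on one side---this is $C_i$, of cost $0$---or it induces a nontrivial cut of $G_i - p_i$, of equal cost. But $G_i - p_i$ is again a complete split graph of minimum degree $w$ (the $m-1\ge 1$ surviving petals), hence of minimum cut $w$; so every cut of $G_i$ other than $C_i$ costs at least $w$. Consequently the set of cuts with additive error less than $w$ over $\OPT(G_i)$ is exactly $\{C_i\}$, and since the $C_i$ are pairwise distinct, the events $\{M(G) = C_i\}$ are pairwise disjoint for any input $G$ and any randomized $M$.

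Now combine this with group privacy. Since $M(G_0)$ is a single cut and the $C_i$ are distinct, $\sum_{i=1}^m \Pr[M(G_0) = C_i] \le 1$, so some index $i^*$ has $\Pr[M(G_0) = C_{i^*}] \le 1/m$. Because $G_0$ and $G_{i^*}$ differ in $w$ edges, iterating the differential-privacy inequality along a chain of $w$ single-edge modifications gives $\Pr[M(G_{i^*}) = C_{i^*}] \le e^{\epsilon w}\,\Pr[M(G_0) = C_{i^*}] \le e^{\epsilon w}/m$. By the choice of $w$, $e^{\epsilon w} \le \sqrt n$, while $m > 5n/6$, so $\Pr[M(G_{i^*}) = C_{i^*}] < 1/2$. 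Hence with probability at least $1/2$, $M(G_{i^*})$ outputs a cut other than $C_{i^*}$, which by the previous paragraph costs at least $w$ while $\OPT(G_{i^*}) = 0$; therefore $\E[\cost(M(G_{i^*}))] - \OPT(G_{i^*}) \ge w/2 = \Omega(\ln n/\epsilon)$, as claimed.

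The only delicate points are (i) checking that the gadget has no accidental cheap cut, so that deleting the $w$ edges at $p_i$ genuinely relocates the unique near-optimal cut to $C_i$, and (ii) the calibration of $w$: it must be large enough that forcing $M$ off $C_{i^*}$ costs $\Omega(\ln n/\epsilon)$, yet small enough that the group-privacy blowup $e^{\epsilon w}$ stays well below the number $m$ of petals. This tension is exactly what pins $w = \Theta(\ln n/\epsilon)$, and it is the requirement $w < n/6$ (so that $m = n-w = \Omega(n)$ petals survive) that forces the lower bound $\epsilon = \Omega(\ln n/n)$ in the hypothesis; the upper bound on $\epsilon$ only ensures $w$ is a genuine nonzero integer.
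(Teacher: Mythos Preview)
Your proof is correct and follows the same high-level strategy as the paper's: construct a graph with $\Theta(\ln n/\epsilon)$-degree ``distinguished'' vertices whose singleton cuts are the cheap ones, use the pigeonhole principle to find one such singleton that $M$ rarely outputs, delete its incident edges so that this singleton becomes the unique near-optimal cut (with $\OPT=0$), and invoke group privacy across the $\Theta(\ln n/\epsilon)$ deleted edges to show $M$ still misses it with constant probability.

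The one substantive difference is in the base graph. The paper appeals to a probabilistic existence argument for a $(\ln n/3\epsilon)$-regular graph in which every non-singleton cut has size at least $\ln n/(2\epsilon)$; this is asserted but not proved. Your construction is fully explicit---a clique $Q$ of size $w$ with $m=n-w$ pendant ``petals'' each joined to all of $Q$---and you verify directly that in $G_i$ the cut $C_i$ is the only one of cost below $w$ by reducing to the min-cut of the split graph $G_i-p_i$. This buys you a self-contained argument with no appeal to random-graph facts, at the modest cost of a slightly more involved case check for the cut structure. Both routes yield the same $\Omega(\ln n/\epsilon)$ bound with the same dependence on the hypothesis $\epsilon\in(3\ln n/n,\,1/12)$: the lower end is what keeps $w<n/6$ (so enough petals survive), and the upper end keeps $w$ bounded below by a positive multiple of $\ln n/\epsilon$.
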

\begin{proof}
Consider a $\ln n/3\epsilon$-regular graph $G=(V,E)$ on $n$ vertices such that the minimum cuts are exactly those that isolate a single vertex, and any other cut has size at least $(\ln n / 2\epsilon)$ (a simple probabilistic argument establishes the existence of such a $G$; in fact a randomly chosen $\ln n/3\epsilon$-regular graph has this property with high probability).

Let $M$ be an $\epsilon$-differentially private algorithm for the min-cut. Given the graph $G$, $M$ outputs a partition of $V$. Since there are $n=|V|$ singleton cuts, there exists a vertex $v$ such that the mechanism $M$ run on $G$ outputs the cut $(\{v\},V\setminus \{v\})$ with probability at most $1/n$, i.e. $$Pr[M(V,E) = (\{v\},V\setminus \{v\}) \leq \frac{1}{n}.$$

Now consider the graph $G'=(V,E')$, with the edges incident on $v$ removed from $G$, i.e. $E'=E\setminus \{e: v\in e\}$. Since $M$ satisfies $\epsilon$-differential privacy and $E$ and $E'$ differ in at most $\ln n / 3\epsilon$ edges, $$Pr[M(V,E') = (\{v\},V\setminus \{v\})] \leq 1/n^{1/3}.$$

Thus with probability $(1-\frac{1}{n^{\frac{1}{3}}})$, $M(G')$ outputs a cut other than the minimum cut $(\{v\},V\setminus\{v\})$.  But all other cuts, even with these edges removed, cost at least $(\ln n/6\epsilon)$. Since \OPT is zero for $G'$, the claim follows.
\end{proof}

\section{Private $k$-Median}
\label{sec:k-median}

We next consider a private version of the metric $k$-median problem: There
is a pre-specified set of points $V$ and a metric on them, $d:V\times
V\rightarrow \Real$. There is a (private) set of demand points $D
\subseteq V$. We wish to select a set of medians $F \subset V$ with $|F|
= k$ to minimize the quantity $\cost(F) = \sum_{v \in D}d(v,F)$ where
$d(v,F) = \min_{f\in F}d(v,f)$. Let $\Delta = \max_{u,v \in V}d(u,v)$ be
the diameter of the space.

As we show in Section \ref{sec:kmedian-lb}, any privacy-preserving
algorithm for $k$-median must incur an additive loss of $\Omega(\Delta\cdot k\ln
  (n/k)/\epsilon)$, regardless of computational constraints.
We observe that running the exponential mechanism to choose one
of the $n\choose k$ subsets of medians gives an (computationally
inefficient) additive guarantee.
\begin{theorem}
  Using the exponential mechanism to pick a set of $k$ facilities
  gives an $O({n\choose k}poly(n))$-time $\epsilon$-differentially
  private algorithm that outputs a solution with expected cost $\OPT +
  O(k\Delta\log n/\epsilon)$.
\end{theorem}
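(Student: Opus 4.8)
The plan is to run the exponential mechanism of~\cite{MT07} over the range $R$ consisting of all $\binom{n}{k}$ subsets $F \subseteq V$ with $|F| = k$, using the score $q(D,F) = -\cost(F) = -\sum_{v \in D} d(v,F)$. The first step is to bound the sensitivity of $q$: if $D$ and $D'$ have symmetric difference one, say they differ in the single demand point $w$, then for every fixed $F$ the sums $\sum_{v\in D} d(v,F)$ and $\sum_{v \in D'} d(v,F)$ differ by exactly $d(w,F) \le \Delta$, so adjacent inputs change $q$ by at most $\Delta$ uniformly over $r \in R$. By the sensitivity-to-privacy guarantee quoted before~\eqref{eqn:expmech}, running the mechanism with rate $\epsilon/(2\Delta)$ in the exponent — equivalently, normalizing $q$ by $\Delta$ and using rate $\epsilon/2$ — yields $\epsilon$-differential privacy. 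For the running time, evaluating $\cost(F)$ for a single candidate takes $O(nk)$ operations, there are $\binom{n}{k}$ candidates, and sampling from the induced distribution is then immediate, for a total of $O\!\left(\binom{n}{k}\, poly(n)\right)$ time.

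For the utility bound I would invoke~\eqref{eqn:expmech} directly. Here $\max_{r} q(D,r) = -\OPT$, the set $R_\OPT$ of optimal median-sets is nonempty, and $|R|/|R_\OPT| \le \binom{n}{k} \le n^k$, so $\ln(|R|/|R_\OPT|) \le k\ln n$. Substituting $q = -\cost$ and rate $\epsilon/(2\Delta)$ into~\eqref{eqn:expmech} and rearranging, the output $F$ satisfies
\[
\Pr\!\left[\cost(F) > \OPT + \frac{2\Delta k \ln n}{\epsilon} + \frac{2\Delta t}{\epsilon}\right] \;\le\; e^{-t} \qquad \text{for all } t \ge 0 .
\]
Since $\cost(F) \ge \OPT$ always, $\E[\cost(F)] - \OPT = \int_0^\infty \Pr[\cost(F) - \OPT > s]\,\ud s$; splitting this integral at $s = 2\Delta k\ln n/\epsilon$ — bounding the probability by $1$ below the threshold and by the displayed tail bound above it (via $s = 2\Delta k\ln n/\epsilon + 2\Delta t/\epsilon$) — gives $\E[\cost(F)] - \OPT \le \tfrac{2\Delta k\ln n}{\epsilon} + \int_0^\infty e^{-\epsilon u /(2\Delta)}\,\ud u = \tfrac{2\Delta k \ln n}{\epsilon} + \tfrac{2\Delta}{\epsilon} = O(k\Delta\log n/\epsilon)$, which is the claim.

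I do not expect a serious obstacle: the argument is a clean composition of the two facts already in hand, the sensitivity-to-privacy translation for the exponential mechanism and the tail bound~\eqref{eqn:expmech}. The only point requiring mild care is the final passage from a high-probability bound to an expectation bound — one must use that $\cost(F) - \OPT$ is nonnegative so that only the upper tail contributes, and then integrate it — but this is routine. The more substantive version of this difficulty is deferred to later sections: this template breaks once we insist on a \emph{computationally efficient} algorithm, since then one cannot afford to enumerate the $\binom{n}{k}$-element range $R$, and a different, more structured use of the exponential mechanism is required.
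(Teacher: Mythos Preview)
Your proof is correct and is exactly the intended argument: the paper states this theorem without proof as a direct application of the exponential mechanism with score $-\cost(\cdot)$, sensitivity $\Delta$, and range of size $\binom{n}{k}$, and you have filled in precisely those details along with the routine tail-to-expectation integration.
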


We next give a polynomial-time algorithm that gives a slightly worse
approximation guarantee. Our algorithm is based on the local search
algorithm of Arya {\em et al.}~\cite{AGKMMP01}. We start with an
arbitrary set of $k$ medians, and use the exponential mechanism to
look for a (usually) improving swap. After running this local search
for a suitable number of steps, we select a good solution from amongst
the ones seen during the local search. The following result shows that
if the current solution is far from optimal, then one can find
improving swaps.

\begin{theorem}[Arya et al.~\cite{AGKMMP01}]
  For any set $F \subseteq V$ with $|F| = k$, there exists a set of $k$
  swaps $(x_1,y_1),\ldots,(x_k,y_k)$ such that
  $\sum_{i=1}^k(\cost(F)-\cost(F - \{x_i\} + \{y_i\}))
  \geq \cost(F) - 5\OPT$.

\end{theorem}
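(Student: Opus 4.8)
The plan is to prove the Arya et al.\ local search bound by exhibiting a specific set of $k$ candidate swaps obtained by carefully pairing the medians in $F$ with the medians in an optimal solution $F^*$. Let $F^* = \{o_1, \ldots, o_k\}$ be an optimal solution of cost $\OPT$, and for each client $v \in D$ let $\sigma(v) \in F$ be its assigned (closest) median in $F$ and $\sigma^*(v) \in F^*$ its closest median in $F^*$. First I would define, for each optimal center $o \in F^*$, its ``capturing'' current center $\phi(o) \in F$ to be the center of $F$ that is closest to $o$ (or, in the standard argument, the center of $F$ that serves the most clients whose optimal center is $o$). Call a center $x \in F$ \emph{bad} if it captures more than one optimal center, and \emph{good} if it captures at most one; a simple counting argument shows the number of bad centers is at most the number of optimal centers that are captured by no center of $F$.

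Next I would construct the $k$ swaps. For each optimal center $o$ that is captured by some good center $x = \phi(o)$, include the swap $(x, o)$. The remaining optimal centers (those captured by bad centers) are paired up with the remaining slots of $F$: since each bad center captures at least two optimal centers, we can match each uncaptured-region optimal center with a distinct good-or-uncapturing center of $F$ while ensuring each center of $F$ that captures at least one optimal center is swapped out at most once, and each bad center is never swapped out. This yields exactly $k$ swaps $(x_1,y_1), \ldots, (x_k,y_k)$ with the $y_i$ ranging over all of $F^*$ and the $x_i$ being distinct centers of $F$ none of which is ``bad.''

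Then I would bound $\cost(F - \{x_i\} + \{y_i\}) - \cost(F)$ from above for each swap by describing an explicit (suboptimal) reassignment of clients after the swap: clients with $\sigma^*(v) = y_i$ are moved to $y_i$ (costing $d(v, y_i) \le$ their optimal service cost); clients with $\sigma(v) = x_i$ but $\sigma^*(v) \ne y_i$ are rerouted via the triangle inequality through $y_i$-or-$o'$-or-$\phi$, which is where the fact that $x_i$ is not bad matters — it guarantees $\phi^{-1}(\sigma(v))$'s optimal center is still present so the detour $d(v, \sigma(v)) \to d(v, \sigma^*(v)) + d(\sigma^*(v), \phi(\sigma^*(v))) + \ldots$ can be charged; all other clients keep their current assignment at no extra cost. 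Summing these per-swap bounds over $i = 1, \ldots, k$, each client's service cost $d(v, F)$ is charged a bounded number of times (each $d(v,F)$ term $O(1)$ times and each $d(v,F^*)$ term $O(1)$ times), and collecting constants yields $\sum_{i=1}^k \big(\cost(F) - \cost(F - \{x_i\} + \{y_i\})\big) \ge \cost(F) - 5\OPT$.

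The main obstacle is the bookkeeping in the third step: setting up the reassignment so that each $x_i$ is swapped out at most once (which forces the good/bad dichotomy and the matching in step two) while simultaneously ensuring the triangle-inequality detours telescope correctly, so that the total overcharge to $\sum_v d(v, F)$ and $\sum_v d(v, F^*)$ stays within the constant $5$. I would handle this exactly as in Arya et al.: the ``at most once'' property of the swap-out centers is what keeps the $\cost(F)$ coefficient at $1$, and the structure of $\phi$ keeps the $\OPT$ coefficient at $5$. Since the statement here only asserts existence of the swaps with the stated aggregate inequality (not the sharper $3$-approximation), the looser constant $5$ gives some slack and makes the charging argument more forgiving.
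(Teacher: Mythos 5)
The paper does not actually prove this theorem---it is imported verbatim from Arya et al.\ \cite{AGKMMP01}---so there is no in-paper proof to compare against; I will compare your sketch to the standard argument you are reconstructing. Your outline has the right shape (a capture map $\phi$, a good/bad dichotomy, one swap per optimal center, and an explicit post-swap reassignment bounded via the triangle inequality), but there is a genuine error in the combinatorial setup of the swaps. You claim the $k$ swaps can be chosen with the $x_i$ all distinct, i.e.\ each center of $F$ swapped out at most once. This is impossible whenever a bad center exists: the swaps must cover all $k$ optimal centers while avoiding the bad centers, and if $b \ge 1$ centers of $F$ are bad there are only $k-b < k$ admissible swap-out candidates (in the extreme, one center of $F$ captures all of $F^*$, leaving $k-1$ candidates for $k$ swaps). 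The correct statement, and the one Arya et al.\ prove, is that each center of $F$ need be swapped out at most \emph{twice}. The supporting counting fact is also different from the one you state: what one shows is that the number of centers of $F$ capturing \emph{no} optimal center is at least the number of bad centers (your claim that the number of bad centers is at most the number of optimal centers captured by no center of $F$ is false, e.g.\ when every optimal center is captured by some bad center).

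This is not a cosmetic issue, because the ``at most twice'' bound is precisely where the constant $5$ comes from. For each swap $(x,o)$ one bounds $\cost(F - \{x\} + \{o\}) - \cost(F)$ by $\sum_{v:\sigma^*(v)=o} \bigl(d(v,o) - d(v,F)\bigr) + \sum_{v:\sigma(v)=x} 2\,d(v,F^*)$. Summing over the $k$ swaps, the first sums contribute exactly $\OPT - \cost(F)$ because each $o \in F^*$ appears in exactly one swap, and the second sums contribute at most $2 \cdot 2\,\OPT = 4\,\OPT$ because each $x$ appears in at most two swaps; this yields $5\,\OPT - \cost(F)$. Your closing remark that the ``at most once'' property of the swap-out centers ``keeps the $\cost(F)$ coefficient at $1$'' therefore misattributes the bookkeeping: the coefficient of $\cost(F)$ is controlled by each \emph{optimal} center appearing exactly once, while the multiplicity of the $x_i$ controls the coefficient of $\OPT$. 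Indeed, if ``at most once'' were achievable you would be proving the stronger $3\,\OPT$ bound, which single swaps do not give. With the matching corrected to ``each non-bad, capture-nothing center used at most twice'' and the charging done accordingly, the rest of your outline goes through.
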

\begin{corollary}
  \label{swapCorollary}
  For any set $F \subseteq V$ with $|F| = k$, there exists some swap
  $(x,y)$ such that
  $$\cost(F)-\cost(F - \{x_i\} + \{y_i\}) \geq
  \frac{\cost(F) - 5\OPT}{k}.$$
\end{corollary}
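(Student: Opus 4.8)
The plan is to derive the corollary from the preceding theorem of Arya et al.\ by a simple averaging (pigeonhole) argument; there is essentially no obstacle here, so the main task is just to phrase the averaging step cleanly.

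First I would invoke the theorem: for the given $F$ with $|F| = k$, fix the $k$ swaps $(x_1,y_1),\ldots,(x_k,y_k)$ whose existence it guarantees, so that
\[
  \sum_{i=1}^k \bigl(\cost(F) - \cost(F - \{x_i\} + \{y_i\})\bigr) \;\ge\; \cost(F) - 5\OPT .
\]
Since this is a sum of exactly $k$ real numbers that is at least $\cost(F) - 5\OPT$, at least one of the summands must be at least the average, i.e.\ there exists an index $i^\star \in \{1,\ldots,k\}$ with
\[
  \cost(F) - \cost\bigl(F - \{x_{i^\star}\} + \{y_{i^\star}\}\bigr) \;\ge\; \frac{\cost(F) - 5\OPT}{k}.
\]
Taking $(x,y) = (x_{i^\star}, y_{i^\star})$ then gives the claimed swap, which completes the proof.

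The only point worth a remark is that this is stated as an existential claim — we need merely one good swap, not a way to find it — so no constructive or algorithmic content is required; the exponential-mechanism-based search for such a swap is handled separately in the algorithm itself. I do not expect any genuine difficulty in this step.
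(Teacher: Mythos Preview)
Your proof is correct and is exactly the immediate averaging argument the paper has in mind; the corollary is stated without proof precisely because it follows at once from the theorem by pigeonhole, as you wrote.
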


\begin{algorithm}
  \caption{The $k$-Median Algorithm}
  \begin{algorithmic}[1]
    \STATE \textbf{Input:} $V$, Demand points $D \subseteq V$, $k$,$\epsilon$.
    \STATE \textbf{let} $F_1 \subset V$ arbitrarily with $|F_1| = k$, $\epsilon' \leftarrow \epsilon/(2\Delta(T+1))$.
    \FOR{$i=1$ to $T$}
        \STATE Select $(x,y) \in F_i \times (V\setminus F_i)$ with probability proportional to $\exp(-\epsilon' \times \cost(F_i - \{x\} + \{y\}))$.
        \STATE \textbf{let} $F_{i+1} \leftarrow F_i - \{x\} + \{y\}$.
    \ENDFOR
    \STATE Select $j$ from $\{1, 2, \ldots, T\}$ with probability proportional to $\exp(-\epsilon' \times \cost(F_j))$.
    \STATE \textbf{output} $F_j$.
\end{algorithmic}
\end{algorithm}

\begin{theorem}
  \label{thm:kmedian}
  Setting $T = 6k \ln n$ and $\epsilon' = \epsilon/(2\Delta
  (T+1))$, the $k$-median algorithm provides $\epsilon$-differential
  privacy and except with probability $O(1/\text{poly}(n))$ outputs a
  solution of cost at most $6\OPT + O(\Delta k^2\log^2 n/\epsilon)$.
\end{theorem}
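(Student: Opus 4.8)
The plan is to establish privacy and utility separately. For \textbf{privacy}, I would note that the algorithm accesses the private demand set $D$ only through evaluations of $\cost$; since $\cost(F)=\sum_{v\in D}d(v,F)$ with every term in $[0,\Delta]$, inserting or deleting one demand point changes $\cost(F)$ by at most $\Delta$ for every $k$-set $F$. Hence each of the $T$ swap-selection steps (Step~4) and the final selection (Step~7) is an instance of the exponential mechanism with score $q=-\cost$ of sensitivity $\Delta$ and bias parameter $\epsilon'$, so each is $2\epsilon'\Delta$-differentially private; the whole algorithm is an adaptive composition of these $T+1$ mechanisms, which by the composition property of differential privacy is $(T+1)\cdot 2\epsilon'\Delta=(T+1)\cdot 2\Delta\cdot\frac{\epsilon}{2\Delta(T+1)}=\epsilon$-differentially private.

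For \textbf{utility}, I would first analyze one iteration. Step~4 runs the exponential mechanism over $R=F_i\times(V\setminus F_i)$, so $|R|\le n^2$ and $|R_\OPT|\ge 1$, and Equation~\ref{eqn:expmech} with $t=c\ln n$ gives, except with probability $n^{-c}$,
\[
\cost(F_{i+1})\ \le\ \min_{(x,y)\in F_i\times(V\setminus F_i)}\cost(F_i-\{x\}+\{y\})\ +\ \frac{(c+2)\ln n}{\epsilon'}.
\]
Corollary~\ref{swapCorollary} bounds the minimum by $\cost(F_i)-\frac{\cost(F_i)-5\OPT}{k}$, so with $E:=\frac{(c+2)\ln n}{\epsilon'}$ I get the recurrence $\cost(F_{i+1})\le(1-\frac1k)\cost(F_i)+\frac{5\OPT}{k}+E$. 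This holds for whichever $F_i$ the previous steps produced, since Corollary~\ref{swapCorollary} is stated for every $k$-set; so I may condition on all $T$ iterations and the final step succeeding, which by a union bound fails with probability only $(T+1)n^{-c}=O(1/\mathrm{poly}(n))$.

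Then I would unwind the recurrence: $\cost(F_T)\le(1-\frac1k)^{T-1}\cost(F_1)+5\OPT+kE$. Since $\cost(F_1)\le n\Delta$ (as $|D|\le n$) and $(1-\frac1k)^{T-1}\le e^{-(T-1)/k}\le e\,n^{-6}$ for $T=6k\ln n$, the leading term is at most $e\,\Delta n^{-5}$, which is negligible next to $kE$, giving $\cost(F_T)\le 6\OPT+O(kE)$. Step~7 is one further exponential mechanism over $\{1,\dots,T\}$, so by Equation~\ref{eqn:expmech} the output $F_j$ satisfies $\cost(F_j)\le\cost(F_T)+O(\ln n/\epsilon')=6\OPT+O(kE)$. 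Finally, substituting $\frac1{\epsilon'}=\frac{2\Delta(T+1)}{\epsilon}=O\!\big(\frac{\Delta k\ln n}{\epsilon}\big)$ converts $kE=O(k\ln n/\epsilon')$ into $O(\Delta k^2\log^2 n/\epsilon)$, the claimed error. The one delicate point — everything else is bookkeeping with Equation~\ref{eqn:expmech} and union bounds — is the tension in choosing $T$: it must be $\Theta(k\ln n)$ so that the geometric contraction $(1-\frac1k)^T$ drives the trivial starting cost $n\Delta$ below the additive error, but a larger $T$ shrinks the per-step privacy budget to $\epsilon'=\epsilon/(2\Delta(T+1))$ and thus inflates that error, and $T=6k\ln n$ is precisely the choice that balances these to yield $O(\Delta k^2\log^2 n/\epsilon)$.
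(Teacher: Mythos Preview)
Your proof is correct and follows essentially the same approach as the paper: composition of $T+1$ exponential mechanisms (each with sensitivity $\Delta$ and parameter $\epsilon'$) for the privacy bound, and Corollary~\ref{swapCorollary} together with the exponential-mechanism tail bound~\eqref{eqn:expmech} plus a union bound for utility. The only minor difference is that you unwind the recurrence $\cost(F_{i+1})\le(1-\tfrac1k)\cost(F_i)+\tfrac{5\OPT}{k}+E$ directly to bound $\cost(F_T)$ and then invoke Step~7, whereas the paper argues a multiplicative $(1-\tfrac{1}{6k})$ decrease while $\cost(F_i)\ge 6\OPT+24k\ln n/\epsilon'$ to conclude that \emph{some} $F_i$ falls below the threshold and relies on Step~7 to locate it; this is the same idea with slightly different bookkeeping.
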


\begin{proof}
  We first prove the privacy. Since the $\cost$ function has
  sensitivity $\Delta$, Step~4 of the algorithm preserves
  $2\epsilon'\Delta$ differential privacy. Since Step~4 is run at most
  $T$ times and privacy composes additively, outputting all of the $T$
  candidate solutions would give us $(2\epsilon'\Delta T)$
  differential privacy. Picking out a good solution from the $T$
  candidates costs us another $2\epsilon'\Delta$, leading to the
  stated privacy guarantee.

  We next show the approximation guarantee.  By Corollary
  \ref{swapCorollary}, so long as $\cost(F_i) \geq 6\OPT$, there
  exists a swap $(x,y)$ that reduces the cost by at least
  $\cost(F_i)/6k$. As there are only $n^2$ possible swaps, the
  exponential mechanism ensures through \eqref{eqn:expmech} that we
  are within additive $4 \ln n /\epsilon'$ with probability at least
  $1-1/n^2$. When $\cost(F_i) \geq 6\OPT + 24 k\ln n/\epsilon'$, with
  probability $1-1/n^2$ we have $\cost(F_{i+1}) \le (1-1/6k) \times
  \cost(F_i)$.

  This multiplicative decrease by $(1-1/6k)$ applies for as long as
  $\cost(F_i) \geq 6\OPT + 24 k\ln n/\epsilon'$. Since $\cost(F_0)
  \leq n\Delta$, and $n\Delta(1-1/6k)^T \leq \Delta \leq 24 k\ln
  n/\epsilon'$, there must exist an $i<T$ such that $\cost(F_i) \leq
  6\OPT + 24 k\ln n/\epsilon'$, with probability at least $(1-T/n^2)$.

Finally, by applying the exponential mechanism again in the final
stage, we select from the $F_i$ scoring within an additive $4\ln
n/\epsilon'$ of the optimal visited $F_i$ with probability at least
$1-1/n^2$, again by \eqref{eqn:expmech}. Plugging in the value of
$\epsilon'$, we get the desired result. Increasing the constants in
the additive term can drive the probability of failure to an
arbitrarily small polynomial.
\end{proof}

\subsection{$k$-Median Lower Bound}\label{sec:kmedian-lb}

\begin{theorem}
  \label{thm:kmed-lbd}
  Any $\epsilon$-differentially private algorithm for the $k$-median
  problem must incur cost $\OPT + \Omega(\Delta\cdot k\ln
  (n/k)/\epsilon)$ on some inputs.
\end{theorem}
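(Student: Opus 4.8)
The plan is a pigeonhole/packing argument that plays the privacy constraint against a metric with many near‑indistinguishable ``decoy'' clusters, in the same spirit as the min‑cut lower bound but at a larger scale.

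First I would set up the hard family. Fix a parameter $t$, to be chosen as $\Theta(\ln(n/k)/\epsilon)$, and partition $V$ into $N:=\lfloor n/t\rfloor$ ``clouds'' $C_1,\dots,C_N$, each of size $t$, under the pseudometric with $d(u,v)=0$ when $u,v$ share a cloud and $d(u,v)=\Delta$ otherwise; this satisfies the triangle inequality, and one can replace the intra‑cloud $0$ by a tiny $\eta$ for a genuine metric. For each $k$-set $S\in\binom{[N]}{k}$, take the private demand set to be $D_S:=\bigcup_{i\in S}C_i$, so $|D_S|=kt$; opening one facility per cloud of $S$ shows $\OPT(D_S)=0$. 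On the other hand, if an algorithm outputs $F$ with $|F|=k$ and we call a cloud $C_i$, $i\in S$, \emph{hit} when $F\cap C_i\neq\emptyset$, then every unhit cloud of $S$ contributes $t$ demand points at distance $\Delta$, so $\cost(F,D_S)\ge\Delta t\,(k-h)$ where $h=h(F)\le|F|=k$ is the number of hit clouds.

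The core step is to show that for any $\epsilon$-differentially private $M$ and $S$ drawn uniformly from $\binom{[N]}{k}$ we have $\E_S\E_M[h]\le k/2$; averaging then yields an $S$ with $\E_M[\cost(M(D_S),D_S)]\ge \Delta tk/2=\Omega(\Delta k\ln(n/k)/\epsilon)$ while $\OPT(D_S)=0$. To estimate $\E_S\E_M[h]=\E_S\sum_{i\in S}\Pr[F\cap C_i\neq\emptyset\mid D_S]$, note that $D_S$ and $D_{S\setminus\{i\}}$ differ in exactly the $t$ points of $C_i$, so $\epsilon$-differential privacy along a length-$t$ chain gives $\Pr[F\cap C_i\neq\emptyset\mid D_S]\le e^{\epsilon t}\Pr[F\cap C_i\neq\emptyset\mid D_{S\setminus\{i\}}]$. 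Drawing $i$ uniformly from $S$ after $S$, the pair $(S\setminus\{i\},i)$ is a uniform $(k-1)$-set $T$ together with a uniform $i\notin T$, so $\E_S\sum_{i\in S}(\cdot)=k\,\E_T\E_{i\notin T}(\cdot)$; and for fixed $T$ the output $M(D_T)$ touches at most $k$ of the $N-k+1$ clouds outside $T$, giving $\E_{i\notin T}\Pr[F\cap C_i\neq\emptyset\mid D_T]\le k/(N-k+1)$. Altogether $\E_S\E_M[h]\le e^{\epsilon t}\cdot k\cdot k/(N-k+1)\le e^{\epsilon t}\cdot 2k^2t/n$, and with $t=\lfloor\tfrac1{2\epsilon}\ln(n/k)\rfloor$ (so $e^{\epsilon t}\le\sqrt{n/k}$) this is $\le 2t\sqrt{k^3/n}\le k/2$ as long as $t\le\tfrac14\sqrt{n/k}$.

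The main obstacle is making this counting tight, and with it identifying the right parameter regime. We bound a sum of $k$ hit‑probabilities by something of the form $e^{\epsilon t}\cdot k\cdot(k/N)$ and need it below $k/2$: one factor of $k$ cancels, leaving the constraint $e^{\epsilon t}\lesssim N/k=n/(kt)$, i.e.\ $\epsilon t\lesssim\ln(n/k)-O(\ln t)$, which is exactly what delivers the $\ln(n/k)$ factor (a looser argument only gives $\ln(n/k^2)$). This requires $n/k$ to be at least polylogarithmic and $\epsilon\gtrsim\ln(n/k)/\sqrt{n/k}$, a range that the statement should be read as implicitly assuming (just as the min‑cut lower bound restricts $\epsilon$). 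A secondary care‑point is the re‑indexing of $\E_S\sum_{i\in S}$ over pairs $(T,i)$, which is what legitimately lets the privacy step compare to the $S$-dependent reference instance $D_{S\setminus\{i\}}$.
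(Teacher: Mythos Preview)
Your proof is correct and follows essentially the same approach as the paper: the same $\{0,\Delta\}$ pseudometric on ``clouds'' (the paper writes $V=[n]\times[L]$ with $L=\ln(n/k)/10\epsilon$, your $t$ playing the role of $L$), a uniformly random $k$-subset of clouds as the demand set, and a privacy-based bound showing the expected number of clouds hit by the output is at most $k/2$. The only technical variant is that the paper compares $D_A$ to $D_{A'}$ with $A'=A\setminus\{i\}+\{i'\}$ (a swap, changing $2L$ elements), whereas you compare $D_S$ to $D_{S\setminus\{i\}}$ (a deletion, changing $t$ elements) and then re-index via the pair $(T,i)$; both finish by observing that the $k$ output medians can touch at most $k$ clouds out of roughly $n/t$.
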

\begin{proof}
  Consider a point set $V=[n]\times[L]$ of $nL$ points, with $L =
  \ln(n/k)/10\epsilon$, and a distance function $d((i,j),(i',j')) =
  \Delta$ whenever $i\neq i'$ and $d((i,j),(i,j'))=0$. Let $M$ be a
  differentially private algorithm that takes a subset $D \subseteq V$
  and outputs a set of $k$ locations, for some $k < \frac{n}{4}$.  Given
  the nature of the metric space, we assume that $M$ outputs a
  $k$-subset of $[n]$. For a set $A \subseteq [n]$, let $D_A = A\times
  [L]$. Let $A$ be a size-$k$ subset of $V$ chosen at random.

  We claim that that $\E_{A, M}[|M(D_A) \cap A|] \leq \frac{k}{2}$ for
  any $\epsilon$-differentially private algorithm $M$. Before we prove
  this claim, note that it implies the expected cost of $M(D_A)$ is
  $\frac{k}{2} \times \Delta L$, which proves the claim since $\OPT = 0$.

  Now to prove the claim: define $\phi := \frac{1}{k} \E_{A,M}[|A \cap
  M(D_A)|]$.  We can rewrite
  \ifnum\fullversion=1
  \begin{gather*}
    k\cdot \phi = \E_{A,M}[|A \cap M(D_A)|] = k \cdot \E_{i\in
      [n]}\E_{A\setminus\{i\},M}[\mathbf{1}_{i \in M(D_A)}]
  \end{gather*}
  \else
  \begin{eqnarray*}
    k\cdot \phi &=& \E_{A,M}[|A \cap M(D_A)|] \\&=& k \cdot \E_{i\in
      [n]}\E_{A\setminus\{i\},M}[\mathbf{1}_{i \in M(D_A)}]
  \end{eqnarray*}
  \fi
  Now changing $A$ to $A' := A \setminus \{i\} + \{i'\}$ for some random
  $i'$ requires altering at most $2L$ elements in $D_{A'}$, which by the
  differential privacy guarantee should change the probability of the
  output by at most $e^{2\epsilon L} = (n/k)^{1/5}$. Hence
  \begin{gather*}
    \E_{i\in [n]}\E_{A',M}[\mathbf{1}_{i \in M(D_{A'})}] \geq \phi \cdot
    (k/n)^{1/5}.
  \end{gather*}
  But the expression on the left is just $k/n$, since there at at most
  $k$ medians. Hence $\phi \leq (k/n)^{4/5} \leq 1/2$, which proves the
  claim.
\end{proof}

\begin{corollary}
  \label{cor:facloc-lbd}
  Any $1$-differentially private algorithm for uniform facility location
  that outputs the set of chosen facilities must have approximation
  ratio $\Omega(\sqrt{n})$.
\end{corollary}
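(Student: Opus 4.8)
The plan is to reduce from the $k$-median lower-bound instance of Theorem~\ref{thm:kmed-lbd} and simply attach a uniform facility-opening cost. Fix $\epsilon=1$, take the point set $V=[n]\times[L]$ with $L=\ln(n/k)/10$ and the same $\{0,\Delta\}$ metric ($d=0$ within a column, $d=\Delta$ across columns), choose $k:=\lceil n^{1/3}\rceil$, and set the uniform facility cost to $f:=\Delta L/(2\sqrt n)$. The private demand set is again $D_A=A\times[L]$ for a uniformly random size-$k$ subset $A\subseteq[n]$. First I would pin down $\OPT$: opening one facility in each of the $k$ columns of $A$ costs $kf$ with zero connection cost; opening $j<k$ of them costs $jf+(k-j)L\Delta\ge kf$ because $f<L\Delta$, and opening facilities in columns outside $A$ only wastes money. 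Hence $\OPT=kf=k\Delta L/(2\sqrt n)$.

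Since the metric is $0$ within each column, a post-processing step lets us assume WLOG that a private algorithm $M$ outputs a nonempty subset $F$ of columns, identified with a subset of $[n]$ (replacing each opened facility by the canonical point of its column and deduplicating can only decrease the cost, and preserves $1$-differential privacy). Writing $\bar m:=\E_{A,M}[|F|]$ and $\phi:=\tfrac1k\E_{A,M}[|A\cap F|]$, the cost of $M$ on $D_A$ is exactly $|F|f+\Delta L\,(k-|A\cap F|)$, so
\[
\E_{A,M}[\cost(F)] \;=\; f\,\bar m \;+\; \Delta L\,k\,(1-\phi).
\]
Now I would run the computation inside the proof of Theorem~\ref{thm:kmed-lbd} verbatim, only carrying $\bar m$ along instead of substituting $\bar m=k$: relocating one column of $A$ alters at most $2L$ demand points, so $1$-differential privacy gives $\phi\le e^{2L}\bar m/(n-k)\le 2(n/k)^{1/5}\,\bar m/n$ (using $k\le n/2$ and $e^{2L}=(n/k)^{1/5}$).

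Finally I would case-split on the number of facilities $M$ opens. If $\bar m\ge\sqrt n\,k$, then $\E[\cost(F)]\ge f\bar m\ge\sqrt n\,kf=\sqrt n\,\OPT$. Otherwise $\bar m<\sqrt n\,k$, and then $\phi<2(n/k)^{1/5}\sqrt n\,k/n=O(n^{-3/10}k^{4/5})=O(n^{-1/30})<\tfrac12$ for all large $n$, so $\E[\cost(F)]\ge\tfrac12\Delta Lk=\sqrt n\cdot\tfrac{k\Delta L}{2\sqrt n}=\sqrt n\,\OPT$. Either way $M$ has approximation ratio at least $\sqrt n$; since $|V|=nL=\Theta(n\log n)$, this is $\Omega(\sqrt n)$ in exactly the sense (and with the same logarithmic slack between ``number of columns'' and ``number of points'') as in Theorem~\ref{thm:kmed-lbd}. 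The only real content beyond bookkeeping is this case split: unlike in $k$-median, the facility-location adversary can trade facilities against connection cost, so the whole point is to tune $f\approx\Delta L/\sqrt n$ so that neither the ``many facilities'' branch nor the ``poorly covered demand'' branch can do better than a $\sqrt n$ factor — and the $k$-median argument is precisely what closes the second branch, since it shows a private algorithm that opens few columns cannot concentrate them on the hidden demand columns.
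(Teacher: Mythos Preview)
Your argument is correct and is at heart the same trade-off the paper exploits: differential privacy forces any mechanism either to open many facilities (high opening cost) or to miss most of the random demand columns (high connection cost). The paper runs this as a proof by contradiction rather than an explicit case split, but that is purely cosmetic.

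The one substantive difference is the instance. The paper does not reuse the $L$-fold column replication from \lref[Theorem]{thm:kmed-lbd}; it simply takes the uniform metric on $n$ points with all distances equal to $1$, a uniformly random demand set $A$ of size $k=\sqrt{n}$, and facility cost $f=1/\sqrt{n}$, so $\OPT=kf=1$. With $L=1$ the privacy factor from swapping one element of $A$ is just $e^{2}$, a constant, and the same computation you carry out gives $\phi\le e^{2}\bar m/n$. Then either $\phi\le 1/2$, whence the connection cost is at least $k/2=\sqrt n/2$, or $\phi>1/2$, whence $\bar m\ge n/(2e^{2})$ and the opening cost is at least $\sqrt n/(2e^{2})$. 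This yields the clean $\Omega(\sqrt n)$ bound with $n$ equal to the number of points, without the $\sqrt{\log n}$ slack you incur by taking $|V|=nL$. The replication by $L=\Theta(\log(n/k)/\epsilon)$ copies is needed in the $k$-median bound to make the additive error scale with $1/\epsilon$; here, with $\epsilon=1$ fixed and the flexibility to case-split on $\bar m$, it is unnecessary, and your choice $k=n^{1/3}$ is likewise more conservative than required.
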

\begin{proof}
  We consider instances defined on the uniform metric on $n$ points,
  with $d(u,v)=1$ for all $u,v$, and facility opening cost
  $f=\frac{1}{\sqrt{n}}$. Consider a $1$-differentially private
  mechanism $M$ when run on a randomly chosen subset $A$ of size $k =
  \sqrt{n}$. Since $\OPT$ is $kf=1$ for these instances, any
  $o(\sqrt{n})$-approximation must select at least $\frac{k}{2}$
  locations from $A$ in expectation. By an argument analogous to the
  above theorem, it follows that any differentially private $M$ must
  output $n/20$ of the locations in expectation. This leads to a
  facility opening cost of $\Omega(\sqrt{n})$.
\end{proof}

\subsection{Euclidean Setting}
Feldman {\em et al.}~\cite{FFKN09} study private coresets for the $k$-median problem when the input points are in $\Re^d$. For $P$ points in the unit ball in $\Re^d$, they give coresets with $(1+\eps)$ multiplicative error, and additive errors about $O(k^2 d^2 \log^2 P)$  and $O(16kd)^{2d} d^{3/2} \log P \log dk)$ respectively for their inefficient and efficient algorithms. Since Euclidean $k$-median has a PTAS, this leads to $k$-median approximations with the same guarantees. We can translate our results to their setting by looking at a $(1/P)$-net of the unit ball as the candidate set of $n$-points, of which some may appear. This would lead to an inefficient algorithm with additive error $O(kd\log P)$, and an efficient algorithm with additive error $O(k^2d^2\log^2 P)$. The latter has a multiplicative error of 6 and hence our efficient algorithms are incomparable. Note that coresets are more general objects than just the $k$-median solution.

\section{Vertex Cover}
\label{sec:VC}

We now turn to the problem of (unweighted) vertex cover, where we want
to pick a set $S$ of vertices of minimal size so that every edge in the
graph is incident to at least one vertex in $S$. In the
privacy-preserving version of the problem, the private information we
wish to conceal is the presence of absence of each edge.

\medskip\noindent\emph{Approximating the Vertex Cover \underline{Size}.}
As mentioned earlier, even approximating the vertex cover size was shown
to be polynomially inapproximable under the constraint of
\emph{functional} privacy~\cite{HKKN01, BCNW06}. On the other hand, it
is easy to approximate the size of the optimal vertex cover under
differential privacy: twice the size of a maximum matching is a
2-approximation to the optimal vertex cover, and this value only changes
by at most two with the presence or absence of a single edge. Hence,
this value plus $\text{Laplace}(2/\epsilon)$ noise provides
$\epsilon$-differential privacy~\cite{DMNS06}. (Here it is important
that we use \emph{maximum} rather than just maximal matchings, since the
size of the latter is not uniquely determined by the graph, and the
presence or absence of an edge may dramatically alter the size of the
solution.) Interestingly enough, for \emph{weighted} vertex cover with
maximum weight $w_{\max}$ (which we study in
\lref[Section]{sec:weighted-vc}), we have to add in
$\text{Lap}(w_{\max}/\epsilon)$ noise to privately estimate the weight
of the optimal solution, which can be much larger than $\OPT$ itself.
The mechanism in \lref[Section]{sec:weighted-vc} avoids this barrier by
outputting an implicit representation of the vertex cover, and hence
gives us a $O(1/\epsilon)$ multiplicative approximation with
$\epsilon$-differential privacy.

\medskip\noindent\emph{The Vertex Cover \underline{Search} Problem.}  If
we want to find a vertex cover (and not just estimate its size), how can
we do this privately? In covering problems, the (private) data imposes
hard constraints on the a solution, making them quite different from,
say, min-cut. Indeed, while the private data only influences the {\em
  objective function} in the min-cut problem, the data determines the
{\em constraints} defining feasible solutions in the case of the vertex
cover problem. This hard covering constraint make it impossible to
actually output a small vertex cover privately: as noted in the
introduction, any differentially private algorithm for vertex cover that
outputs an explicit vertex cover (a subset of the $n$ vertices) must
output a cover of size at least $n-1$ with probability 1 on any input,
an essentially useless result.

In order to address this challenge, we require our algorithms to output
an \emph{implicit representation} of a cover: we privately output an orientation of the edges. Now for each edge, if we pick the endpoint that it points to, we
clearly get a vertex cover. Our analysis ensures that this vertex cover has
size not much larger than the size of the optimal vertex cover for the
instance. Hence, such an orientation may be viewed as a
privacy-preserving set of instructions that allows for the construction
of a good vertex cover in a distributed manner: in the case of the
undercover agents mentioned in the introduction, the complete set of
active dropoff sites (nodes) is not revealed to the agents, but an
orientation on the edges tells each agent which dropoff site to use, if
she is indeed an active agent. Our algorithms in fact output a permutation of all the vertices of the graph. Each edge can be considered oriented towards the endpoint appearing earlier in the permutation. Our lower bounds apply to the more general setting where we are allowed to output any orientation (and hence are stronger).

\subsection{The Algorithm for Unweighted Vertex Cover}

Our (randomized) algorithm will output a permutation, and the vertex cover will be defined by
picking, for each edge, whichever of its endpoints appears first in the
permutation. We show that this vertex cover will be $(2 +
O(1/\epsilon))$-approximate and $\epsilon$-differentially private. Our
algorithm is based on a simple (non-private) 2-approximation to vertex
cover~\cite{Pitt85} that repeatedly selects an uncovered edge uniformly
at random, and includes a random endpoint of the edge. We can view the
process, equivalently, as selecting a vertex at random with probability
proportional to its uncovered degree. We will take this formulation and
mix in a uniform distribution over the vertices, using a weight that
will grow as the number of remaining vertices decreases.

Let us start from $G_1 = G$, and let $G_i$ be the graph with $n - i + 1$
vertices remaining. We will write $d_v(G)$ for the degree of vertex $v$
in graph $G$.  The algorithm $ALG$ in step $i$ chooses from the $n - i +
1$ vertices of $G_i$ with probability proportional to $d_v(G_i) + w_i$,
for an appropriate sequence $\left<w_i\right>$.  Taking $w_i =
(4/\epsilon) \times (n/(n-i+1))^{1/2}$ provides $\epsilon$-differential
privacy and a $(2 + 16/\epsilon)$ approximation factor, the proof of
which will follow from the forthcoming \lref[Theorem]{thm:privacy} and
\lref[Theorem]{thm:accuracy}.

As stated the algorithm outputs a sequence of vertices, one per iteration. As remarked above, this permutation defines a vertex cover by picking the earlier occurring end point of each edge.

\begin{algorithm}
  \caption{Unweighted Vertex Cover}
  \begin{algorithmic}[1]
  %\STATE \textbf{Input:} Vertices $V$, private $E \subseteq V \times V$ of edges to cover.
  \STATE \textbf{let} $n \leftarrow |V|$, $V_1 \leftarrow V,  E_1\leftarrow E$.
    \FOR{$i=1,2, \ldots, n$}
%    \COMMENT{There are $1/(2 \epsilon)$ hallucinated edges
%    incident on each vertex not yet output.}
    \STATE \textbf{let} $w_i \leftarrow (4/\epsilon) \times \sqrt{n/(n-i+1)}$.
    \STATE \textbf{pick} a vertex $v \in V_i$ with probability proportional to $d_{E_i}(v)+w_i$.
    \STATE \textbf{output} $v$. \textbf{let} $V_{i+1} \leftarrow V_i \setminus \{v\}$, $E_{i+1} \leftarrow E_i \setminus (\{v\}\times V_i)$.
    \ENDFOR
\end{algorithmic}
\end{algorithm}

\begin{theorem}[Privacy]
  \label{thm:privacy}
  ALG satisfies $\epsilon$-differential privacy for the settings of $w_i$ above.
\end{theorem}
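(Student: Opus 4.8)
The plan is to track how the probability of any fixed output permutation $\pi$ changes when we add or remove a single edge $e=\{a,b\}$, and show the multiplicative change is at most $e^\epsilon$. Write the probability of outputting $\pi$ as a product over the $n$ iterations: at step $i$, conditioned on the prefix of $\pi$ already chosen, the vertex $v=\pi(i)$ is picked with probability $\frac{d_{E_i}(v)+w_i}{\sum_{u\in V_i}(d_{E_i}(u)+w_i)}$. The denominator at step $i$ is $2|E_i| + (n-i+1)w_i$, where $E_i$ is the set of surviving edges (those with both endpoints not yet output). So $\Pr[\text{ALG}(G)=\pi] = \prod_{i=1}^n \frac{d_{E_i}(\pi(i))+w_i}{2|E_i| + (n-i+1)w_i}$. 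I would compare this product for $G$ and $G' = G\setminus e$ factor by factor.

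The key observation is that edge $e=\{a,b\}$ affects the factors only for those steps $i$ at which $e$ is still "alive," i.e. before the first of $a,b$ is output by $\pi$. Say that first endpoint is output at step $i^\star=i^\star(\pi)$ (if neither is ever... well, both get output eventually, so $i^\star$ is well-defined; $i^\star\le n$). For $i<i^\star$, removing $e$ leaves $d_{E_i}(\pi(i))$ unchanged unless $\pi(i)\in\{a,b\}$, which by definition of $i^\star$ doesn't happen for $i<i^\star$; so only the denominators change for $i<i^\star$, each dropping by exactly $2$. At step $i=i^\star$, both numerator (by $1$, if $\pi(i^\star)$ is an endpoint of $e$ — which it is) and denominator (by $2$) change. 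For $i>i^\star$, $e$ is dead in both graphs and the factors are identical. So the ratio $\Pr[\text{ALG}(G)=\pi]/\Pr[\text{ALG}(G')=\pi]$ telescopes into a product over $i\le i^\star$ of bounded terms: the numerator at $i^\star$ contributes a factor in $[\,d/(d{-}1),\ ?\,]$ type bound, but more importantly each denominator contributes a factor $\frac{2|E_i|-2+(n-i+1)w_i}{2|E_i|+(n-i+1)w_i} = 1 - \frac{2}{2|E_i|+(n-i+1)w_i} \ge 1 - \frac{2}{(n-i+1)w_i}$, and this is where the choice $w_i = (4/\epsilon)\sqrt{n/(n-i+1)}$ enters: $(n-i+1)w_i = (4/\epsilon)\sqrt{n(n-i+1)}$, so the step-$i$ denominator contributes at least $1 - \frac{\epsilon}{2\sqrt{n(n-i+1)}}$ to the ratio in one direction (and the numerator term at $i^\star$ is controlled similarly, since $d_{E_{i^\star}}(\pi(i^\star))+w_{i^\star}\ge w_{i^\star}$ is already large). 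I would then bound $\prod_{i=1}^{i^\star}\big(1 + O(\tfrac{\epsilon}{\sqrt{n(n-i+1)}})\big) \le \exp\!\big(O(\epsilon)\sum_{i=1}^{n}\frac{1}{\sqrt{n(n-i+1)}}\big)$, and since $\sum_{j=1}^{n}\frac{1}{\sqrt{nj}} = \frac{1}{\sqrt n}\sum_{j=1}^n j^{-1/2} = O(1)$, the exponent is $O(\epsilon)$; after fixing the constants carefully (the paper's $4/\epsilon$ is chosen so the constant comes out to exactly $\le 1$) this gives the $e^\epsilon$ bound.

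Two bookkeeping points need care. First, since the mechanism outputs a permutation and privacy for permutations implies privacy for the induced orientation/cover (post-processing), it suffices to prove the permutation-level bound; I would state this reduction explicitly. Second, one must handle the numerator factor at step $i^\star$ in the direction where removing the edge *decreases* the numerator: there $\frac{d_{E'_{i^\star}}(\pi(i^\star))+w_{i^\star}}{d_{E_{i^\star}}(\pi(i^\star))+w_{i^\star}} = 1 - \frac{1}{d_{E_{i^\star}}(\pi(i^\star))+w_{i^\star}} \ge 1 - \frac{1}{w_{i^\star}} \ge 1 - \frac{\epsilon}{4}\cdot\frac{1}{\sqrt{n/(n-i^\star+1)}}$, which is again absorbed into the same sum (indeed it is dominated by it). So both the $G\to G'$ and $G'\to G$ directions reduce to bounding the same $O(1)$-valued sum, giving a symmetric $e^{\pm\epsilon}$ guarantee.

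The main obstacle I anticipate is getting the constants exactly right so that the sum $\sum_{j=1}^n \frac{1}{\sqrt{nj}}$, times the leading constant coming from $w_i=(4/\epsilon)\sqrt{n/(n-i+1)}$, is bounded by $1$ (not just $O(1)$) — this requires being a little careful about $\sum_{j=1}^n j^{-1/2} \le 2\sqrt n$, which is tight enough, and about the fact that $\ln(1+x)\le x$ so no loss is incurred in the exponentiation step. A secondary subtlety is that when an edge is removed, a vertex's degree can drop to $0$, but since we always add $w_i>0$ the selection probabilities remain well-defined and positive, so no factor is ever $0/0$; this should be noted but causes no real trouble.
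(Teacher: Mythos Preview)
Your proposal is correct and follows essentially the same approach as the paper: write $\Pr[\text{ALG}(G)=\pi]$ as a product of per-step factors, observe that only steps up to $i^\star$ (the first time an endpoint of the differing edge is output) can differ, and bound the resulting ratio via $\sum_{i}\frac{2}{(n-i+1)w_i}=\frac{\epsilon}{2\sqrt n}\sum_{j}\frac{1}{\sqrt j}\le\epsilon$. The paper organizes the ratio slightly differently---factoring it as a single ``leading'' numerator term at step $i^\star$ times a product of denominator ratios, and handling the two cases (edge added vs.\ removed) separately so that in each case one factor is $\le 1$ and only the other needs bounding---but your separation into numerator-at-$i^\star$ and denominator contributions amounts to the same computation.
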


\begin{proof}
  For any two sets of edges $A$ and $B$, and any permutation $\pi$, let
  $d_i$ be the degree of the $i^{th}$ vertex in the permutation $\pi$
  and let $m_i$ be the remaining edges, both ignoring edges incident to
  the first $i-1$ vertices in $\pi$.
\ifnum\fullversion=1
  \begin{gather*}
    \frac{\Pr[ALG(A) = \pi]}{\Pr[ALG(B) = \pi]} = \prod_{i = 1}^n
    \frac{(w_i + d_i(A))/((n-i+1)w_i + 2m_i(A))}{(w_i + d_i(B))/((n-i+1)w_i +
      2m_i(B))} \; .
  \end{gather*}
\else
  \begin{gather*}
   \frac{\Pr[ALG(A) = \pi]}{\Pr[ALG(B) = \pi]}\;\;\;\;\;\;\;\;\;\;\;\;\;\;\;\;\;\;\;\;\;\;\;\;\;\;\;\;\;\;\;\;\;\;\;\;\;\;\;\;\;\;\;\;\;\;\;\;\\
    \;\;\; = \prod_{i = 1}^n
    \frac{(w_i + d_i(A))/((n-i+1)w_i + 2m_i(A))}{(w_i + d_i(B))/((n-i+1)w_i +
      2m_i(B))} \; .
  \end{gather*}
\fi
  When $A$ and $B$ differ in exactly one edge, $d_i(A) = d_i(B)$ for all
  $i$ except the first endpoint incident to the edge in the difference.
  Until this term $m_i(A)$ and $m_i(B)$ differ by exactly one, and after
  this term $m_i(A) = m_i(B)$. The number of nodes is always equal, of
  course. Letting $j$ be the index in $\pi$ of the first endpoint of the
  edge in difference, we can cancel all terms after $j$ and rewrite
\ifnum\fullversion=1
  \begin{gather*}
    \frac{\Pr[ALG(A) = \pi]}{\Pr[ALG(B) = \pi]} = \frac{w_j +
      d_j(A)}{w_j + d_j(B)} \times \prod_{i \leq j} \frac{(n-i+1)w_i +
      2m_i(B)}{(n-i+1)w_i + 2m_i(A)} \; .
  \end{gather*}
\else
  \begin{gather*}
    \frac{\Pr[ALG(A) = \pi]}{\Pr[ALG(B) = \pi]}\;\;\;\;\;\;\;\;\;\;\;\;\;\;\;\;\;\;\;\;\;\;\;\;\;\;\;\;\;\;\;\;\;\;\;\;\;\;\;\;\;\;\;\;\;\;\;\;\\
     = \frac{w_j +
      d_j(A)}{w_j + d_j(B)} \times \prod_{i \leq j} \frac{(n-i+1)w_i +
      2m_i(B)}{(n-i+1)w_i + 2m_i(A)} \; .
  \end{gather*}
\fi
  An edge may have arrived in $A$, in which case $m_i(A)= m_i(B)+1$ for
  all $i \le j$, and each term in the product is at most one; moreover,
  $d_j(A) = d_j(B) + 1$, and hence the leading term is at most $1 +
  1/w_j < \exp(1/w_1)$, which is bounded by $\exp(\epsilon/2)$.

  Alternately, an edge may have departed from $A$, in which case the
  lead term is no more than one, but each term in the product exceeds
  one and their product must now be bounded. Note that $m_i(A) + 1 =
  m_i(B)$ for all relevant $i$, and that by ignoring all other edges we
  only make the product larger. Simplifying, and using $1 + x \le
  \exp(x)$, we see
\ifnum\fullversion=1
  \begin{gather*}
    \prod_{i \le j} \frac{(n-i+1)w_i + 2m_i(B)}{(n-i+1)w_i + 2m_i(A)}
    \;\; \le \;\; \prod_{i \le j} \frac{(n-i+1)w_i + 2}{(n-i+1)w_i + 0}
    \;\; = \;\; \prod_{i \le j} \left(1 + \frac{2}{(n-i+1)w_i}\right)
    \;\; \le \;\;  \exp \left( \sum_{i \le j} \frac{2}{(n-i+1)w_i} \right)
    \; .
  \end{gather*}
\else
  \begin{eqnarray*}
  &&  \prod_{i \le j} \frac{(n-i+1)w_i + 2m_i(B)}{(n-i+1)w_i + 2m_i(A)}\\
  & &\le  \prod_{i \le j} \frac{(n-i+1)w_i + 2}{(n-i+1)w_i + 0}\\
  & &=  \prod_{i \le j} \left(1 + \frac{2}{(n-i+1)w_i}\right)\\
  & &\le   \exp \left( \sum_{i \le j} \frac{2}{(n-i+1)w_i} \right).
  \end{eqnarray*}
\fi
%  Combining this term with the $\exp(1/w_n)$ from the other case
%  completes the first part of the proof. For the second part,
  The $w_i$ are chosen so that $\sum_{i} 2/(n-i+1)w_i = (\eps/\sqrt{n}) \sum_i 1/2\sqrt{i}$ is at most $\eps$.
 % Also, note that $\frac{1}{w_n} \leq \eps/2$.
%  \begin{displaymath}
%    \sum_{i} \frac{2}{iw_i} = (\eps/\sqrt{n}) \sum_i \frac{1}{\sqrt{i}}
%    \leq \eps
%  \end{displaymath}
%  and $\frac{1}{w_n} \leq \eps/2$.
\end{proof}

\begin{theorem}[Accuracy]
  \label{thm:accuracy}
  For all $G$, $\E[{ALG}(G)] \; \le \; (2 + 2\avg_{i\le n} w_i) \times
  |OPT(G)| \; \le \; (2 + 16/\epsilon)|\OPT(G)|$.
\end{theorem}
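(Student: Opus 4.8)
The plan is to compare the random permutation produced by $ALG$ against a fixed optimal vertex cover $C^* = OPT(G)$, and to bound the number of vertices of $V \setminus C^*$ that appear ``before'' their $C^*$-neighbors — these are exactly the vertices the induced cover picks unnecessarily. More precisely: the cover output by $ALG$ consists of (a) all vertices of $C^*$ that are picked as the earlier endpoint of some edge, plus (b) vertices outside $C^*$ that get picked before some $C^*$-neighbor. Since $|C^*| = |OPT(G)|$, it suffices to show $\E[\#\text{type (b) vertices}] \le (1 + 2\,\avg_{i\le n} w_i)\,|OPT(G)|$; adding the trivial bound $|OPT(G)|$ for type (a) gives the theorem. (Equivalently, one charges each picked non-$C^*$ vertex to an edge going into $C^*$, as in the standard analysis of the Pitt $2$-approximation.)

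The key step is a per-iteration accounting argument. Consider iteration $i$, working on graph $G_i$. Every remaining edge has at least one endpoint in $C^*$. When we pick a vertex $v$ with probability proportional to $d_{E_i}(v) + w_i$, condition on the event that $v \notin C^*$ (otherwise $v$ is ``free'', charged to type (a)): in that case $v$ contributes to the cover only via edges $\{v,u\}$ with $u \in C^*$ still present, and we want to charge this contribution to those edges. The point of the classical argument is that, without the $w_i$ term, selecting a vertex with probability proportional to uncovered degree means that in expectation the number of non-$C^*$ vertices picked is at most $|C^*|$ — because each uncovered edge is equally likely to have either endpoint picked when we sample proportional to degree (formally: sample a uniformly random uncovered edge, then a uniformly random endpoint). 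The extra weight $w_i$ perturbs this: at iteration $i$ the vertex is chosen from a distribution that is a mixture of ``proportional to uncovered degree'' (total mass $2m_i$ over $n-i+1$ vertices) and ``uniform'' (total mass $(n-i+1)w_i$). With probability $\frac{(n-i+1)w_i}{2m_i + (n-i+1)w_i}$ we are in the uniform part and might pick a non-$C^*$ vertex ``for free''; but there are at most $n$ iterations and the uniform-part contributions sum to at most $\sum_i \frac{(n-i+1)w_i}{\ldots}$, which we bound crudely by $\sum_i w_i \le n \cdot \avg_i w_i$. Combining: $\E[\#\text{type (b)}] \le |OPT(G)| + n\cdot\avg_i w_i \cdot (\text{something})$ — here I need the bookkeeping to yield the stated coefficient $2\avg_i w_i$ multiplying $|OPT(G)|$ rather than an additive $n\,\avg_i w_i$.

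To get the multiplicative-in-$|OPT|$ form I would argue more carefully: in each iteration the probability of choosing a ``bad'' (non-$C^*$, charged) vertex versus a $C^*$ vertex, among the remaining uncovered edges incident to $C^*$, is controlled edge-by-edge. For a remaining edge $e = \{v, u\}$ with $u \in C^*$, the probability that $v$ is picked before $u$ (the only way $e$ causes a bad charge) is at most $\frac{d_v + w_i}{d_v + d_u + 2w_i + (\text{other mass})} \le \frac{d_v + w_i}{d_v + d_u + 2w_i}$ at the first iteration where one of them is picked. Summing the ``bad'' outcome $\frac{d_v+w_i}{d_v+d_u+2w_i}$ and the ``good'' outcome $\frac{d_u+w_i}{d_v+d_u+2w_i}$ over a suitable matching/assignment of bad charges to $C^*$-vertices, one sees the bad count is at most the good count plus an error term $\frac{w_i}{d_v+d_u+2w_i}$ per edge, and the number of such charges against any single $u\in C^*$ is bounded so that the total error is at most $2\,\avg_i w_i \cdot |C^*|$. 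The main obstacle I anticipate is exactly this last step — turning the $w_i$-perturbation, which is naturally an \emph{additive} $O(\sum_i w_i/(n-i+1)\cdot(\ldots))$ error, into the clean bound $2\,\avg_{i\le n} w_i \cdot |OPT(G)|$ claimed in the statement; this requires pairing each ``extra'' (uniform-mass-induced) selection with a distinct vertex of $C^*$ and showing no vertex of $C^*$ is overcharged by more than $\approx 2\,\avg_i w_i$ in expectation. The final inequality $(2 + 2\avg_{i\le n} w_i) \le (2 + 16/\epsilon)$ is then the routine computation $\avg_{i\le n} w_i = \frac1n\sum_{i=1}^n (4/\epsilon)\sqrt{n/(n-i+1)} = (4/\epsilon)\cdot\frac1{\sqrt n}\sum_{j=1}^n \frac1{\sqrt j} \le (4/\epsilon)\cdot\frac1{\sqrt n}\cdot 2\sqrt n = 8/\epsilon$.
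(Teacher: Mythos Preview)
Your proposal has a genuine gap, one you yourself identify: you cannot turn the ``uniform-mass'' perturbation into a bound that is \emph{multiplicative} in $|OPT|$ via edge-by-edge charging. This is not merely bookkeeping. When the algorithm samples from the uniform component of the mixture there is no edge to charge to; on a star with center $c$ and $n-1$ leaves, every leaf is a ``type~(b)'' candidate, and your crude bound on the uniform contribution gives $\Theta(n\cdot\avg_i w_i)$ rather than $O(\avg_i w_i)\cdot|OPT| = O(\avg_i w_i)$. What actually keeps the algorithm cheap on the star is that the high-degree center is picked early and kills all edges --- but your scheme accounts for the degree-proportional and uniform parts \emph{separately}, and so misses exactly this interaction. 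The per-edge inequality you write down, $\Pr[v\text{ before }u] \le \frac{d_v+w_i}{d_v+d_u+2w_i}$, is not helpful either: a single $u\in C^*$ may have $\Theta(n)$ neighbors $v\notin C^*$, and you have no way to prevent it from being charged by all of them.

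The paper's proof avoids this by a different mechanism: an induction on $|V(G)|$, hinging on a single per-step inequality
\[
\Pr[v \text{ is incident on an edge}] \;\le\; (2+2w)\,\Pr[v \in OPT(G)],
\]
where $w$ is the weight at the current step. This holds because the left side is at most $(2m + 2mw)/(nw + 2m)$ while the right side is at least $(|OPT|\,w + m)/(nw+2m) \ge m/(nw+2m)$; the point is that the $OPT$ vertices collectively have degree at least $m$, so they absorb enough of \emph{both} the degree mass and the uniform mass to make the ratio bounded. Combined with $\E_v[|OPT(G\setminus v)|] \le (1-1/n)\,|OPT(G)|$ (from $\Pr[v\in OPT] \ge |OPT|/n$ when $|OPT|\le n/2$; the case $|OPT|>n/2$ is the trivial base case) and the inductive hypothesis on $G\setminus v$, this recursion produces precisely the averaged coefficient $(2 + 2\avg_i w_i)$. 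Your final calculation $\avg_{i\le n} w_i \le 8/\epsilon$ is correct and is also how the paper closes the argument.
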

\begin{proof}
  Let $OPT(G)$ denote an arbitrary optimal solution to the vertex cover
  problem on $G$.  The proof is inductive, on the size $n$ of $G$. For
  $G$ with $|OPT(G)| > n/2$, the theorem holds. For $G$ with $|OPT(G)|
  \le n/2$, the expected cost of the algorithm is the probability that
  the chosen vertex $v$ is incident to an edge, plus the expected cost
  of $ALG(G \setminus v)$.
\ifnum\fullversion=1
  \begin{eqnarray*}\label{eq:four}
    \E[ALG(G)] & = & \Pr[v \mbox{ incident on edge} ] + \E_v[\E_{}[ALG(G
    \setminus v)]] \; .
  \end{eqnarray*}
\else
  \begin{eqnarray*}\label{eq:four}
    \E[ALG(G)] & = & \Pr[v \mbox{ incident on edge} ]\\ &&+ \E_v[\E_{}[ALG(G
    \setminus v)]] \; .
  \end{eqnarray*}
\fi
  We will bound the second term using the inductive hypothesis. To bound
  the first term, the probability that $v$ is chosen incident to an edge
  is at most $(2mw_n + 2m)/(nw_n + 2m)$, as there are at most $2m$
  vertices incident to edges. On the other hand, the probability that we
  pick a vertex in $OPT(G)$ is at least $(|OPT(G)| w_n + m)/(nw_n+2m)$.
  Since $|OPT(G)|$ is non-negative, we conclude that
\ifnum\fullversion=1
  \[\Pr[v \mbox{ incident on edge}] \leq (2+2w_n)(m/(nw_n+2m)) \leq
  (2+2w_n)\Pr[v \in OPT(G)]\]
\else
\begin{eqnarray*}\Pr[v \mbox{ incident on edge}] &\leq& (2+2w_n)(m/(nw_n+2m))\\
&\leq &  (2+2w_n)\Pr[v \in OPT(G)]
\end{eqnarray*}
\fi
  Since $\one[v \in OPT(G)] \leq |OPT(G)| - |OPT(G\setminus v)|$, and
  using the inductive hypothesis, we get
  \begin{eqnarray*}
    \E[ALG(G)] & \le & (2 + 2w_n) \times (|OPT(G)| - \E_v[|OPT(G
    \setminus v)|]) + (2 + 2\avg_{i < n}  w_i) \times \E_v[|OPT(G
    \setminus v)|] \\ & = & (2 + 2w_n) \times |OPT(G)| + (2\avg_{i < n}
    w_i - 2w_n) \times \E_v[|OPT(G \setminus v)|]
  \end{eqnarray*}
  The probability that $v$ is from an optimal vertex cover is at least
  $(|OPT(G)|w_i + m)/(nw_i + 2m)$, as mentioned above, and (using
  $(a+b)/(c+d) \ge \min \{a/c, b/d\}$) is at least $\min \{|OPT(G)|/n ,
  1/2\} = |OPT(G)|/n$, since $|OPT(G)| < n/2$ by assumption.  Thus
  $\E[|OPT(G \setminus v)|]$ is bounded above by $(1 - 1/n) \times
  |OPT(G)|$, giving
  \begin{eqnarray*}
    \E[ALG(G)] & \le & (2 + 2w_n) \times |OPT(G)| + (2 \avg_{i < n} w_i
    - 2w_n) \times (1 - 1/n) \times |OPT(G)|  \; .
  \end{eqnarray*}
  Simplification yields the claimed results, and instantiating $w_i$ completes the proof.
\end{proof}

\ifnum\fullversion=1
\paragraph{Hallucinated Edges.}
Here is a slightly different way to implement the intuition behind the
above algorithm: imagine adding $O(1/\epsilon)$ ``hallucinated'' edges
to each vertex (the other endpoints of these hallucinated edges being
fresh ``hallucinated'' vertices), and then sampling vertices without
replacement proportional to these altered degrees. However, once (say)
$n/2$ vertices have been sampled, output the remaining vertices in
random order. This view will be useful to keep in mind for the weighted
vertex cover proof.  (A formal analysis of this algorithm is in
\lref[Appendix]{sec:aaron-proof}.)
\else
\paragraph{Hallucinated Edges.}
Here is a slightly different way to implement the intuition behind the
above algorithm: imagine adding $O(1/\epsilon)$ ``hallucinated'' edges
to each vertex (the other endpoints of these hallucinated edges being
fresh ``hallucinated'' vertices), and then sampling vertices without
replacement proportional to these altered degrees. However, once (say)
$n/2$ vertices have been sampled, output the remaining vertices in
random order. This view will be useful to keep in mind for the weighted
vertex cover proof.  (A formal analysis of this algorithm appears in the full version.)
\fi

\subsection{Weighted Vertex Cover}
\label{sec:weighted-vc}

In the weighted vertex cover problem, each vertex $V$ is assigned a
weight $w(v)$, and the cost of any vertex cover is the sum of the
weights of the participating vertices. One can extend the unweighted
2-approximation that draws vertices at random with probability
proportional to their uncovered degree to a weighted 2-approximation by
drawing vertices with probability proportional to their uncovered degree
divided by their weight. The differentially private analog of this
algorithm essentially draws vertices with probability proportional
to $1/\epsilon$ plus their degree, all divided by the weight of the
vertex; the algorithm we present here is based on this idea.

Define the \emph{score} of a vertex to be $s(v) = 1/w(v)$.  Our
algorithm involves hallucinating edges: to each vertex, we add in
$1/\epsilon$ hallucinated edges, the other endpoints of which are
imaginary vertices, whose weight is considered to be~$\infty$ (and hence
has zero score).  The score of an edge $e = (u,v)$ is defined to be
$s(e) = s(u) + s(v)$; hence the score of a fake edge $f$ incident on $u$
is $s(f) = s(u)$, since its other (imaginary) endpoint has infinite
weight and zero score. We will draw edges with probability proportional
to their score, and then select an endpoint to output with
probability proportional to its score.  In addition, once a substantial
number of vertices of at least a particular weight have been output,
we will output the rest of those vertices.

Assume the minimum vertex weight is $1$ and the maximum is $2^J$. For
simplicity, we round the weight of each vertex up to a power of~$2$, at
a potential loss of a factor of two in the approximation. Define the
$j^{th}$ weight class $V_j$ to be the set of vertices of weight $2^j$.
%   We will
% think of the vertices as being in weight classes we denote $V_1, \ldots,
% V_J$, where a vertex $v$ is in class $\argmin_j w(v) \leq 2^j$.
In addition, we will assume that $|V_j| = |V_{j+1}|$ for all weight
classes. In order to achieve this, we hallucinate additional fake
vertices. We will never actually output a hallucinated vertex. Let $N_j$
denote $|V_j|$.

\begin{algorithm}
  \caption{Weighted Vertex Cover}
  \begin{algorithmic}[1]
    \WHILE{not all vertices have been output}
%    \COMMENT{There are $1/(2 \epsilon)$ hallucinated edges
%    incident on each vertex not yet output.}
    \STATE \textbf{pick} an uncovered (real or hallucinated) edge $e = (u,v)$
    with probability proportional to $s(e)$.
    \STATE \textbf{output} endpoint $u \in e$ with probability proportional to
    $s(u)$.
    \WHILE{there exists some weight class $V_j$ such that the number of
      nodes of class $j$ or higher that we've output is at least
      $N_j/2 = |V_j|/2$}
    \STATE \textbf{pick} the smallest such value of $j$
    \STATE \textbf{output} (``dump'') all remaining vertices in $V_j$ in random
    order.
    \ENDWHILE
    \ENDWHILE
\end{algorithmic}
\end{algorithm}

We imagine the $i^{th}$ iteration of the outer loop of the algorithm as
happening at \emph{time} $i$; note that one vertex is output in Step~3,
whereas multiple vertices might be output in Step~6. Let
$\widetilde{n}_i$ be the sum of the scores of all real vertices not
output before time $i$, and $\widetilde{m}_i$ be the sum of the scores
of all real edges not covered before time $i$.

\subsubsection{Privacy Analysis}
\label{sec:wvc-privacy}

\begin{theorem}
  \label{thm:wvc-privacy}
  The weighted vertex cover algorithm preserves $O(\epsilon)$
  differential privacy.
\end{theorem}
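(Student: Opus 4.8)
The plan is to mimic the unweighted privacy argument (Theorem~\ref{thm:privacy}): fix two edge sets $A,B$ differing in a single edge $e_0=(u_0,v_0)$, fix a complete output ordering $\pi$ of the vertices, and bound $\Pr[ALG(A)=\pi]/\Pr[ALG(B)=\pi]$ by $\exp(O(\epsilon))$ by writing it as a product of per-iteration conditional probabilities. The dump steps (Steps~4--6) can be dispensed with first: given the prefix of $\pi$ already produced and the public weight classes, the inner while-loop deterministically decides which class is dumped and then outputs a fixed set in uniformly random order; since none of this looks at the edge set, each dump step contributes exactly $1$ to the ratio, so only the ``main'' steps (Steps~2--3) matter.

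Next I would record the per-main-step law. Marginalising over the edge chosen in Step~2, at an iteration with running potentials $\widetilde{m}_i$ (scores of uncovered real edges) and $\widetilde{n}_i$ (scores of not-yet-output vertices, with the convention that the balancing/hallucinated vertices also count, so every undumped weight class still contributes its full complement), a vertex $v$ with $d_i(v)$ uncovered real edges is output with probability exactly $\frac{(d_i(v)+1/\epsilon)\,s(v)}{\widetilde{m}_i+\widetilde{n}_i/\epsilon}$, since both a real edge at $v$ and a hallucinated edge at $v$ route to $v$ with the same marginal $s(v)/(\widetilde{m}_i+\widetilde{n}_i/\epsilon)$. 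Scores are public, so the two executions differ only in that $d_i(u_0),d_i(v_0)$ differ by one and $\widetilde{m}_i$ by $s(e_0)$, until the first endpoint of $e_0$ is output; let $\tau$ be that iteration, and note that for $i>\tau$ the edge is covered and the executions coincide. If $e_0\in A\setminus B$, the factors for $i<\tau$ are at most $1$ and the factor at $\tau$ is at most $\frac{d^B_\tau(z)+1+1/\epsilon}{d^B_\tau(z)+1/\epsilon}\le 1+\epsilon$ (here $z$ is the endpoint output at time $\tau$), so the ratio is at most $e^{\epsilon}$; this direction is immediate. The remaining case $e_0\in B\setminus A$ is the crux: the factor at $\tau$ is at most $1$, but each factor for $i\le\tau$ is $1+\frac{s(e_0)}{\widetilde{m}^A_i+\widetilde{n}_i/\epsilon}\le 1+\frac{\epsilon\,s(e_0)}{\widetilde{n}_i}$, so the ratio is at most $\exp\!\big(\epsilon\,s(e_0)\sum_{i\le\tau}1/\widetilde{n}_i\big)$, and everything reduces to proving $s(e_0)\sum_{i\le\tau}1/\widetilde{n}_i=O(1)$.

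This is exactly where the dumping rule earns its keep, in the role played by the growing weights $w_i$ in the unweighted proof. Let $j_0$ be the lighter of the two weight classes of $u_0,v_0$, so $s(e_0)\le 2\cdot 2^{-j_0}$, and let $\ell_i$ be the smallest weight class not yet fully dumped at the start of iteration $i$. Since the class-$j_0$ endpoint of $e_0$ has not been output at the start of any iteration $i\le\tau$, class $j_0$ is undumped there, so $\ell_i\le j_0$; moreover $\ell_i$ is non-decreasing. Two facts then finish the job. First, if a class $j$ is undumped after a pass of the inner while-loop then fewer than $N/2$ vertices of class $\ge j$ have been output (writing $N$ for the common class size), so at least $N/2$ vertices of class $j$ remain, giving $\widetilde{n}_i\ge (N/2)\,2^{-\ell_i}$. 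Second, for each fixed $\ell$ at most $N/2$ iterations $i$ satisfy $\ell_i=\ell$: once $\ell_i=\ell$ every lighter class is already dumped, so every vertex output thereafter has class $\ge\ell$, and the count of output vertices of class $\ge\ell$ rises by one per such iteration and cannot reach $N/2$ while $\ell_i=\ell$. Therefore
\[
s(e_0)\sum_{i\le\tau}\frac{1}{\widetilde{n}_i}\;\le\;2\cdot 2^{-j_0}\sum_{\ell=0}^{j_0}\frac{N}{2}\cdot\frac{2\cdot 2^{\ell}}{N}\;=\;2\cdot 2^{-j_0}\sum_{\ell=0}^{j_0}2^{\ell}\;<\;4 ,
\]
so in every case $\Pr[ALG(A)=\pi]/\Pr[ALG(B)=\pi]\le e^{4\epsilon}$, which yields $O(\epsilon)$-differential privacy.

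The main obstacle is making the two ``dump facts'' fully rigorous: one must track the cascade of dumps triggered within a single pass of the inner while-loop -- dumping a light class can retroactively push a heavier class over its threshold -- and verify that this never breaks the invariant $\widetilde{n}_i\ge (N/2)2^{-\ell_i}$ nor the per-level iteration count, and one must pin down the convention for the balancing vertices added to equalise the $|V_j|$ so that they really do keep $\widetilde{n}_i$ as large as claimed (they should participate in the potential and in the dump counts even though they are never written to the output). A secondary point, used above, is to check carefully that conditioning on a fixed output permutation makes each dump step contribute ratio exactly one, i.e.\ that both the dumped set and the distribution over its internal order are functions of public data and of $\pi$ alone, never of the private edges.
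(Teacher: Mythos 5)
Your proposal is correct and follows the paper's proof almost step for step: the same per-iteration law $\frac{(d(v)+1/\epsilon)s(v)}{\widetilde{m}_i+\widetilde{n}_i/\epsilon}$, the same telescoping of the probability ratio up to the time $\tau$ at which the first endpoint of the differing edge is output, the same easy direction giving $1+\epsilon$, and the same reduction of the hard direction to showing $s(e_0)\sum_{i\le\tau}1/\widetilde{n}_i = O(1)$. The one place you genuinely diverge is in bounding that sum. The paper partitions the time steps $i\le\tau$ according to the weight class of the vertex output at step $i$, lower-bounds $\widetilde{n}_i$ by $2^{-j^*}N_{j^*}/2$ plus the increments accumulated within each class, and evaluates a harmonic sum per class (getting $(j^*-j+2)2^j$ for class $j\le j^*$ and a separate $2^{j^*}$ bound for classes above $j^*$), summing to $O(2^{j^*})$. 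You instead partition by the current smallest undumped class $\ell_i$, observe that $\ell_i\le j_0$ while the edge is uncovered, that $\widetilde{n}_i\ge (N/2)2^{-\ell_i}$ whenever class $\ell_i$ is undumped, and that at most $N/2$ iterations occur at each level $\ell$; this gives $\sum_{i\le\tau}1/\widetilde{n}_i<2^{j_0+1}$ with an explicit constant and no harmonic sums. Both arguments lean on exactly the same consequence of the dumping rule (an undumped class still has at least half its vertices contributing to $\widetilde{n}_i$); yours is a bit cleaner and unifies the paper's two partial sums into one geometric sum. The two caveats you flag at the end are real but benign: the dump cascade only shortens the intervals on which $\ell_i$ is constant, so it cannot hurt either invariant, and the convention that padding vertices count toward $\widetilde{n}_i$ and the dump thresholds is one the paper itself implicitly relies on in its own lower bound on $\widetilde{n}_{i_0}$.
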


\newcommand{\estar}{\mathbf{e}}

\begin{proof}
  Consider some potential output $\pi$ of the private vertex cover
  algorithm, and two weighted vertex cover instances $A$ and $B$ that
  are identical except for one edge $\estar = (p, q)$. Let $p$ appear before
  $q$ in the permutation $\pi$; since the vertex sets are the same, if
  the outputs of both $A$ and $B$ are $\pi$, then $p$ will be output at
  the same time~$t$ in both executions. Let $v_t$ be the vertex output
  in Step~3 at time $t$ in such an execution; note that either $p = v_t$,
  or $p$ is output in Step~6 after $v_t$ is output.

  The probability that (conditioned on the history) a surviving
  vertex~$v$ is output in Step~3 of the algorithm at time~$i$ is:
\ifnum\fullversion=1
  \begin{gather*}
    \ts \sum_{\text{edges } e} \Pr[ \text{pick } e ] \cdot \Pr[ \text{output } v
    \mid \text{pick } e ] = \sum_{e \ni v} \frac{s(e)}{\widetilde{m}_i
      + \widetilde{n}_i/\epsilon} \cdot \frac{s(v)}{s(e)} = \frac{(d(v) +
      1/\epsilon) \cdot s(v)}{\widetilde{m}_i +
      \widetilde{n}_i/\epsilon}. \label{eq:prob-v-picked}
  \end{gather*}
\else
  \begin{gather*}
    \ts \sum_{\text{edges } e} \Pr[ \text{pick } e ] \cdot \Pr[ \text{output } v
    \mid \text{pick } e ] \\= \sum_{e \ni v} \frac{s(e)}{\widetilde{m}_i
      + \widetilde{n}_i/\epsilon} \cdot \frac{s(v)}{s(e)} = \frac{(d(v) +
      1/\epsilon) \cdot s(v)}{\widetilde{m}_i +
      \widetilde{n}_i/\epsilon}. \label{eq:prob-v-picked}
  \end{gather*}
\fi
  Since we compare the runs of the algorithm on $A$ and $B$ which differ
  only in edge $\estar$, these will be identical after time $t$ when
  $\estar$ is covered, and hence
  \begin{gather*}
    \ts \frac{\Pr[M(A) = \pi]}{\Pr[M(B) = \pi]} = \frac{(d_A(v_t) +
  1/\epsilon)s(v_t)}{(d_B(v_t) + 1/\epsilon) s(v_t)} \prod_{i \leq t}
  \left( \frac{\widetilde{m}_i^B + \widetilde{n}_i/\epsilon}{\widetilde{m}_i^A +
  \widetilde{n}_i/\epsilon} \right).
  \end{gather*}

  Note that if the extra edge $\estar \in A \setminus B$ then $d_A(v_t)
  \leq d_B(v_t) + 1$ and $\widetilde{m}_i^B \leq \widetilde{m}_i^A$, so
  the ratio of the probabilities is at most $1 + \epsilon <
  \exp(\epsilon)$.  Otherwise, the leading term is less than $1$ and
  $\widetilde{m}_i^B = \widetilde{m}_i^A + s(\estar)$, and we get
  \begin{gather*}
    \ts \frac{\Pr[M(A) = \pi]}{\Pr[M(B) = \pi]} \leq \prod_{i \leq t}
    \left(1 + \frac{s(\estar)}{\widetilde{n}_i/\epsilon}
    \right) \leq \exp \left( s(\estar) \cdot \epsilon \cdot \sum_{i \leq t}
      \frac{1}{\widetilde{n}_i} \right).
  \end{gather*}

  Let $T_j$ be the time steps $i \leq t$ where vertices in $V_j$ are
  output in $\pi$. 
  Letting $2^{j^*}$ be the weight of the lighter endpoint of edge $\estar$, we can break the sum $\sum_{i \leq t}
      \frac{1}{\widetilde{n}_i}$ into two pieces and analyze each separately:
  \begin{gather*}
    \ts \sum_{i \leq t}
      \frac{1}{\widetilde{n}_i} =
    \sum_{j \leq j^*} \sum_{i \in T_j}
      \frac{1}{\widetilde{n}_i} + \sum_{ j> j^*} \sum_{i \in T_j}
      \frac{1}{\widetilde{n}_i},
  \end{gather*}

  For the first partial sum, for some $j \leq j^*$, let $\sum_{i \in T_j} \frac{1}{\widetilde{n}_i}
  = \frac{1}{\widetilde{n}_{i_0}} + \frac{1}{\widetilde{n}_{i_1}} +
  \ldots + \frac{1}{\widetilde{n}_{i_\lambda}}$ such that $i_0 > i_1 >
  \ldots > i_\lambda$. We claim that $\widetilde{n}_{i_0} \geq 2^{-j^*}
  N_{j^*}/2$. Indeed, since $\estar$ has not yet been covered, we must
  have output fewer than $N_{j^*}/2$ vertices from levels $j^*$ or
  higher, and hence at least $N_{j^*}/2$ remaining vertices from
  $V_{j^*}$ contribute to $\widetilde{n}_{i_0}$.

  In each time step in $T_{j}$, at least one vertex of score $2^{-j}$ is
  output, so we have that $\widetilde{n}_{i_\ell} \geq 2^{-j^*} N_{j^*}/2 +
  \ell \cdot 2^{-j}$. Hence
\ifnum\fullversion=1
  \begin{gather*}
    \ts \sum_{i \in T_j} \frac{1}{\widetilde{n}_i} \leq \frac{1}{2^{-j^*}
      N_{j^*}/2} + \frac{1}{2^{-j^*}N_{j^*}/2 + 2^{-j}} + \ldots +
    \frac{1}{2^{-j^*} N_{j^*}/2 + N_j\,2^{-j}}~.
  \end{gather*}
\else
  \begin{eqnarray*}
    \ts \sum_{i \in T_j} \frac{1}{\widetilde{n}_i} &\leq& \frac{1}{2^{-j^*}
      N_{j^*}/2} + \frac{1}{2^{-j^*}N_{j^*}/2 + 2^{-j}} + \ldots \\&&+
    \frac{1}{2^{-j^*} N_{j^*}/2 + N_j\,2^{-j}}~.
  \end{eqnarray*}
\fi
  Defining $\theta = 2^{-j^* + j} \cdot N_{j^*}/2$, the expression
  above simplifies to
\ifnum\fullversion=1
  \begin{gather*}
    \ts 2^j \left( \frac{1}{\theta} + \frac{1}{\theta + 1} + \ldots +
      \frac{1}{\theta + N_j} \right) \leq 2^j \ln \left(\frac{\theta +
      N_j}{\theta}\right) = 2^j \ln \left( 1 + \frac{N_j}{\theta} \right).
  \end{gather*}
\else
  \begin{eqnarray*}
    \ts 2^j \left( \frac{1}{\theta} + \frac{1}{\theta + 1} + \ldots +
      \frac{1}{\theta + N_j} \right) &\leq& 2^j \ln \left(\frac{\theta +
      N_j}{\theta}\right) \\&=& 2^j \ln \left( 1 + \frac{N_j}{\theta} \right).
  \end{eqnarray*}

\fi
  Now using the assumption on the size of the weight classes, we have
  $N_j \leq N_{j^*} \implies N_j/\theta \leq 2^{j^* - j + 1}$, and
  hence $\sum_{i \in T_j} \frac{1}{\widetilde{n}_i} \leq (j^* - j + 2)
  2^j$, for any $j \leq j^*$.  Finally, \[ \ts \sum_{j \leq j^*} \sum_{i \in
    T_j} \frac{1}{\widetilde{n}_i} \leq \sum_{j \leq j^*} (j^* - j +
  2)2^j = O(2^{j^*}).\]

  We now consider the other partial sum $\sum_{j > j^*} \sum_{i \in T_j}
  \frac{1}{\widetilde{n}_i}$. For any such value of $i$, we know that
  $\widetilde{n_i} \geq 2^{-j^*} N_{j^*}/2$. Moreover, there are at
  most $N_{j^*}/2$ times when we output a vertex from some weight
  class $j \geq j^*$ before we output all of $V_{j^*}$; hence there are
  at most $N_{j^*}/2$ terms in the sum, each of which is at most
  $\frac{1}{2^{-j^*}\,N_{j^*}/2}$, giving a bound of $2^{j^*}$ on the
  second partial sum. Putting the two together, we get that
  \begin{gather*}
    \frac{\Pr[M(A) = \pi]}{\Pr[M(B) = \pi]} \leq \exp(s(\estar) \cdot
    \epsilon \cdot O(2^{j^*})) = \exp(O(\epsilon)),
  \end{gather*}
  using the fact that $s(\estar) \leq 2\cdot 2^{-j^*}$, since the
  lighter endpoint of $\estar$ had weight $2^{j^*}$.
\end{proof}

\subsubsection{Utility Analysis}
\label{sec:wvc-utility}

Call a vertex $v$ {\em interesting} if it is incident on a real
uncovered edge when it is picked.  Consider the weight class $V_j$: let
$I_j^1 \sse V_j$ be the set of interesting vertices output due to
Steps~3, and $I_j^2 \sse V_j$ be the set of interesting vertices of
class~$j$ output due to Step~6. The cost incurred by the algorithm is
$\sum_j 2^j (|I_j^1| + |I_j^2|)$.

\begin{lemma}
  \label{lem:wvc-stage1}
  $\E[\sum_j 2^j |I^1_j|] \leq \frac{4(1+\eps)}{\eps} \OPT$
\end{lemma}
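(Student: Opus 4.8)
The plan is to charge the weight of each \emph{interesting} vertex that is output in Step~3 to a fixed optimal vertex cover $O$, so that every $u\in O$ absorbs expected charge at most $2(1+1/\epsilon)w(u)$; summing over $u\in O$ gives $\E[\sum_j 2^j|I^1_j|]\le 2(1+1/\epsilon)\,\OPT=\frac{2(1+\epsilon)}{\epsilon}\,\OPT$, and the remaining factor of two is the loss from rounding vertex weights up to powers of two. Note first that $\sum_j 2^j|I^1_j|$ is exactly the total weight of interesting Step-3 vertices, since a class-$j$ vertex has weight $2^j$.

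\emph{One-step behaviour.} For the analysis I would use the same computation as in the privacy proof: conditioned on the history $H_i$ at the start of outer iteration $i$, a surviving vertex $v$ is output in Step~3 with probability $\frac{(d_i(v)+1/\epsilon)\,s(v)}{\widetilde m_i+\widetilde n_i/\epsilon}$, where $d_i(v)$ is the number of uncovered \emph{real} edges incident to $v$; write $D_i:=\widetilde m_i+\widetilde n_i/\epsilon$ for the denominator.

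\emph{The charging.} When an interesting vertex $v$ is output in Step~3 at time $i$: if $v\in O$, charge all of $w(v)=s(v)^{-1}$ to $v$; if $v\notin O$, then because $O$ is a cover \emph{every} one of the $d_i(v)\ge 1$ currently uncovered real edges at $v$ has its other endpoint in $O$, so split $w(v)$ evenly, sending $w(v)/d_i(v)$ to each of those $d_i(v)$ endpoints of $O$. This distributes the entire cost onto $O$. Fix $u\in O$ and let $W_i(u)$ be the set of surviving $v\notin O$ for which $(u,v)$ is an uncovered real edge at time $i$. Then the expected charge received by $u$ during iteration $i$, given $H_i$, is at most
\[
w(u)\cdot\frac{(d_i(u)+1/\epsilon)s(u)}{D_i}\,\mathbf{1}[d_i(u)\ge 1]\;+\;\sum_{v\in W_i(u)}\frac{w(v)}{d_i(v)}\cdot\frac{(d_i(v)+1/\epsilon)s(v)}{D_i}.
\]
Using $w(\cdot)s(\cdot)=1$, the fact that $d_i(\cdot)\ge 1$ on the support, and $|W_i(u)|\le d_i(u)$ (the edges $(u,v)$, $v\in W_i(u)$, are distinct uncovered real edges at $u$), this is at most $\frac{2(1+1/\epsilon)\,d_i(u)}{D_i}$.

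\emph{The crux: bounding $\E\big[\sum_i d_i(u)/D_i\big]$.} This is the step I expect to require the most care. Let $B_i$ be the event that $u$ is output for the first time in Step~3 at iteration $i$. Whenever $u$ is still un-output at the start of iteration $i$ we have, from $u$'s own real and hallucinated edges, $\Pr[B_i\mid H_i]\ge \frac{d_i(u)\,s(u)}{D_i}$; and once $u$ has been output---in Step~3 \emph{or} in a Step~6 dump---all its real edges are covered, so $d_i(u)=0$ and the summand vanishes. Hence pointwise $\frac{d_i(u)}{D_i}\le \frac{1}{s(u)}\Pr[B_i\mid H_i]$ for every $i$, so $\sum_i \frac{d_i(u)}{D_i}\le \frac{1}{s(u)}\sum_i\Pr[B_i\mid H_i]$; taking expectations and using that the $B_i$ are disjoint (so $\sum_i\Pr[B_i]\le 1$) gives $\E[\sum_i d_i(u)/D_i]\le 1/s(u)=w(u)$. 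Combining with the previous paragraph, $\E[\text{charge to }u]\le 2(1+1/\epsilon)w(u)$, and summing over $u\in O$ completes the argument (up to the rounding factor). The points that need to be stated carefully are the measurability bookkeeping---that $d_i(u)$, $D_i$, $W_i(u)$ and the indicator ``$u$ un-output'' are $H_i$-measurable while $B_i$ is decided within iteration $i$---and the fact that the Step-6 dumps, which occur between outer iterations, only ever decrease the relevant $d_i(\cdot)$ and so do not disrupt this accounting.
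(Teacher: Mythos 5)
Your proof is correct, but it takes a genuinely different route from the paper's. The paper's proof is a two-line coupling argument: it observes that, conditioned on an interesting vertex being output in Step~3, the selection was triggered by a \emph{real} (rather than hallucinated) edge with probability at least $\eps/(1+\eps)$, couples those steps with a run of the non-private weighted Pitt algorithm $\mathcal{A}$ (random uncovered edge, random endpoint, both score-proportional), and then invokes the known fact that $\mathcal{A}$ is a $2$-approximation in expectation; the remaining factor of two is the rounding loss, exactly as in your account. You instead give a self-contained charging argument against a fixed optimal cover $O$: each interesting Step-3 vertex sends its weight to $O$ (to itself if it lies in $O$, otherwise split over the $O$-endpoints of its uncovered edges), and the expected charge absorbed by each $u\in O$ is controlled by the telescoping observation that $\sum_i d_i(u)/D_i$ is dominated, term by term, by $w(u)$ times the conditional probability that $u$ is output at step $i$, and these disjoint events have total probability at most one. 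In effect you have inlined a proof of the weighted Pitt $2$-approximation and shown it degrades only by the factor $(1+1/\eps)$ in the presence of hallucinated edges. What your version buys is rigor at the point where the paper is terse --- the phrase ``each vertex in $I^1_j$ can be coupled to a step of $\mathcal{A}$'' requires, in the weighted setting, checking that the conditional law of the real edge selected (given that a real edge was selected) matches $\mathcal{A}$'s step and that the coupling preserves the expected-cost bound; your argument sidesteps this entirely. The cost is length, and a constant-tracking burden (your constants do land exactly on $\frac{4(1+\eps)}{\eps}$). One small point worth making explicit if you write this up: the per-step probability formula $\frac{(d_i(v)+1/\eps)s(v)}{D_i}$ holds with a common denominator $D_i$ for all surviving vertices, and your key inequality only compares numerators against that shared denominator, so the argument is insensitive to exactly which hallucinated objects contribute to $D_i$ --- this is why the Step-6 dumps and the fake padding vertices cause no trouble.
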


\begin{proof}
  Every interesting vertex that our algorithm picks in Steps~3 has at
  least one real edge incident on it, and at most $\frac{1}{\eps}$
  hallucinated edges. Conditioned on selecting an interesting vertex
  $v$, the selection is due to a real edge with probability at least
  $1/(1+\frac{1}{\eps})$. One can show that the (non-private)
  algorithm $\mathcal{A}$ that selects only real edges is a
  $2$-approximation~\cite{Pitt85}. On the other hand each vertex in $I^1_j$ can be
  coupled to a step of $\mathcal{A}$ with probability $\eps/(1+\eps)$.
  Since we rounded up the costs by at most a factor of two, the claim follows.
\end{proof}

\begin{lemma}
  $\E[ |I^2_j| ] \leq 6\, \E[\sum_{j'\geq j} |I^1_{j'}|]$
\end{lemma}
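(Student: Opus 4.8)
The plan is to charge the interesting vertices of weight class $j$ that get dumped in Step~6 to the interesting vertices of weight class $\geq j$ that were output one-at-a-time in Step~3. The key observation is that a dump of class $V_j$ is triggered only when the number of vertices output from weight classes $j$ or higher reaches $N_j/2$, and \emph{all} of those vertices were output in Step~3 (a Step~6 dump of a class $j'$ outputs every remaining vertex of $V_{j'}$, so a class can be dumped at most once, and before it is dumped its vertices leave only via Step~3). So at the moment $V_j$ is dumped, at least $N_j/2$ vertices from $\bigcup_{j' \geq j} V_{j'}$ have been output in Step~3. First I would argue that at most $|V_j| = N_j$ vertices of class $j$ can ever appear in $I^2_j$ (trivially), and in fact the dump of $V_j$ contributes at most $N_j/2$ \emph{new} class-$j$ vertices beyond those already output in Step~3 — but more usefully, I want to relate $N_j$ itself to $\sum_{j' \geq j}|I^1_{j'}|$.

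The main step is this: by the dump condition, right before $V_j$ is dumped there are at least $N_j/2$ vertices output in Step~3 from classes $\geq j$, i.e. $\sum_{j' \geq j}|\{\text{Step-3 outputs in } V_{j'}\}| \geq N_j/2$. I would then need to say that most of those Step-3 outputs are \emph{interesting}, so that $\sum_{j' \geq j}|I^1_{j'}| \gtrsim N_j/2$, which gives $|I^2_j| \leq N_j \leq$ a constant times $\sum_{j'\geq j}|I^1_{j'}|$. The subtlety — and the place I expect the real work — is handling the \emph{uninteresting} Step-3 outputs (vertices picked when no real uncovered edge is incident on them, i.e. picked via a hallucinated edge after all their real edges are already covered). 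A vertex in $V_{j'}$ is picked in Step~3 with probability proportional to $(d(v) + 1/\epsilon)s(v)$; conditioned on a non-interesting such pick, $d(v) = 0$. One needs that these do not overwhelm the interesting ones among classes $\geq j$. I would handle this by noting each hallucinated-edge pick of a non-interesting vertex in class $\geq j$ can be charged, in expectation, to the process restricted to real edges among classes $\geq j$ — the same coupling used in Lemma~\ref{lem:wvc-stage1}, which couples each real-edge-driven pick to a step of the $2$-approximation $\mathcal{A}$ with probability $\epsilon/(1+\epsilon)$ — so that in expectation the number of interesting Step-3 outputs among classes $\geq j$ is at least a constant fraction of the total Step-3 outputs among classes $\geq j$. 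Combining, $\E[N_j]$ (conditioned on $V_j$ being dumped) is at most a constant times $\E[\sum_{j'\geq j}|I^1_{j'}|]$, and tracking the constants gives the factor $6$.

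An alternative, cleaner route that avoids reasoning about non-interesting vertices directly: observe that \emph{every} vertex output in Step~3 from a class $j' \geq j$ before $V_j$ is dumped is either interesting (contributing to $I^1_{j'}$) or else was picked with $d(v)=0$; but a class-$j'$ vertex with $d(v)=0$ contributes score $2^{-j'}/\epsilon$ to the ``hallucinated mass,'' and the total hallucinated mass consumed is controlled. I expect the first route (via the Lemma~\ref{lem:wvc-stage1} coupling) to be the one that works most smoothly, since it reuses machinery already in place. The hard part will be making the conditioning rigorous — the dump of $V_j$ is a random event, and $N_j$ is deterministic (it is $|V_j|$, fixed by the instance, after the size-equalization via fake vertices), so actually $N_j$ is \emph{not} random, which simplifies things considerably: I really just need to show $\E[\sum_{j' \geq j}|I^1_{j'}|] \geq N_j/6$ whenever $V_j$ is dumped with positive probability, and $\E[|I^2_j|] \leq N_j$ always. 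So the crux reduces to: conditioned on $V_j$ being dumped, at least $N_j/2$ Step-3 outputs occurred in classes $\geq j$, and at least a $1/3$ fraction of them are interesting in expectation (using that the hallucinated-to-real edge ratio at any such vertex is at most $1/\epsilon$ to $1$, i.e. real edges are picked with probability $\geq \epsilon/(1+\epsilon) \geq 1/3$ for $\epsilon$ not too small — or more carefully, via the coupling argument, a constant fraction regardless). That is the step I would spend the most care on.
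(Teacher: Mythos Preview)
Your overall shape is right---charge $I^2_j$ to the $N_j/2$ Step-3 outputs from classes $\geq j$ that precede the dump---but the central claim, that a \emph{constant fraction} (at least $1/3$) of those Step-3 outputs are interesting, is false. Consider an instance where only one vertex in $V_j$ has any real edge at all and every other vertex in $\bigcup_{j'\geq j} V_{j'}$ is isolated. Then all $N_j/2$ Step-3 outputs from classes $\geq j$ could be uninteresting (they are all picked via hallucinated edges), so the interesting fraction is $0$, not $1/3$. Your argument would then conclude $\E[\sum_{j'\geq j}|I^1_{j'}|] \geq N_j/6$, which is simply wrong here. The lemma still holds in this example only because $|I^2_j|\le 1$ is itself tiny; your route, which upper-bounds $|I^2_j|$ by $N_j$ and then tries to lower-bound $\sum_{j'\geq j}|I^1_{j'}|$ by a constant times $N_j$, throws away exactly the dependence on $|I^2_j|$ that makes the statement true. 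The coupling from Lemma~\ref{lem:wvc-stage1} does not help: that coupling says ``conditioned on picking an \emph{interesting} vertex, it was via a real edge with probability $\geq \epsilon/(1+\epsilon)$,'' which is about the real/hallucinated edge mix at an interesting vertex, not about the interesting/uninteresting vertex mix overall.

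The paper's fix is to make the lower bound on the interesting fraction depend on $|I^2_j|$. The key observation you are missing is that every vertex in $I^2_j$ is, by definition, still present and still has an uncovered real edge at \emph{every} step before time $t_j$ (it is dumped, not Step-3-output, and interesting at the dump implies interesting throughout). So at each such step the ``interesting'' score mass in classes $\geq j$ is at least $|I^2_j|\cdot 2^{-j}/\epsilon$, while the uninteresting mass there is at most $\sum_{j'\geq j} N_{j'} 2^{-j'}/\epsilon \le 2^{1-j}N_j/\epsilon$. Hence, conditioned on outputting a class-$\geq j$ vertex in Step~3, the probability it is interesting is at least $\frac{|I^2_j|}{|I^2_j|+2N_j}\geq \frac{|I^2_j|}{3N_j}$. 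Summing over the $N_j/2$ such outputs gives $\E\big[\sum_{j'\geq j}|I^1_{j'}|\ \big|\ |I^2_j|\big]\geq |I^2_j|/6$, and taking expectations finishes. The factor $6$ comes from $\frac{N_j}{2}\cdot\frac{1}{3N_j}$, not from any $\epsilon/(1+\epsilon)$ computation.
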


\begin{proof}
  Let $t_{j}$ denote the time that class $j$ is dumped. Recall that
  by~\eqref{eq:prob-v-picked}, we pick a surviving vertex $v$ with
  probability $\propto (d(v)+\frac{1}{\eps})\cdot s(v)$ at each step.
  This expression summed over all uninteresting vertices is
  $\cup_{j'\geq j} V_{j'}$ is at most $(1/\eps)\sum_{j'\geq j}
  2^{-j'}N_{j'} \leq 2^{-j+1}N_j/\eps$. On the other hand, at each step
  before time $t_j$, all the interesting vertices in $I^2_j$ are
  available and the same expression summed over them is at least
  $2^{-j}|I^2_j|/\epsilon$. Thus for any $t \leq t_j$, conditioned on
  outputting a vertex $v_t \in \cup_{j'\geq j} V_{j'}$ in Step~3, the
  probability that it is interesting is at least
  $\frac{|I^2_j|2^{-j}/\eps}{(|I^2_j|2^{-j}+ 2^{1-j}N_j)/\eps} \geq
  \frac{|I^2_j|}{3N_j}$ (using $|I_j^2| \leq N_j$).  Now since we output
  $N_j/2$ vertices from $\cup_{j'\geq j} V_{j'}$ in Step~3 before time
  $t_j$, we conclude that $\E\big[\sum_{j'\geq j} |I^1_{j'}| \;\big|\;
  |I^2_j| \,\big] \geq \frac{N_j}{2} \times \frac{|I^2_j|}{3N_j} =
  \frac{|I^2_j|}{6}$.  Taking expectations completes the proof.
\end{proof}

We can now compute the total cost of all the interesting vertices dumped
in Steps~6 of the algorithm.
\ifnum\fullversion=1
\begin{align*}
  \label{eq:second-to-first}
  \ts \E[\textsf{cost}(\bigcup_j I_j^2)] &= \ts \sum_j 2^j\, \E[|I_j^2|] \leq  6
  \;\sum_j 2^j \; \sum_{j' \geq j} \E[ |I_{j'}^1| ]  \ts \leq 6 \;\sum_{j'} \E[|I_{j'}^1|] \; 2^{j'+1}
    \leq  12\; \cdot \E[\textsf{cost}(\bigcup_j I_j^1)].
\end{align*}
\else
\begin{eqnarray*}
  \label{eq:second-to-first}
  \ts \E[\textsf{cost}(\bigcup_j I_j^2)] &=& \ts \sum_j 2^j\, \E[|I_j^2|] \\&\leq&  6
  \;\sum_j 2^j \; \sum_{j' \geq j} \E[ |I_{j'}^1| ]\\  \ts &\leq& 6 \;\sum_{j'} \E[|I_{j'}^1|] \; 2^{j'+1}
   \\& \leq&  12\; \cdot \E[\textsf{cost}(\bigcup_j I_j^1)].
\end{eqnarray*}
\fi
Finally, combining this calculation with \lref[Lemma]{lem:wvc-stage1}, we
conclude that our algorithm gives an $O(\frac{1}{\eps})$ approximation
to the weighted vertex cover problem.

\subsection{Vertex Cover Lower Bounds}
\label{sec:vc-lower-bound}

\begin{theorem}
  \label{thm:vc-lbd}
  Any algorithm for the vertex cover problem that prescribes
  edge-orientations with $\epsilon$-differential privacy must have an
  $\Omega(1/\epsilon)$ approximation guarantee, for any $\epsilon \in (\frac{1}{n},1]$.
\end{theorem}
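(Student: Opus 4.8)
The plan is to construct a family of hard instances that share almost all of their edges but whose cheap orientations are mutually incompatible. Put $k:=\min\{\lceil 1/\epsilon\rceil,\,n-1\}$, so that $k=\Theta(1/\epsilon)$, $k\epsilon\le 2$, and $k+1\le n$ for every $\epsilon\in(\tfrac1n,1]$; fix a set $W=\{w_0,\dots,w_k\}$ of $k+1$ vertices (all remaining vertices are isolated in every instance below). For each $c\in W$ let $G_c$ be the star with centre $c$ and leaves $W\setminus\{c\}$. Then $\OPT(G_c)=1$ (witnessed by $\{c\}$), every $G_c$ has exactly $k$ edges, and any two of them, $G_c$ and $G_{c'}$, differ in $2(k-1)$ edges while both containing the edge $\{c,c'\}$. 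If an orientation points the edge between $c$ and $w$ towards $w$ then $w$ is one of the endpoints the edges point to, hence $|C(G_c)|\ge\#\{w\in W\setminus\{c\}:\text{the edge }\{c,w\}\text{ is oriented towards }w\}$, where $C(G)$ denotes the cover induced by the (random) orientation the algorithm outputs. So any algorithm must, to be cheap on $G_c$, orient almost all of the star's edges \emph{into} the centre $c$.

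Let $M$ be $\epsilon$-differentially private with expected cover size at most $\alpha\cdot\OPT$ on every instance. Summing the inequality above, $\sum_{w\in W\setminus\{c\}}\Pr[\,M(G_c)\text{ orients }\{c,w\}\text{ towards }w\,]\le\E[|C(G_c)|]\le\alpha$, so for at least $k/2$ of the leaves $w$ we have $\Pr[\,M(G_c)\text{ orients }\{c,w\}\text{ towards }c\,]\ge 1-2\alpha/k$; call such a pair $(c,w)$ \emph{good}. Since the edge $\{c,w\}$ belongs to both $G_c$ and $G_w$, and these instances are at edit distance $2(k-1)$ with $k\epsilon\le 2$, iterating the privacy guarantee across the intermediate single-edge changes gives, for every good pair, $\Pr[\,M(G_w)\text{ orients }\{c,w\}\text{ towards }c\,]\ge e^{-2(k-1)\epsilon}(1-2\alpha/k)\ge e^{-4}(1-2\alpha/k)$. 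But in $G_w$ the event that $\{c,w\}$ is oriented towards $c$ says exactly that the \emph{leaf} $c$ lies in $C(G_w)$ — the reverse of what a good algorithm on $G_w$ wants.

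Finally aggregate by double counting: each of the $k+1$ centres $c$ lies in at least $k/2$ good pairs, so there are at least $(k+1)k/2$ good pairs, and by pigeonhole over the leaf coordinate some $w^\star\in W$ is the second coordinate of at least $k/2$ of them. For the instance $G_{w^\star}$, the $\ge k/2$ distinct vertices $c$ occurring in those pairs each lie in $C(G_{w^\star})$ with probability at least $e^{-4}(1-2\alpha/k)$, so $\E[|C(G_{w^\star})|]\ge \tfrac{k}{2}e^{-4}(1-2\alpha/k)$. Since also $\E[|C(G_{w^\star})|]\le\alpha\cdot\OPT(G_{w^\star})=\alpha$, rearranging this linear inequality gives $\alpha\ge \tfrac{k}{2(e^4+1)}=\Omega(1/\epsilon)$, as claimed.

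The step that needs the most care — and the one I expect to be the crux — is the transfer: the events ``$C(G_c)$ is small'' and ``$C(G_w)$ is small'' live on different instances with different edge sets, so the whole argument has to be routed through the orientation of the \emph{single} common edge $\{c,w\}$, with the lost information compensated by double counting over all $k+1$ choices of centre. The remaining points are routine: verifying $\OPT(G_c)=1$, checking that $k=\Theta(1/\epsilon)$, $k\epsilon=O(1)$ and $k+1\le n$ over the whole interval $\epsilon\in(\tfrac1n,1]$, and observing that the argument is insensitive to how the algorithm encodes its output — a permutation of $V$ induces an orientation, so this also yields a lower bound in the permutation model used by our algorithms, and more generally for any rule that turns the output into a cover by assigning each edge one of its endpoints.
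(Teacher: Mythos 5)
Your proof is correct, and it uses the same hard instances as the paper (stars on $\Theta(1/\epsilon)$ vertices, with group privacy applied across the $O(1/\epsilon)$ edge changes separating two instances), but the way you extract the contradiction is genuinely different. The paper anchors at the \emph{empty} graph: averaging over a uniformly random ordered pair $(u,v)$, by symmetry the orientation of $(u,v)$ under $M_\emptyset$ points away from $u$ with probability exactly $\tfrac12$; since $\star_u$ is within $1/\epsilon$ edges of $\emptyset$, group privacy costs only a factor $e$, and the expected cover size on a random star is already $\Omega(1/\epsilon)$. You instead play the stars against one another: the common edge $\{c,w\}$ must be oriented into $c$ on $G_c$ and into $w$ on $G_w$, and the double-counting/pigeonhole step over the $(k+1)k/2$ good pairs converts this tension into the explicit bound $\alpha \ge k/(2(e^4+1))$. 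The paper's route is shorter and loses only a factor $e$ rather than $e^4$ (it traverses $1/\epsilon$ edge changes rather than $2(k-1)\approx 2/\epsilon$), but it needs the symmetry/averaging device; yours avoids any appeal to the algorithm's behaviour on the empty instance and on a randomized input, at the price of the combinatorial aggregation step. One small point worth adding if you polish this: the paper also notes that the construction can be replicated across disjoint groups of $\Theta(1/\epsilon)$ vertices to produce instances with arbitrarily large $\OPT$ on which the cost is still $\Omega(1/\epsilon)\cdot\OPT$, so the lower bound cannot be evaded by an additive loss; your argument admits the same extension verbatim.
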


\begin{proof}
  Let $V= \{1, 2, \ldots, \lceil\frac{1}{2\epsilon}\rceil\}$, and let
  $M$ be an $\epsilon$-differentially private algorithm that takes as
  input a private set $E$ of edges, and outputs an orientation $M_E:
  V\times V \rightarrow V$, with $M_E(u,v) \in \{u,v\}$ indicating to the edge which
  endpoint to use.
%  Consider a uniform distribution over star graphs $\star_u = (V, \{u\}\times (V\setminus \{u\}))$ on $V$.
  Picking two distinct vertices $u \neq v$ uniformly at random (and
  equating $(u,v)$ with $(v,u)$), we have by symmetry:
  \[\ts \Pr_{u,v}[M_\emptyset((u,v)) \neq u] = \frac{1}{2}.\]
  Let $\star_u = (V, \{u\}\times (V\setminus \{u\}))$ be the star graph
  rooted at $u$.
  Since $\star_u$ and $\emptyset$ differ in at most
  $\frac{1}{2\epsilon} - 1 < \frac{1}{\epsilon}$ edges and $M$ satisfies $\epsilon$-differential
  privacy, we conclude that
  \[\ts \Pr_{u,v}[M_{\star_u}((u,v)) \neq u] \ge \frac{1}{2e}.\]
  Thus the expected cost of $M$ when input a uniformly random $\star_u$ is at least
  $\frac{1}{2e}\times \lceil\frac{1}{2\epsilon}\rceil$, while
  $\OPT(\star_u)$ is 1. We can repeat this
  pattern arbitrarily, picking a random star from each group of
  $1/\epsilon$ vertices; this results in graphs with arbitrarily large
  vertex covers where $M$ incurs cost $1/\epsilon$ times the cost.
\end{proof}

\section{Set Cover}
\label{sec:SC}
We now turn our attention to private approximations for the Set Cover Problem;
here the set system $(U, \mathcal{S})$ is public, but the actual set of
elements to be covered $R \sse U$ is the private information.  As for
vertex cover, we cannot explicitly output a set cover that is good and
private at the same time.  Hence, we again output a permutation over all
the sets in the set system; this implicitly defines a set cover for $R$
by picking, for each element $R$, the first set in this permutation that
contains it. Our algorithms for set cover give the slightly weaker
$(\epsilon, \delta)$-privacy guarantees. 

\subsection{Unweighted Set Cover}
\label{sec:unweighted-set-cover}

We are given a set system $(U,\mathcal{S})$
and must cover a private subset $R
\subset U$.
Let the cardinality of the set
system be $|\mathcal{S}| = m$, and let $|U| = n$.
We first observe a computationally inefficient algorithm.

\begin{theorem}
  The exponential mechanism, when used to pick a permutation of sets,
  runs in time $O(m!poly(n))$ and gives an $O(\log
  (em/\OPT)/\epsilon)$-approximation.
\end{theorem}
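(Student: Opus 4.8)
The plan is to apply the exponential mechanism with the (private) query function $q(R,\pi) = -\,\abs{\,\set{S\in\mathcal{S} : S \text{ is the first set of } \pi \text{ containing some element of } R}\,}$; that is, $-q(R,\pi)$ is the size of the set cover that the permutation $\pi$ induces on the demand set $R$. First I would bound the sensitivity: if $R$ and $R'$ have symmetric difference one, say $R' = R\cup\set{e}$, then for every permutation $\pi$ the cover induced on $R'$ is the cover induced on $R$ together with at most the single set that first covers $e$, so $\abs{q(R,\pi)-q(R',\pi)}\le 1$, i.e. $\Delta = 1$. Running the mechanism with privacy value $\eps/2$ therefore yields $\eps$-differential privacy. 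For the running time: evaluating $q(R,\pi)$ takes $\mathrm{poly}(n)$ time (scan $R$, locate each element's first containing set in $\pi$, count the distinct sets used), and there are $m!$ permutations to score, giving the claimed $O(m!\,\mathrm{poly}(n))$ bound.

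For accuracy I would invoke the exponential-mechanism tail bound \eqref{eqn:expmech}, whose only nontrivial input is a lower bound on $\abs{R_{\OPT}}$, the number of permutations of maximal score. Here is the key combinatorial step: fix one optimal set cover $\mathcal{C}\sse\mathcal{S}$ with $\abs{\mathcal{C}} = \OPT$; any permutation that lists the $\OPT$ sets of $\mathcal{C}$ in its first $\OPT$ positions (in any order) and the remaining $m-\OPT$ sets afterwards (in any order) induces on $R$ a cover contained in $\mathcal{C}$, hence of size at most $\OPT$, hence exactly $\OPT$ (any induced cover is a valid cover of $R$, so has size $\ge\OPT$). Consequently $\abs{R_{\OPT}} \ge \OPT!\,(m-\OPT)!$, so $\abs{\mathcal{S}}!/\abs{R_{\OPT}} \le \binom{m}{\OPT}\le (em/\OPT)^{\OPT}$, and therefore $\ln(\abs{R}/\abs{R_{\OPT}})/\eps \le \OPT\ln(em/\OPT)/\eps$.

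Plugging this into \eqref{eqn:expmech}, with probability at least $1-e^{-t}$ the output permutation induces a cover of size at most $\OPT + \OPT\ln(em/\OPT)/\eps + t/\eps$; integrating the tail over $t\ge 0$ (and using $\cost\le m$ to truncate) gives $\E[\cost] \le \OPT + \OPT\ln(em/\OPT)/\eps + O(1/\eps)$. Since $\ln(em/\OPT)\ge 1$ and $\eps\le 1$ in the regime of interest, both the leading $\OPT$ term and the $O(1/\eps)$ term are dominated by $O(\OPT\ln(em/\OPT)/\eps)$, so the whole bound is $O(\OPT\ln(em/\OPT)/\eps)$, i.e. an $O(\ln(em/\OPT)/\eps)$-approximation (and the same argument yields this up to an additive $O(1/\eps)$ with high probability, not just in expectation).

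The step I expect to be the crux is the counting of $\abs{R_{\OPT}}$ and its packaging into the stated $\log(em/\OPT)$ form: one must bound the \emph{number} of optimal permutations — not merely exhibit one — via the ``$\OPT$ cover-sets in front, everything else behind'' construction, and then recognize that it is the ratio $\binom{m}{\OPT}$, rather than $m!$, that governs the loss (so that the logarithmic term carries $\OPT$ outside and $m/\OPT$ inside, rather than a bare $\ln m!$). Everything else — the sensitivity computation, the privacy rescaling absorbed into constants, and the observation that the additive $\OPT$ and $O(1/\eps)$ terms are swallowed by the multiplicative bound — is routine.
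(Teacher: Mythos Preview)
Your proposal is correct and follows essentially the same approach as the paper. The paper's proof is just two sentences---observing that a random permutation places all $\OPT$ optimal sets first with probability at least $\binom{m}{\OPT}^{-1}$, hence the additive error is $O(\log\binom{m}{\OPT}/\epsilon)$---and your argument is exactly this observation fleshed out: you count $|R_{\OPT}|\ge \OPT!\,(m-\OPT)!$ via the ``cover-sets in front'' construction, obtain the same $\binom{m}{\OPT}$ ratio, and then plug into the exponential-mechanism tail bound and bound $\binom{m}{\OPT}\le(em/\OPT)^{\OPT}$. One minor quibble: evaluating $q(R,\pi)$ requires scanning the length-$m$ permutation, so strictly this is $\mathrm{poly}(n,m)$ per permutation rather than $\mathrm{poly}(n)$; the paper's stated bound implicitly absorbs this.
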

\begin{proof}
  A random permutation, with probability at least ${m\choose \OPT}^{-1}$
  has all the sets in $\OPT$ before any set in $\OPT^{c}$. Thus the
  additive error is $O(\log {m\choose \OPT}/\epsilon)$.
\end{proof}
The rest of the section gives a computationally efficient algorithm with
slightly worse guarantees: this is a modified version of the greedy
algorithm, using the exponential mechanism to bias towards picking large
sets.
\begin{algorithm}
  \caption{Unweighted Set Cover}
  \begin{algorithmic}[1]
    \STATE \textbf{Input:} Set system $(U,\mathcal{S})$, private $R
    \subset U$ of elements to cover, $\epsilon$,$\delta$.
    \STATE \textbf{let} $i \leftarrow 1$, $R_i = R$, $\mathcal{S}_i
    \leftarrow \mathcal{S}$. $\epsilon' \leftarrow \epsilon/2\ln(\frac{e}{\delta})$.
    \FOR{$i=1,2,\ldots,m$}
    \STATE \textbf{pick} a set $S$ from $\mathcal{S}_i$ with probability
    proportional to $\exp(\epsilon' |S \cap R_i|)$.
    \STATE \textbf{output} set $S$.
    \STATE $R_{i+1} \leftarrow R_i \setminus S$, $\mathcal{S}_{i+1} \leftarrow \mathcal{S}_i - \{S\}$.%$i \leftarrow i+1$.
    \ENDFOR
\end{algorithmic}
\end{algorithm}

\subsubsection{Utility Analysis}
\label{sec:unwtd-sc-utility}

At the beginning of iteration~$i$, say there are $m_i = m - i + 1$
remaining sets and $n_i = |R_i|$ remaining elements, and define $L_i =
\max_{S \in \mathcal{S}} |S \cap R_i|$, the largest number of uncovered
elements covered by any set in $\mathcal{S}$.
%We call a set $S$ \emph{large} if it contains at least $L_i/10$ elements; all other sets are \emph{small}.
By a standard argument, any algorithm that always picks
sets of size $L_i/2$ is an $O(\ln n)$ approximation algorithm.

\begin{theorem}
  The above algorithm achieves an expected approximation ratio of
  $O(\ln n + \frac{\ln m}{\epsilon'}) = O(\ln n + \frac{\ln m \ln (e/\delta)}{\epsilon}) $.
\end{theorem}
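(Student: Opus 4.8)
The plan is to bound the size of the \emph{induced} cover, which equals the number of iterations in which the chosen set covers at least one previously-uncovered element of $R$ --- call these the \emph{useful} iterations, noting that a set $S$ output at step $i$ lies in the induced cover exactly when $|S \cap R_i| \geq 1$, and that at all times $\max_{S \in \mathcal{S}_i} |S \cap R_i| = L_i$ since every removed set has empty intersection with $R_i$. Write $\OPT$ for the optimal cover size (assume $\OPT \geq 1$, else $R = \emptyset$ and there is nothing to prove) and set $\Lambda := \Theta(\ln m/\epsilon')$, large enough that a single invocation of the exponential mechanism over the $m_i \leq m$ surviving sets, via~\eqref{eqn:expmech} with failure parameter $t = \Theta(\ln m)$, picks a set covering at least $L_i - \Lambda$ uncovered elements except with probability $1/\mathrm{poly}(m)$. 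A union bound over the at most $m$ iterations makes all these events hold simultaneously except with probability $O(1/m)$; I condition on that from now on.

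The first ingredient is the pigeonhole bound $L_i \geq n_i/\OPT$: the $\leq \OPT$ sets of an optimal cover of $R$ also cover $R_i$, so one of them grabs at least $n_i/\OPT$ of the $n_i$ still-uncovered elements. I then split the run into two phases. \textbf{Phase 1} consists of the steps with $L_i \geq 2\Lambda$. In such a step the chosen set covers at least $L_i - \Lambda \geq L_i/2 \geq n_i/(2\OPT)$ uncovered elements, so $n_{i+1} \leq (1 - 1/(2\OPT))\,n_i$; starting from $n_1 \leq n$, this geometric decay would bring $n_i$ below $1$ within $O(\OPT \ln n)$ steps, so Phase 1 lasts at most $O(\OPT \ln n)$ steps (each of which is automatically useful, covering $\geq \Lambda \geq 1$ elements). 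When Phase 1 ends we have either covered $R$ entirely or reached a configuration with $L_i < 2\Lambda$.

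\textbf{Phase 2} begins once $L_i < 2\Lambda$ (a condition that persists, as $L_i$ is non-increasing). Combined with the pigeonhole bound this forces $n_i < 2\Lambda \cdot \OPT$ at the start of Phase 2, and $n_i$ only shrinks afterwards; since every useful iteration decreases $n_i$ by at least one, Phase 2 contains at most $2\Lambda \cdot \OPT = O((\ln m/\epsilon')\,\OPT)$ useful iterations --- no exponential-mechanism guarantee is needed here, so this part is deterministic. Adding the two phases, the induced cover has size $O\big((\ln n + \ln m/\epsilon')\,\OPT\big)$ with probability $1 - O(1/m)$; on the complementary event the cover has size at most $m$, contributing $O(1) \le O(\OPT)$ to the expectation, which yields the claimed bound on the expected approximation ratio.

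The one subtlety --- and the reason the two-phase split is needed --- is that once $L_i$ falls below the exponential mechanism's additive slack $\Lambda$, one can no longer argue geometric progress, since the mechanism may keep selecting sets that cover only a single new element. The resolution exploited in Phase 2 is that a small $L_i$ is self-limiting: through $L_i \geq n_i/\OPT$ it certifies that few elements remain ($n_i < 2\Lambda\cdot\OPT$), so the crude ``one element per useful step'' accounting already suffices there. Everything else is the textbook greedy set-cover charging argument with an $O(\Lambda)$ additive error folded into each step, which is presumably the ``standard argument'' alluded to in the statement.
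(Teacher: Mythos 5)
Your proof is correct and follows essentially the same route as the paper's: the exponential mechanism loses only an additive $O(\ln m/\epsilon')$ per step, so while $L_i$ exceeds twice that slack the standard greedy charging gives $O(\OPT\ln n)$ sets, and once $L_i$ drops below the slack the pigeonhole bound $|R_i|\le \OPT\cdot L_i$ caps the residual cost at $O(\OPT\ln m/\epsilon')$ regardless of which sets are chosen. Your write-up is somewhat more explicit than the paper's (the union bound over iterations and the handling of the failure event in the expectation), but the decomposition and both key observations are identical.
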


\begin{proof}
  As there is at least one set containing $L_i$ elements, our use of the exponential mechanism to select sets combined with Equation \ref{eqn:expmech} ensures that the probability we select a set covering fewer than $L_i - 3\ln m/\epsilon$  elements is at most $1/m^2$. While $L_i > 6\ln m/\epsilon$, with probability at least $(1-1/m)$ we always select sets that cover at least $L_i/2$ elements, and can therefore use no more than $O(\OPT \ln n)$ sets. Once $L_i$ drops below this bound, we observe that the number of remaining elements $|R_i|$ is at most $\OPT \cdot L_i$. Any permutation therefore costs at most an additional $O(\OPT \ln m / \epsilon')$.
\end{proof}

\subsubsection{Privacy}
\label{sec:unwtd-sc-privacy}

\begin{theorem}
\label{thm:unwtdsetcoverprivacy}
  The unweighted set cover algorithm preserves
  $(\epsilon,\delta)$ differential privacy for any $\epsilon \in (0,1)$, and $\delta < 1/e$.
\end{theorem}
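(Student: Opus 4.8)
The plan is to fix two inputs $A,B$ of symmetric difference one and an arbitrary output permutation $\pi=(S_1,\dots,S_m)$, and to compare $\Pr[M(A)=\pi]$ with $\Pr[M(B)=\pi]$ by telescoping the per-step conditional probabilities $\Pr[\text{step }i\text{ picks }S_i\mid S_1,\dots,S_{i-1}]$. Write $A=B\cup\{x\}$ for the distinguished element $x$ (if no set of $\mathcal{S}$ contains $x$ the two runs are literally identical, so assume otherwise), and let $j^*=j^*(\pi)$ be the first index with $x\in S_{j^*}$. The structural point is that, conditioned on both runs producing $\pi$, they couple so that the residual element sets satisfy $R_i^A=R_i^B\cup\{x\}$ for $i\le j^*$ and $R_i^A=R_i^B$ for $i>j^*$ (the residual set systems $\mathcal{S}_i$ agree at every step). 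Consequently the numerator $\exp(\eps'|S_i\cap R_i|)$ of the $i$-th conditional probability differs between the runs only at $i=j^*$ (by a factor $e^{-\eps'}$ in passing from $A$ to $B$), and its partition function $D_i$ differs only for $i\le j^*$. Writing $p_i:=D_i^x/D_i$ for the conditional probability that the $B$-run picks a set containing $x$ at step $i$ ($D_i$ being the $B$-run partition function at step $i$ and $D_i^x\le D_i$ its restriction to sets through $x$), a short calculation gives, with $\eps'=\eps/(2\ln(e/\delta))$,
\[
 \frac{\Pr[M(B)=\pi]}{\Pr[M(A)=\pi]}=\Phi(\pi)\cdot\Psi(\pi),\qquad \Phi(\pi):=\ts\prod_{i<j^*}\bigl(1+(e^{\eps'}-1)p_i\bigr)\ge1,\qquad \Psi(\pi)\in[e^{-\eps'},1].
\]

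From this, one direction is free: $\Phi\ge1$ and $\Psi\ge e^{-\eps'}$ give $\Pr[M(A)=\pi]\le e^{\eps'}\Pr[M(B)=\pi]\le e^{\eps}\Pr[M(B)=\pi]$ pointwise (using $\eps'\le\eps$, which holds since $\ln(e/\delta)\ge1$), hence $\Pr[M(A)\in S]\le e^{\eps}\Pr[M(B)\in S]$ for all $S$, with no additive slack. For the other direction I would bound $\Phi(\pi)$: taking logarithms and using $1+z\le e^{z}$ and $e^{\eps'}-1\le2\eps'=\eps/\ln(e/\delta)$ (valid as $\eps'<1$) gives $\ln\Phi(\pi)\le(\eps/\ln(e/\delta))\sum_{i<j^*}p_i$, so $\Phi(\pi)\le e^{\eps}$ \emph{whenever} $\sum_{i<j^*}p_i\le\ln(e/\delta)$. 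Thus it is enough to show that under $M(B)$ the event $\sum_{i<j^*}p_i>\ln(e/\delta)$ has probability at most $\delta$: that set of permutations is then the exceptional set in the routine lemma that a pointwise likelihood-ratio bound of $e^{\eps}$ holding off an event of probability $\le\delta$ implies $(\eps,\delta)$-differential privacy (split $\Pr[M(B)\in S]$ over this set and its complement), yielding $\Pr[M(B)\in S]\le e^{\eps}\Pr[M(A)\in S]+\delta$.

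The hard part will be this tail bound, and it is essentially a statement about ``how fast $x$ gets covered'': $j^*$ is the first success in an adaptive sequence of Bernoulli trials with history-dependent success probabilities $p_i$, and $\sum_{i<j^*}p_i$ is the success-probability mass missed before the first success. I would prove $\Pr\bigl[\sum_{i<j^*}p_i>c\bigr]\le e^{-c}$ for every $c>0$ (for $c=\ln(e/\delta)$ this is $\delta/e<\delta$) by a nonnegative-supermartingale argument: let $\sigma_n:=\sum_{i\le\min(n,\,j^*-1)}p_i$ be the partial sum frozen once $x$ is covered and put $N_n:=e^{\sigma_n}\,\one[j^*>n]$; then $N_0=1$ and $\E[N_n\mid\mathcal{F}_{n-1}]=(1-p_n)e^{p_n}N_{n-1}\le N_{n-1}$, since $(1-p)e^{p}\le1$. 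Letting $\tau$ be the first $n$ with $\sigma_n>c$, on the event $\{\sum_{i<j^*}p_i>c\}$ one has $\tau<j^*$ (because $\sigma_{j^*-1}=\sum_{i<j^*}p_i>c$ and $\sigma$ is nondecreasing), hence $N_\tau=e^{\sigma_\tau}\ge e^{c}$; optional stopping (applied to $\tau\wedge n$, with Fatou) gives $e^{c}\Pr\bigl[\sum_{i<j^*}p_i>c\bigr]\le\E[N_0]=1$. Assembling the two directions and checking that the auxiliary inequalities ($e^{\eps'}-1\le2\eps'$, $\eps'\le\eps$, $\eps'<1$, $e^{-\ln(e/\delta)}<\delta$) all follow from $\eps\in(0,1)$ and $\delta<1/e$ completes the proof.
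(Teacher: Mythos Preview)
Your proof is correct and follows essentially the same approach as the paper: telescope the per-step likelihood ratio, observe that only steps up to the first cover of the differing element matter, handle the easy direction pointwise by $e^{\eps'}$, and for the hard direction bound $\log\Phi$ by $2\eps'\sum_{i<j^*}p_i$ and control the exceptional event via the tail bound $\Pr[\sum_{i<j^*}p_i>q]\le e^{-q}$. The only cosmetic difference is that the paper states this tail bound as a separate lemma (Lemma~\ref{lem:deltafix}, proved by reverse induction in the appendix), whereas you give the equivalent supermartingale/optional-stopping argument inline.
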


\begin{proof}
  Let $A$ and $B$ be two set cover instances that differ in some element
  $I$.  Say that $S^I$ is the
  collection of sets containing $I$. Fix an output permutation $\pi$,
  and write $s_{i,j}(A)$ to denote the size of set $S_j$ after the first
  $i-1$ sets in $\pi$ have been added to the cover.
\ifnum\fullversion=1
  \begin{eqnarray*}
    \frac{\Pr[M(A)=\pi]}{\Pr[M(B)=\pi]} &=& \prod_{i=1}^n\left(\frac{\exp(\epsilon'\cdot s_{i,\pi_i}(A))/(\sum_j\exp(\epsilon'\cdot s_{i,j}(A)))}{\exp(\epsilon'\cdot s_{i,\pi_i}(B))/(\sum_j\exp(\epsilon'\cdot s_{i,j}(B)))}\right) \\
%%    &=& \prod_{i=1}^t\left(\frac{\exp(\epsilon\cdot s_{i,\pi_i}(A))/(\sum_j\exp(\epsilon\cdot s_{i,j}(A)))}{\exp(\epsilon\cdot s_{i,\pi_i}(B))/(\sum_j\exp(\epsilon\cdot s_{i,j}(B)))}\right) \\
    &=& \frac{\exp(\epsilon'\cdot s_{t,\pi_t}(A))}{\exp(\epsilon'\cdot s_{t,\pi_t}(B))}\cdot \prod_{i=1}^t\left(\frac{\sum_j\exp(\epsilon'\cdot s_{i,j}(B))}{\sum_j\exp(\epsilon'\cdot s_{i,j}(A))}\right)
  \end{eqnarray*}
\else
  \begin{align*}
    &\frac{\Pr[M(A)=\pi]}{\Pr[M(B)=\pi]} \\
    &= \prod_{i=1}^n\left(\frac{\exp(\epsilon'\cdot s_{i,\pi_i}(A))/(\sum_j\exp(\epsilon'\cdot s_{i,j}(A)))}{\exp(\epsilon'\cdot s_{i,\pi_i}(B))/(\sum_j\exp(\epsilon'\cdot s_{i,j}(B)))}\right) \\
%%    &=& \prod_{i=1}^t\left(\frac{\exp(\epsilon\cdot s_{i,\pi_i}(A))/(\sum_j\exp(\epsilon\cdot s_{i,j}(A)))}{\exp(\epsilon\cdot s_{i,\pi_i}(B))/(\sum_j\exp(\epsilon\cdot s_{i,j}(B)))}\right) \\
    &= \frac{\exp(\epsilon'\cdot s_{t,\pi_t}(A))}{\exp(\epsilon'\cdot s_{t,\pi_t}(B))}\cdot \prod_{i=1}^t\left(\frac{\sum_j\exp(\epsilon'\cdot s_{i,j}(B))}{\sum_j\exp(\epsilon'\cdot s_{i,j}(A))}\right)
  \end{align*}
\fi
  where $t$ is such that $S_{\pi_t}$ is the first set containing $I$ to
  fall in the permutation $\pi$. After $t$, the remaining
  elements in $A$ and $B$ are identical, and all subsequent terms
  cancel. Moreover, except for the $t^{th}$ term, the numerators of both
  the top and bottom expression cancel, since all the relevant set sizes
  are equal. If $A$ contains $I$ and $B$ does not the first term is $\exp(\epsilon')$ and the each term in the product is at most 1.

  Now suppose that $B$ contains $I$ and $A$ does not . In this case, the first term is $\exp(-\epsilon') < 1$. Moreover, in instance $B$, every set in $S^I$ is
  larger by 1 than in $A$, and all others remain the same size.
  Therefore, we have:
\ifnum\fullversion=1
  \begin{eqnarray*}
%    \frac{\Pr[M(A)=\pi]}{\Pr[M(B)=\pi]} &=& \exp(-\epsilon)\cdot \prod_{i=1}^t\left(\frac{(\exp(\epsilon)-1)\cdot\sum_{j\in S^I}\exp(\epsilon\cdot s_{i,j}(A)) + \sum_j\exp(\epsilon\cdot s_{i,j}(A))}{\sum_j\exp(\epsilon\cdot s_{i,j}(A))}\right) \\
%    &=& \exp(-\epsilon)\cdot \prod_{i=1}^t \left(1 + (\exp(\epsilon)-1)\cdot p_i(A) \right)
    \frac{\Pr[M(A)=\pi]}{\Pr[M(B)=\pi]} &\leq&  \prod_{i=1}^t\left(\frac{(\exp(\epsilon')-1)\cdot\sum_{j\in S^I}\exp(\epsilon'\cdot s_{i,j}(A)) + \sum_j\exp(\epsilon'\cdot s_{i,j}(A))}{\sum_j\exp(\epsilon'\cdot s_{i,j}(A))}\right) \\
    &=&  \prod_{i=1}^t \left(1 + (\exp(\epsilon')-1)\cdot p_i(A) \right)
  \end{eqnarray*}
\else
  \begin{align*}
%    \frac{\Pr[M(A)=\pi]}{\Pr[M(B)=\pi]} &=& \exp(-\epsilon)\cdot \prod_{i=1}^t\left(\frac{(\exp(\epsilon)-1)\cdot\sum_{j\in S^I}\exp(\epsilon\cdot s_{i,j}(A)) + \sum_j\exp(\epsilon\cdot s_{i,j}(A))}{\sum_j\exp(\epsilon\cdot s_{i,j}(A))}\right) \\
%    &=& \exp(-\epsilon)\cdot \prod_{i=1}^t \left(1 + (\exp(\epsilon)-1)\cdot p_i(A) \right)
    &\ts\frac{\Pr[M(A)=\pi]}{\Pr[M(B)=\pi]} \\
    &\ts \leq  \prod_{i=1}^t\left(\frac{(\exp(\epsilon')-1)\cdot\sum_{j\in S^I}\exp(\epsilon'\cdot s_{i,j}(A)) + \sum_j\exp(\epsilon'\cdot s_{i,j}(A))}{\sum_j\exp(\epsilon'\cdot s_{i,j}(A))}\right) \\
    & \ts =  \prod_{i=1}^t \left(1 + (\exp(\epsilon')-1)\cdot p_i(A) \right)
  \end{align*}\fi
  where $p_i(A)$ is the probability that a set containing $I$ is chosen
  at step $i$ of the algorithm running on instance $A$, conditioned on
  picking the sets $S_{\pi_1}, \ldots, S_{\pi_{i-1}}$ in the previous
  steps.

%%modifying this part since it now comes before the weighted case.
For an instance $A$ and an element $I\in A$, we say that an output $\sigma$ is {\em $q$-bad} if $\sum_{i} p_i(A) \one(I \mbox{ uncovered at step } i)$ (strictly) exceeds $q$, where $p_i(A)$ is as defined above. We call a permutation {\em $q$-good} otherwise. We first consider the case when the output $\pi$ is $(\ln \delta^{-1})$-good. By the definition of $t$, we have
 \begin{equation*}
%    \label{usc:tail}
     \sum_{i=1}^{t-1}p_i(A) \leq \ln \delta^{-1}.
  \end{equation*}
Continuing the analysis from above,
  \begin{align*}
    \frac{\Pr[M(A)=\pi]}{\Pr[M(B)=\pi]} &\leq
    \prod_{i=1}^t \exp((\exp(\epsilon')-1)p_i(A))
    \leq \exp(2\epsilon'\sum_{i=1}^tp_i(A)) \\
    &\leq \exp(2\epsilon'(\ln(\frac{1}{\delta})+p_t(A)))
    \leq \exp(2\epsilon'(\ln(\frac{1}{\delta})+1)).
  \end{align*}
Thus, for any $(\ln \delta^{-1})$-good output $\pi$, we have $\frac{\Pr[M(A)=\pi]}{\Pr[M(B)=\pi]} \leq \exp(\epsilon)$.
We can then invoke the following lemma, proved in appendix~\ref{sec:deltafix}
\begin{lemma}
\label{lem:deltafix}
For any set system $(U,\mathcal{S})$, any instance $A$ and any $I \in A$, the probability that the output $\pi$ of the algorithm above is $q$-bad is bounded by $\exp(-q)$.
\end{lemma}

Thus for any set $\mathcal{P}$ of outcomes, we have
\ifnum\fullversion=1
\begin{eqnarray*}
\Pr[M(A) \in \mathcal{P}] &=& \sum_{\pi \in \mathcal{P}} \Pr[M(A)=\pi]\\
&=& \sum_{\pi \in \mathcal{P}: \pi\mbox{ is } (\ln \delta^{-1})\mbox{-good}} \Pr[M(A)=\pi] + \sum_{\pi \in \mathcal{P}: \pi\mbox{ is } (\ln \delta^{-1})\mbox{-bad} } \Pr[M(A)=\pi] \\
&\leq& \sum_{\pi \in \mathcal{P}: \pi\mbox{ is } (\ln \delta^{-1})\mbox{-good} } \exp(\epsilon) \Pr[M(B)=\pi] + \delta\\
&\leq& \exp(\epsilon)\Pr[M(B) \in \mathcal{P}] +\delta.
\end{eqnarray*}
\else
\begin{align*}
&\Pr[M(A) \in \mathcal{P}]\\
&= \sum_{\pi \in \mathcal{P}} \Pr[M(A)=\pi]\\
&= \sum_{\pi \in \mathcal{P}: \pi\mbox{ is } (\ln \delta^{-1})\mbox{-good}} \Pr[M(A)=\pi] \\
&\;\;\;\;+ \sum_{\pi \in \mathcal{P}: \pi\mbox{ is } (\ln \delta^{-1})\mbox{-bad} } \Pr[M(A)=\pi] \\
&\leq \sum_{\pi \in \mathcal{P}: \pi\mbox{ is } (\ln \delta^{-1})\mbox{-good} } \exp(\epsilon) \Pr[M(B)=\pi] + \delta\\
&\leq \exp(\epsilon)\Pr[M(B) \in \mathcal{P}] +\delta.
\end{align*}
\fi
\end{proof}

\begin{corollary}
  For $\epsilon < 1$ and $\delta = 1/\mathrm{poly}(n)$, there is an $O(\frac{\ln n\ln m}{\epsilon})$-approximation algorithm for the unweighted set cover problem
  preserving $(\epsilon,\delta)$-differential privacy.
\end{corollary}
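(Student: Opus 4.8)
The plan is to obtain the corollary as a direct instantiation of the two preceding theorems: the utility theorem, which gives expected approximation ratio $O\!\left(\ln n + \frac{\ln m}{\epsilon'}\right) = O\!\left(\ln n + \frac{\ln m \ln(e/\delta)}{\epsilon}\right)$, and the privacy theorem (Theorem~\ref{thm:unwtdsetcoverprivacy}), which already establishes $(\epsilon,\delta)$-differential privacy for every $\epsilon \in (0,1)$ and $\delta < 1/e$. There is no genuinely new content to prove; the work is purely in the bookkeeping of parameters.

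First I would fix $\delta = 1/\mathrm{poly}(n)$ and note that for $n$ large enough this satisfies $\delta < 1/e$, so Theorem~\ref{thm:unwtdsetcoverprivacy} applies verbatim and yields $(\epsilon,\delta)$-differential privacy for $\epsilon \in (0,1)$. This handles the privacy half of the statement outright.

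Next I would turn to utility. With $\delta = 1/\mathrm{poly}(n)$ we have $\ln(e/\delta) = 1 + \ln(1/\delta) = O(\ln n)$, hence the internal parameter $\epsilon' = \epsilon/(2\ln(e/\delta))$ satisfies $1/\epsilon' = O(\ln n/\epsilon)$. Plugging this into the utility theorem's bound gives an expected approximation ratio of $O\!\left(\ln n + \frac{\ln m}{\epsilon'}\right) = O\!\left(\ln n + \frac{\ln n \ln m}{\epsilon}\right)$. Since $\epsilon < 1$, the second term is $\Omega(\ln n \ln m) \ge \Omega(\ln n)$ (the trivial small-$m$ cases being absorbed into the constant), so it dominates the first, and the ratio collapses to $O\!\left(\frac{\ln n \ln m}{\epsilon}\right)$, as claimed.

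The only thing that needs care—and it is the closest thing to an ``obstacle''—is making sure the two roles of the privacy parameter stay consistent: the externally promised $\epsilon$ is the one controlling differential privacy in Theorem~\ref{thm:unwtdsetcoverprivacy}, while $\epsilon' = \epsilon/(2\ln(e/\delta))$ is the bias actually used inside the exponential-mechanism steps, and it is this $\epsilon'$ (not $\epsilon$) that enters the utility bound. Once this is kept straight, combining the two theorems is immediate, and I would simply remark that the corollary is the special case $\delta = 1/\mathrm{poly}(n)$ of the two results above.
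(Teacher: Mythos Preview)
Your proposal is correct and matches the paper's intent: the corollary is stated without proof precisely because it is the immediate specialization $\delta = 1/\mathrm{poly}(n)$ of the preceding utility and privacy theorems, and your parameter bookkeeping (in particular $\ln(e/\delta) = O(\ln n)$ so that $1/\epsilon' = O(\ln n/\epsilon)$) is exactly the computation the reader is meant to carry out.
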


\subsection{Weighted Set Cover}

\newcommand{\poly}{\operatorname{poly}}

We are given a set system $(U,\mathcal{S})$ and a cost function
$C:\mathcal{S}\rightarrow \mathbb{R}$. We must cover a private subset
$R \subset U$. W.l.o.g., let $\min_{S \in \mathcal{S}}C(S) = 1$, and denote
$\max_{S \in \mathcal{S}}C(S) = W$.  Let the cardinality of the set
system be $|\mathcal{S}| = m$, and let $|U| = n$.

\begin{algorithm}[H]
  \caption{Weighted Set Cover}
  \begin{algorithmic}[1]
    %\STATE \textbf{Input:} Set system $(U,\mathcal{S})$, private $R
    %\subset U$ of elements to cover, $\epsilon$, $\delta$.
    \STATE \textbf{let} $i \leftarrow 1$, $R_i = R$, $\mathcal{S}_i
    \leftarrow \mathcal{S}$, $r_i \leftarrow n$, $\epsilon'=\frac{\epsilon}{2\ln(e/\delta)}$, $T =  \Theta\big(\frac{\log m + \log\log(nW)}{\epsilon'}\big)$

    \WHILE{$r_i \geq 1/W$}

    \STATE \textbf{pick} a set $S$ from $\mathcal{S}_i$ with probability
    proportional to $\exp\big(\epsilon' \big(\,|S \cap R_i| - r_i\cdot C(S)\,\big)\big)$ \\
    ~~~~~~~~or \texttt{halve} with probability proportional to $\exp(-\epsilon' T)$
    \IF{\texttt{halve} }
    \STATE \textbf{let} $r_{i+1} \leftarrow r_i/2$, $R_{i+1} \leftarrow R_i$,
    $\mathcal{S}_{i+1} \leftarrow \mathcal{S}_i$, $i \gets i+1$
    \ELSE
    \STATE \textbf{output} set $S$
    \STATE \textbf{let} $R_{i+1} \leftarrow R_i \setminus S$,
    $\mathcal{S}_{i+1} \leftarrow \mathcal{S}_i - \{S\}$, $r_{i+1} \leftarrow r_i$,
    $i \leftarrow i+1$
    \ENDIF
    \ENDWHILE
    \STATE \textbf{output} all remaining sets in $\mathcal{S}_i$ in
    random order
  \end{algorithmic}
\end{algorithm}

Let us first analyze the utility of the algorithm. If $R=\emptyset$, the algorithm has cost zero and there is nothing to prove. So we can assume that $\OPT \geq 1$. We first show that (\whp) $r_i \gtrapprox R_i/\OPT$.
\begin{lemma}
  \label{firstSetCoverLemma}
  Except with probability $1/\poly(m)$, we have
  $r_i \geq \frac{|R_i|}{2 \OPT}$ for all iterations~$i$.
\end{lemma}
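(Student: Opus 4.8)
The plan is to argue by induction on $i$ that $r_i \geq |R_i|/(2\OPT)$ holds with high probability, where the only way the invariant can be threatened is through a \texttt{halve} step (which shrinks $r_i$) or through many output steps that shrink $R_i$ too slowly. First I would fix an optimal solution $\OPT$ with sets $O_1, \ldots, O_{\OPT}$ of total cost $\OPT$. At any iteration~$i$, the uncovered elements $R_i$ are covered by $\OPT$, so by an averaging argument there is a set $S^* \in \mathcal{S}_i$ (one of the not-yet-chosen optimal sets, or a subset of one) with $|S^* \cap R_i| / C(S^*) \geq |R_i|/\OPT$. In terms of the score used in Step~3, this set has score $|S^* \cap R_i| - r_i C(S^*) \geq C(S^*)\big(|R_i|/\OPT - r_i\big)$, which is nonnegative — indeed at least, say, $C(S^*) r_i$ — precisely when $r_i \leq |R_i|/(2\OPT)$, i.e.\ exactly when the invariant is about to fail.

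The key step is then: whenever $r_i < |R_i|/(2\OPT)$ (the "bad regime"), the exponential mechanism in Step~3 is heavily biased toward sets of positive score, and in particular the probability of either choosing a zero/negative-score set or choosing to \texttt{halve} is exponentially small in $\epsilon' T$. Concretely, the witness set $S^*$ above has weight $\exp(\epsilon' \cdot \text{score}(S^*))$ with a positive exponent, whereas \texttt{halve} has weight $\exp(-\epsilon' T)$; with the choice $T = \Theta((\log m + \log\log(nW))/\epsilon')$, the probability of halving while in the bad regime is at most $1/\poly(m \cdot \log(nW))$. Since $r_i$ starts at $n \geq |R|/\OPT \geq |R_i|/\OPT$ (as $\OPT \geq 1$ and $|R| \le n$) and only decreases via halving, the invariant can first be violated only immediately after a \texttt{halve} step executed while already $r_i$ was within a factor~$2$ of $|R_i|/\OPT$ — but then the preceding state had $r_i \leq |R_i|/\OPT$, putting us in (the boundary of) the bad regime where halving is exponentially unlikely. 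The number of halve steps is at most $\log_2(nW)$ (since the loop runs while $r_i \geq 1/W$ and $r_1 = n$), and the total number of iterations is $\poly(m, \log(nW))$; a union bound over all iterations of the event "\texttt{halve} chosen while in the bad regime" gives failure probability $1/\poly(m)$, as desired.

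The main obstacle I expect is making the "first violation" argument fully rigorous: one must track that $|R_i|$ only ever decreases, that $r_i$ only decreases at halve steps, and that therefore the ratio $r_i \cdot \OPT / |R_i|$ can only drop below $1/2$ at a halve step whose pre-state had the ratio in $[1/2, 1]$ — and one must rule out that the ratio was pushed below $1/2$ gradually by output steps (it cannot be, since output steps do not change $r_i$ and only decrease $|R_i|$, which can only \emph{increase} the ratio). Care is also needed at the boundary $r_i \approx |R_i|/\OPT$ to ensure the witness set's score is large enough (bounded below by a positive constant times $r_i C(S^*) \ge r_i$, using $C(S^*)\ge 1$) that the exponential-mechanism bound genuinely beats the $\exp(-\epsilon' T)$ halving weight after summing over the at most $m$ competing sets. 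Once these bookkeeping points are pinned down, the probabilistic estimate is a routine application of Theorem~\ref{eqn:expmech}-style reasoning together with a union bound.
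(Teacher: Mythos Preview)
Your approach is essentially the paper's: the invariant can only be broken by a \texttt{halve} step taken from a state with $r_i \in (|R_i|/(2\OPT),\,|R_i|/\OPT]$, where the averaging argument yields a witness set of nonnegative score, so $\Pr[\texttt{halve}] \le \exp(-\epsilon' T) = 1/\poly(m\log nW)$, and a union bound over the at most $m + O(\log nW)$ iterations finishes. The only simplification you are missing is that at this boundary the witness score being $\geq 0$ already suffices (it contributes weight $\geq 1$ to the normalizer versus \texttt{halve}'s $\exp(-\epsilon' T)$), so your concern about needing a strictly positive lower bound like $r_i C(S^*)$ and about summing over $m$ competing sets is unnecessary.
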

\begin{proof}
  Clearly $r_1 = n \geq |R_1|/2\OPT$. For $r_i$ to fall below
  $|R_i|/2$, it must be in $(\frac{|R_i|}{2 \OPT},\frac{|R_i|}{ \OPT}]$
  and be halved in Step~6 of some iteration $i$. We'll show that this is
  unlikely: if at some iteration $i$, $\frac{|R_i|}{2 \OPT} \leq r_i
  \leq \frac{|R_i|}{\OPT}$, then we argue that with high probability, the algorithm
  will not output \texttt{halve} and thus not halve $r_i$. Since all
  remaining elements $R_i$ can be covered at cost at most $\OPT$, there
  must exist a set $S$ such that $\frac{|S \cap R_i|}{C(S)} \geq
  \frac{|R_i|}{\OPT}$, and hence $|S \cap R_i| \geq C(S) \cdot \frac{
    |R_i|}{\OPT} \geq C(S) \cdot r_i.$

  Hence $u_i(S) := |S \cap R_i| - r_i\cdot C(S) \geq 0$ in this case,
  and the algorithm will output $S$ with probability at least
  proportional to $1$, whereas it outputs \texttt{halve} with probability
  proportional to $\exp(-\epsilon' T)$. Thus, $\Pr[\; \text{algorithm
    returns \texttt{halve}}\; ] < \exp(-\epsilon' T) = 1/\poly(m \log
  nW)$.  Since there are $m$ sets in total, and $r$ ranges from  $n$ %at most
%  $O(n)$ (with high probability, by \lref[Observation]{obs:wsc-nestimate})
  to $1/W$, there are at most $m + O(\log n W)$ iterations, and the
  proof follows by a union bound.
\end{proof}

Let us define a \emph{score} function $u_i(S) := |S \cap R_i| - r_i\cdot
C(S)$, and $u_i(\texttt{halve}) := -T$: note that in Step~4 of our
algorithm, we output either \texttt{halve} or a set $S$, with
probabilities proportional to $\exp(\epsilon' u_i(\cdot))$.  The following
lemma states that with high probability, none of the sets output by our
algorithm have very low scores (since we are much more likely to output
\texttt{halve} than a low-scoring set).
\begin{lemma}
  \label{goodAnswerLemma}
  Except with probability at most $1/\poly(m)$,
  Step~4 only returns sets $S$ with $u_i(S) \geq -2T$.
\end{lemma}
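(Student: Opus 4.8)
The plan is to bound, at each individual iteration, the probability that Step~4 outputs a set whose score is below $-2T$, and then take a union bound over all iterations.

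First I would use the fact that at any iteration~$i$ the \texttt{halve} option is always present and is selected with weight proportional to $\exp(\epsilon' u_i(\texttt{halve})) = \exp(-\epsilon' T)$. Hence the normalizing constant of the exponential mechanism used in Step~4 is at least $\exp(-\epsilon' T)$. Consequently, for any fixed set $S$ with $u_i(S) < -2T$, the probability that $S$ is the output at iteration~$i$ is at most $\exp(\epsilon' u_i(S))/\exp(-\epsilon' T) = \exp(\epsilon'(u_i(S)+T)) < \exp(-\epsilon' T)$. Summing over the at most $m$ candidate sets, the probability that iteration~$i$ outputs any set of score below $-2T$ is at most $m\exp(-\epsilon' T)$.

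Next I would recall, exactly as established in the proof of \lref[Lemma]{firstSetCoverLemma}, that $r_i$ ranges from $n$ down to $1/W$ and that there are at most $m + O(\log nW)$ iterations of the while loop in total. A union bound over these iterations bounds the probability that the algorithm ever outputs a set of score below $-2T$ by $(m + O(\log nW))\cdot m \cdot \exp(-\epsilon' T)$. Plugging in $T = \Theta\big((\log m + \log\log(nW))/\epsilon'\big)$ gives $\exp(-\epsilon' T) = 1/\poly(m\log(nW))$, and choosing the hidden constant in the definition of $T$ large enough makes the displayed product at most $1/\poly(m)$, as claimed.

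The argument is entirely routine; there is no substantive obstacle. The one point requiring a little care is that the union bound is over only polynomially (indeed $m + O(\log nW)$) many iterations, which is precisely what was shown en route to \lref[Lemma]{firstSetCoverLemma}. The key structural observation is simply that the ever-available \texttt{halve} alternative lower-bounds the partition function, so the exponential mechanism can place only exponentially small mass on any output whose score is more than~$T$ below that of \texttt{halve}.
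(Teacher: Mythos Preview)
Your proposal is correct and follows essentially the same argument as the paper: bound the per-iteration probability of a bad set using the \texttt{halve} option to lower-bound the normalizer, then take a union bound over the at most $m + O(\log nW)$ iterations and plug in the choice of~$T$. The paper phrases the first step as a ratio $\Pr[\texttt{halve}]/\Pr[\mathcal{B}]$ rather than an explicit partition-function lower bound, but this is the same calculation.
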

\begin{proof}
  There are at most $|\mathcal{S}_i| \leq m$ sets $S$ with score $u_i(S)
  \leq -2T$, and so one is output with probability at most proportional
  to $m\exp(-2T\epsilon)$. We will denote this bad event by
  $\mathcal{B}$. On the other hand, \texttt{halve} is output with
  probability proportional to $\exp(-T\epsilon)$. Hence,
  $\Pr[\texttt{halve}]/\Pr[\mathcal{B}] \geq \exp(T\epsilon)/m$, and so
  $\Pr[\mathcal{B}] \leq m/\exp(T\epsilon) \leq 1/\poly(m \log nW)$.
  Again there are at most $m + O(\log nW)$ iterations, and the lemma
  follows by a trivial union bound.
\end{proof}

We now analyze the cost incurred by the algorithm in each stage. Let us
divide the algorithm's execution into \emph{stages}: stage~$j$ consists
of all iterations $i$ where $|R_i| \in (\frac{n}{2^j},
\frac{n}{2^{j-1}}]$.  Call a set $S$ {\textit interesting} if it is
incident on an uncovered element when it is picked.  Let $\mathcal{I}_j$
be the set of interesting sets selected in stage $j$, and
$C(\mathcal{I}_j)$ be the total cost incurred on these sets.

\begin{lemma}
  \label{secondSetCoverLemma}
  Consider stages $1,\ldots,j$ of the algorithm.  Except with probability $1/\poly(m)$, we can bound
  the cost of the interesting sets in stage~$1,\ldots,j$ by:
  $$\sum_{j' \leq j} C(\mathcal{I}_{j'}) \leq 4j \OPT\cdot(1+2T).$$
\end{lemma}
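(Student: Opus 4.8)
The plan is to work on the high‑probability event on which the conclusions of the two preceding lemmas hold, and then to bound the cost of the interesting sets one stage at a time, charging at most $4\OPT(1+2T)$ to each stage.

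First I would let $\mathcal{G}$ be the intersection of the good event of Lemma~\ref{firstSetCoverLemma} (namely $r_i \geq |R_i|/2\OPT$ at every iteration $i$) and the good event of Lemma~\ref{goodAnswerLemma} (namely every set $S$ output during the while loop, picked at iteration $i(S)$, satisfies $u_{i(S)}(S) = |S \cap R_{i(S)}| - r_{i(S)}\,C(S) \geq -2T$). A union bound shows $\mathcal{G}$ fails with probability only $1/\poly(m)$, and we may assume $\OPT \geq 1$ since the case $R=\emptyset$ is trivial. Everything below is argued on the event $\mathcal{G}$.

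Next I would fix a stage $j'$ and bound $C(\mathcal{I}_{j'})$. Every iteration $i$ belonging to stage $j'$ has $|R_i| > n/2^{j'}$, so on $\mathcal{G}$ we get $r_i > n/(2^{j'+1}\OPT)$, i.e.\ $1/r_i < 2^{j'+1}\OPT/n$. Rearranging $u_{i(S)}(S) \geq -2T$ gives, for each $S \in \mathcal{I}_{j'}$, $C(S) \leq (|S \cap R_{i(S)}| + 2T)/r_{i(S)} < (|S \cap R_{i(S)}| + 2T)\cdot 2^{j'+1}\OPT/n$. Summing over $\mathcal{I}_{j'}$,
\[
  C(\mathcal{I}_{j'}) \;<\; \frac{2^{j'+1}\OPT}{n}\left(\sum_{S\in\mathcal{I}_{j'}} |S\cap R_{i(S)}| \;+\; 2T\,|\mathcal{I}_{j'}|\right).
\]
I would then note that both quantities in the parentheses are at most $n/2^{j'-1}$: the sets in $\mathcal{I}_{j'}$ cover pairwise disjoint new elements and there are at most $n/2^{j'-1}$ uncovered elements when stage $j'$ begins, so $\sum_{S}|S\cap R_{i(S)}| \leq n/2^{j'-1}$; and since each interesting set covers at least one new element, $|\mathcal{I}_{j'}|$ is bounded by that same count. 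Plugging in yields $C(\mathcal{I}_{j'}) < \frac{2^{j'+1}\OPT}{n}\cdot\frac{n}{2^{j'-1}}(1+2T) = 4\,\OPT(1+2T)$, the constant arising as $2^{j'+1}/2^{j'-1}$.

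Finally, summing this per‑stage bound over $j' = 1,\ldots,j$ gives $\sum_{j'\leq j} C(\mathcal{I}_{j'}) < 4j\,\OPT(1+2T)$, which holds whenever $\mathcal{G}$ does, hence except with probability $1/\poly(m)$. I do not expect a genuine obstacle; the only points needing care are that the two earlier lemmas must be invoked simultaneously (handled by the union bound) and the observation that the stage‑$j'$ ``element budget'' $n/2^{j'-1}$ does double duty, simultaneously bounding the total number of newly covered elements and (since every interesting set covers $\geq 1$ element) the number of interesting sets in the stage.
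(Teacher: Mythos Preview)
Your proposal is correct and follows essentially the same argument as the paper: combine the two previous lemmas on a high-probability event, rewrite $u_i(S)\ge -2T$ as a per-set cost bound, use $r_i \geq |R_i|/2\OPT$ and the definition of a stage to convert $1/r_i$ into $2^{j'+1}\OPT/n$, and then cap both the total number of newly covered elements and the number of interesting sets in stage $j'$ by $n/2^{j'-1}$. The only cosmetic difference is that the paper folds the two terms together via the inequality $|S_i\cap R_i|+2T \le |S_i\cap R_i|(1+2T)$ (valid since interesting sets have $|S_i\cap R_i|\ge 1$), whereas you bound $\sum_S |S\cap R_{i(S)}|$ and $|\mathcal I_{j'}|$ separately; both routes yield the identical $4\OPT(1+2T)$ per stage.
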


\begin{proof}
  By \lref[Lemma]{goodAnswerLemma} all the output sets have $u_i(S_i)
  \geq -2T$ \whp. Rewriting, each $S_i$ selected in a round $j' \leq
  j$ satisfies
  \begin{gather*}
     C(S_i) \leq \frac{|S_i \cap R_i| + 2T}{r_i} \leq
    \frac{2^{j'+1}\;\OPT}{n}(|S_i \cap R_i| + 2T),
  \end{gather*}
  where the second inequality is \whp, and uses
  \lref[Lemma]{firstSetCoverLemma}.  Now summing over all rounds $j' \leq
  j$, we get
\ifnum\fullversion=1
  \begin{gather*}
    \label{eq:wsc-1}
    \sum_{j' \leq j} C(\mathcal{I}_{j'}) \leq \sum_{j' \leq j}
    \frac{2^{j'+1}\;\OPT}{n} \big( \sum_{i \textrm{ s.t. }
      S_i \in \mathcal{I}_{j'}} \big( |S_i \cap R_i| + 2T \big) \big).
  \end{gather*}
\else
  \begin{align}
    \label{eq:wsc-1}
    &\sum_{j' \leq j} C(\mathcal{I}_{j'})\\
    &\leq \sum_{j' \leq j}
    \frac{2^{j'+1}\;\OPT}{n} \big( \sum_{i \textrm{ s.t. }
      S_i \in \mathcal{I}_{j'}} \big( |S_i \cap R_i| + 2T \big) \big).
  \end{align}
\fi
  Consider the inner sum for any particular value of $j'$: let the first
  iteration in stage $j'$ be iteration~$i_0$---naturally $R_i \sse
  R_{i_0}$ for any iteration $i$ in this stage. Now, since $S_i \cap R_i
  \sse R_{i_0}$ and $S_i \cap R_i$ is disjoint from $S_{i'} \cap
  R_{i'}$, the sum over $|S_i \cap R_i|$ is at most $|R_{i_0}|$, which
  is at most $\frac{n}{2^{j'-1}}$ by definition of
  stage $j'$.  Moreover, since we are only concerned with bounding the
  cost of interesting sets, each $|S_i \cap R_i| \geq 1$, and so $|S_i
  \cap R_i| + 2T \leq |S_i \cap R_i|(1+2T)$. Putting this
  together,~(\ref{eq:wsc-1}) implies
\ifnum\fullversion=1
  \begin{gather*}
    \sum_{j' \leq j} C(\mathcal{I}_{j'}) \leq \sum_{j' \leq j}
    \frac{2^{j'+1}\;\OPT}{n} \times
    \frac{n}{2^{j'-1}} (1+2T) = 4j\,\OPT\,(1+2T),
  \end{gather*}
\else
  \begin{eqnarray*}
    \sum_{j' \leq j} C(\mathcal{I}_{j'})
    &\leq& \sum_{j' \leq j}
    \frac{2^{j'+1}\;\OPT}{n} \times
    \frac{n}{2^{j'-1}} (1+2T)\\
    &=& 4j\,\OPT\,(1+2T),
  \end{eqnarray*}
\fi
  which proves the lemma.
\end{proof}

\begin{theorem}[\textbf{Utility}]
  %% Setting $T = \Theta\left(\frac{\log m + \log\log(nW)}{\epsilon}\right)$,
  The weighted set cover algorithm incurs a cost of
  %$O(\frac{1}{\epsilon}\,\log \widehat{n}\,(\log m + \log\log nW)\,\OPT)$
 $O(T\,\log n\,\OPT)$
  except with probability $1/\poly(m)$.
\end{theorem}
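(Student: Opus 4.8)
The plan is to decompose the cost of the permutation output by the algorithm into the cost of \emph{interesting} sets chosen inside the \texttt{while} loop and the cost attributable to the final dump, and bound each piece. The key observation is that a set $S$ appearing in the output permutation contributes $C(S)$ to the induced set cover \emph{only if} $S$ is the first set in the permutation that covers some element of $R$: a set picked in Step~7 therefore contributes only if it was interesting when picked (covered a previously uncovered element), and a set produced by the dump in the last line contributes only if, among the dumped sets, it is the first (in the random dump order) to cover some still-uncovered element. In particular, the non-interesting sets picked in Step~7 cost nothing toward the objective, so we never need to control how many of them the algorithm outputs or how expensive they are.

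First I would condition on the high-probability events established in \lref[Lemma]{firstSetCoverLemma}, \lref[Lemma]{goodAnswerLemma}, and \lref[Lemma]{secondSetCoverLemma}; by a union bound these all hold simultaneously except with probability $1/\poly(m)$, and the remainder of the argument is deterministic given them. For the interesting sets picked in the loop: each is picked while $R_i \neq \emptyset$, so $1 \le |R_i| \le n$, and hence it lies in some stage $j'$ with $1 \le j' \le j_{\max}$ for some $j_{\max} = O(\log n)$ (the stages partition the range of $|R_i| \ge 1$ by powers of two). Applying \lref[Lemma]{secondSetCoverLemma} with $j = j_{\max}$ bounds the total cost of all interesting sets picked inside the loop by $4\,j_{\max}\,\OPT\,(1+2T) = O(T \log n \cdot \OPT)$. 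For the final dump: when the \texttt{while} loop terminates we have $r_i < 1/W$, and \lref[Lemma]{firstSetCoverLemma} gives $r_i \ge |R_i|/(2\OPT)$, so $|R_i| < 2\OPT/W$. Grouping the elements of $R_i$ by which dumped set is first to cover them partitions $R_i$ into at most $|R_i|$ nonempty groups, so at most $|R_i|$ distinct dumped sets enter the cover, each of cost at most $W$; thus the dump contributes at most $|R_i| \cdot W < 2\OPT = O(\OPT)$. Summing the two pieces gives total cost $O(T \log n \cdot \OPT) + O(\OPT) = O(T \log n \cdot \OPT)$, holding except with probability $1/\poly(m)$.

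The main obstacle here is conceptual rather than computational: one must recognize that only interesting sets (together with the ``first'' sets of the dump) enter the set-cover objective, which is precisely what allows us to discard the potentially numerous and potentially expensive non-interesting sets that the algorithm may output in Step~7. Once that is in place, the interesting-set bound is just \lref[Lemma]{secondSetCoverLemma} with the number of stages instantiated as $O(\log n)$, and the only remaining care is the accounting for the dump in the case where $R_i$ is not yet empty at termination, which is handled by the $|R_i| < 2\OPT/W$ bound coming from \lref[Lemma]{firstSetCoverLemma}.
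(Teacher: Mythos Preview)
Your proposal is correct and follows essentially the same approach as the paper: apply \lref[Lemma]{secondSetCoverLemma} with $j = j_{\max} = O(\log n)$ stages to bound the interesting sets' cost by $O(T\log n\cdot\OPT)$, and use \lref[Lemma]{firstSetCoverLemma} at termination ($r_i < 1/W$) to bound the dump contribution by $|R_i|\cdot W < 2\,\OPT$. Your explicit discussion of why non-interesting sets are free is a useful clarification the paper leaves implicit, but the argument is otherwise identical.
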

\begin{proof}
  Since the number of uncovered elements halves in each stage by
  definition, there are at most $1 + \log n$ stages, which by
  \lref[Lemma]{secondSetCoverLemma} incur a total cost of at most $O(\log
  n\;\OPT\cdot(1+2T))$. The sets that remain and are output at
  the very end of the algorithm incur cost at most $W$ for each
  remaining uncovered element; since $r_i < 1/W$ at the end,
  \lref[Lemma]{firstSetCoverLemma} implies that $|R_i| < 2 \OPT /W$ (\whp),
  giving an additional cost of at most $2\, \OPT$.
\end{proof}

We can adapt the above argument to bound the expected cost by $O(T\log n
\;\OPT)$. %(Proof in the full version.)

\begin{theorem}[\textbf{Privacy}]
\label{lem:wtdsetcoverprivacy}
For any $\delta > 0$, the weighted set cover algorithm preserves
$\left(\epsilon,\delta\right)$
differential privacy.
\end{theorem}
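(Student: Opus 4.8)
The plan is to follow the privacy analysis of the unweighted set cover algorithm (Theorem~\ref{thm:unwtdsetcoverprivacy}) essentially verbatim; the new elements --- the \texttt{halve} steps and the threshold $r_i$ --- turn out to add only book-keeping. Fix two instances $A,B$ that differ in a single element $I$, and fix a candidate output sequence $\pi = (\pi_1, \pi_2, \ldots)$, where each $\pi_i$ is either a set or the token \texttt{halve}. The first thing I would establish is a structural fact: conditioned on the output being $\pi$, the sequence of thresholds $r_1, r_2, \ldots$ is the \emph{same deterministic function of $\pi$ in both executions}, since $r_i$ is halved exactly at the iterations at which $\pi$ records \texttt{halve}, and the score $u_i(\texttt{halve}) = -T$ does not depend on the private set. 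Consequently, if say $I \in B \setminus A$, then at every iteration $i$ before $I$ is covered, each set $S \ni I$ has score $u_i(S) = |S \cap R_i| - r_i C(S)$ exactly $1$ larger in the $B$-execution than in the $A$-execution, while all other set scores and the \texttt{halve} score are unchanged; equivalently, the sampling weight of each $S \ni I$ is scaled by $e^{\epsilon'}$ and nothing else changes. This is exactly the situation handled in the unweighted proof.

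Write $p_i(A)$ for the conditional probability that iteration $i$ of the algorithm on $A$ outputs a set containing $I$, given that the first $i-1$ outputs were $\pi_1, \ldots, \pi_{i-1}$. Expanding $\Pr[M(A)=\pi]/\Pr[M(B)=\pi]$ as a telescoping product over iterations and cancelling every factor past the iteration $t$ at which $\pi$ first outputs a set containing $I$ --- after $t$ the residual instances coincide, and in particular the final random-order ``dump'' of the remaining sets contributes a ratio of exactly $1$ --- the case $I \in A \setminus B$ gives a ratio of at most $e^{\epsilon'} \le e^{\epsilon}$ at once, and the case $I \in B \setminus A$ gives
\[
\frac{\Pr[M(A)=\pi]}{\Pr[M(B)=\pi]} \;\le\; \prod_{i \le t}\bigl(1 + (e^{\epsilon'}-1)\,p_i(A)\bigr) \;\le\; \exp\!\Bigl((e^{\epsilon'}-1)\textstyle\sum_{i\le t}p_i(A)\Bigr),
\]
just as in Theorem~\ref{thm:unwtdsetcoverprivacy}; here the \texttt{halve} iterations contribute factors of the identical form $1 + (e^{\epsilon'}-1)p_i(A)$, because their numerator ratio is $1$.

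Now call $\pi$ \emph{$q$-good} if $\sum_i p_i(A)\,\one(I \text{ uncovered at step } i) \le q$ and $q$-bad otherwise, with $q := \ln(1/\delta)$. For a $q$-good $\pi$ we get $\sum_{i\le t}p_i(A) \le q + p_t(A) \le \ln(1/\delta)+1$, and using $e^{\epsilon'}-1 \le 2\epsilon'$ (valid since $\epsilon' \le 1$ in the relevant parameter range) together with $\epsilon' = \epsilon/(2\ln(e/\delta))$, the ratio is at most $\exp(2\epsilon'(\ln(1/\delta)+1)) = \exp(\epsilon)$. It remains to bound the probability that $M(A)$ produces a $q$-bad output; this is the analogue of Lemma~\ref{lem:deltafix}, and its proof is unaffected by weights or by \texttt{halve}: the quantity $\sum_i p_i(A)\,\one(I\text{ uncovered at step }i)$ is the sum of the conditional probabilities of the event ``iteration $i$ outputs a set containing $I$'' accumulated up to the single time that event can first occur, so the same martingale-type argument gives $\Pr[\pi \text{ is } q\text{-bad}] \le e^{-q} = \delta$. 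Splitting any $\mathcal{P} \subseteq Range(M)$ into its good and bad outputs then yields $\Pr[M(A) \in \mathcal{P}] \le e^{\epsilon}\Pr[M(B)\in\mathcal{P}] + \delta$, exactly as in the unweighted case.

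The only genuine content beyond Theorem~\ref{thm:unwtdsetcoverprivacy} is the structural claim of the first paragraph --- that the thresholds $r_i$ form a common function of $\pi$, and that toggling the element $I$ perturbs only the scores of sets containing $I$ and by exactly $1$ --- together with checking that \texttt{halve} iterations slot into the telescoping product without altering its form. I expect this verification to be the part needing care; once it is in place, the argument reduces line-for-line to the unweighted set cover privacy proof.
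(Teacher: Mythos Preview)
Your proposal is correct and follows essentially the same approach as the paper's proof: both reduce to the unweighted analysis by observing that the \texttt{halve} token has the same score in $A$ and $B$, that the score of each set containing $I$ shifts by exactly $1$, and then invoke the $q$-good/$q$-bad split together with Lemma~\ref{lem:deltafix}. Your explicit remark that the threshold sequence $r_i$ is a deterministic function of $\pi$ (so the coupling is well-defined) is a clarification the paper leaves implicit, but otherwise the arguments coincide line-for-line.
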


\begin{proof}
We imagine that the algorithm outputs a set
  named ``HALVE'' when Step~4 of the algorithm returns \texttt{halve}, and
  show that even this output is privacy preserving. 
  Let $A$ and $B$ be two set cover instances that differ in some element
  $I$.  Say that $S^I$ is the
  collection of sets containing $I$. Fix an output $\pi$,
  and write $u_{i,j}(A)$ to denote the score of $\pi_j$ (recall this
  may be \texttt{halve}) after the first
  $i-1$ sets in $\pi$ have been selected.
\ifnum\fullversion=1
  \begin{displaymath}
    \frac{\Pr[M(A)=\pi]}{\Pr[M(B)=\pi]} =\prod_{i=1}^n\left(\frac{\exp(\epsilon'\cdot u_{i,\pi_i}(A))/(\sum_j\exp(\epsilon'\cdot u_{i,j}(A)))}{\exp(\epsilon'\cdot u_{i,\pi_i}(B))/(\sum_j\exp(\epsilon'\cdot u_{i,j}(B)))}\right)
%%    &=& \prod_{i=1}^t\left(\frac{\exp(\epsilon\cdot s_{i,\pi_i}(A))/(\sum_j\exp(\epsilon\cdot s_{i,j}(A)))}{\exp(\epsilon\cdot s_{i,\pi_i}(B))/(\sum_j\exp(\epsilon\cdot s_{i,j}(B)))}\right) \\
    = \frac{\exp(\epsilon'\cdot u_{t,\pi_t}(A))}{\exp(\epsilon'\cdot u_{t,\pi_t}(B))}\cdot \prod_{i=1}^t\left(\frac{\sum_j\exp(\epsilon'\cdot u_{i,j}(B))}{\sum_j\exp(\epsilon'\cdot u_{i,j}(A))}\right)
  \end{displaymath}
\else
  \begin{align*}
    &\frac{\Pr[M(A)=\pi]}{\Pr[M(B)=\pi]} \\
    &=\prod_{i=1}^n\left(\frac{\exp(\epsilon'\cdot u_{i,\pi_i}(A))/(\sum_j\exp(\epsilon'\cdot u_{i,j}(A)))}{\exp(\epsilon'\cdot u_{i,\pi_i}(B))/(\sum_j\exp(\epsilon'\cdot u_{i,j}(B)))}\right)\\
%%    &=& \prod_{i=1}^t\left(\frac{\exp(\epsilon\cdot s_{i,\pi_i}(A))/(\sum_j\exp(\epsilon\cdot s_{i,j}(A)))}{\exp(\epsilon\cdot s_{i,\pi_i}(B))/(\sum_j\exp(\epsilon\cdot s_{i,j}(B)))}\right) \\
    &= \frac{\exp(\epsilon'\cdot u_{t,\pi_t}(A))}{\exp(\epsilon'\cdot u_{t,\pi_t}(B))}\cdot \prod_{i=1}^t\left(\frac{\sum_j\exp(\epsilon'\cdot u_{i,j}(B))}{\sum_j\exp(\epsilon'\cdot u_{i,j}(A))}\right)
  \end{align*}
\fi
  where $t$ is such that $S_{\pi_t}$ is the first set containing $I$ to
  fall in the permutation $\pi$. After $t$, the remaining
  elements in $A$ and $B$ are identical, and all subsequent terms
  cancel. Moreover, except for the $t^{th}$ term, the numerators of both
  the top and bottom expression cancel, since all the relevant set sizes
  are equal. If $A$ contains $I$ and $B$ does not the first term is $\exp(\epsilon')$ and the each term in the product is at most 1. Since $\epsilon' \leq \epsilon$, we conclude that in this case, for any set $\mathcal{P}$ of outputs, $\Pr[M(A)\in \mathcal{P}] \leq \exp(\epsilon)\Pr[M(B)\in \mathcal{P}]$.

  Now suppose that $B$ contains $I$ and $A$ does not . In this case, the first term is $\exp(-\epsilon') < 1$. Moreover, in instance $B$, every set in $S^I$ is
  larger by 1 than in $A$, and all others remain the same size.
  Therefore, we have:
  \begin{displaymath}
    \frac{\Pr[M(A)=\pi]}{\Pr[M(B)=\pi]} \leq  \prod_{i=1}^t\left(\frac{(\exp(\epsilon')-1)\cdot\sum_{j\in S^I}\exp(\epsilon'\cdot u_{i,j}(A)) + \sum_j\exp(\epsilon'\cdot u_{i,j}(A))}{\sum_j\exp(\epsilon'\cdot u_{i,j}(A))}\right)
    =  \prod_{i=1}^t \left(1 + (e^{\epsilon'}-1)\cdot p_i(A) \right)
  \end{displaymath}
  where $p_i(A)$ is the probability that a set containing $I$ is chosen
  at step $i$ of the algorithm running on instance $A$, conditioned on
  picking the sets $S_{\pi_1}, \ldots, S_{\pi_{i-1}}$ in the previous
  steps.

For an instance $A$ and an element $I\in A$, we say that an output $\sigma$ is {\em $q$-bad} if $\sum_{i} p_i(A) \one(I \mbox{ uncovered at step } i)$ (strictly) exceeds $q$, where $p_i(A)$ is as defined above. We call a permutation {\em $q$-good} otherwise. We first consider the case when the output $\pi$ is $(\ln \delta^{-1})$-good. By the definition of $t$, we have
 \begin{equation*}
%    \label{usc:tail}
     \sum_{i=1}^{t-1}p_i(A) \leq \ln \delta^{-1}.
  \end{equation*}
Continuing the analysis from above,
  \begin{align*}
    \frac{\Pr[M(A)=\pi]}{\Pr[M(B)=\pi]} &\leq
    \prod_{i=1}^t \exp((\exp(\epsilon')-1)p_i(A))
    \leq \exp\left(2\epsilon'\sum_{i=1}^tp_i(A)\right) \\
    & \leq \exp\left(2\epsilon'\left(\ln \delta^{-1} +p_t(A)\right)\right)
    \leq \exp\left(2\epsilon'\left(\ln \delta^{-1} +1\right)\right).
  \end{align*}
Thus, for any $(\ln \delta^{-1})$-good output $\pi$, we have $\frac{\Pr[M(A)=\pi]}{\Pr[M(B)=\pi]} \leq \exp(\epsilon)$.

Finally, as in the proof of Theorem~\ref{thm:unwtdsetcoverprivacy}, we can use lemma~\ref{lem:deltafix} to complete the proof.
\end{proof}

\subsection{Removing the Dependence on $W$}

\newcommand{\I}{\mathcal{I}}

We can remove the dependence of the algorithm on $W$ with a simple idea.
For an instance $\I = (U, \ess)$, let $\ess^j = \{S \in \ess \mid C(S)
\in (n^{j},n^{j+1}]\,\}$. Let $U^j$ be the set of elements such that the
cheapest set containing them is in $\ess^j$. Suppose that for each $j$
and each $S \in \ess^j$, we remove all elements that can be covered by a
set of cost at most $n^{j-1}$, and hence define $S'$ to be $S \cap (U^j
\cup U^{j-1})$. This would change the cost of the optimal solution only
by a factor of $2$, since if we were earlier using $S$ in the optimal
solution, we can pick $S'$ and at most $n$ sets of cost at most
$n^{j-1}$ to cover the elements covered by $S \setminus S'$. Call this
instance $\I' = (U, \ess')$.

Now we partition this instance into two instances $\I_1$ and $\I_2$,
where $\I_1 = (\cup_{j \text{ even}} U^j, \ess')$, and where $\I_2 =
(\cup_{j \text{ odd}} U^j, \ess')$. Since we have just partitioned the
universe, the optimal solution on both these instances costs at most
$2\,\OPT(\I)$. But both these instances $\I_1, \I_2$ are themselves
collections of \emph{disjoint} instances, with each of these instances
having $w_{\max}/w_{\min} \leq n^2$; this immediately allows us to
remove the dependence on $W$. Note that this transformation is
based only on the set system $(U, \ess)$, and not on the private
subset~$R$.

\begin{theorem}
For any $\epsilon\in (0,1)$, $\delta = 1/\poly(n)$, there is an $O(log n (\log m + \log\log n)/\epsilon)$-approximation for the weighted set cover problem that preserves $(\epsilon,\delta)$-differential privacy.
\end{theorem}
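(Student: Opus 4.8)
The plan is to reduce to the algorithm of the previous subsection by running it on each of the disjoint sub-instances produced by the stated decomposition, and then arguing privacy and utility piece by piece. Given $\I = (U,\ess)$, first form $\I'$ by replacing each $S \in \ess^j$ with $S' = S \cap (U^j \cup U^{j-1})$; this transformation depends only on the public set system $(U,\ess)$ and, as already observed, changes $\OPT$ by at most a factor of $2$. Then split $\I'$ into $\I_1 = (\bigcup_{j\text{ even}} U^j,\ess')$ and $\I_2 = (\bigcup_{j\text{ odd}} U^j,\ess')$, each of which further falls apart into connected components: the component containing $U^j$ involves only the sets of cost classes $j$ and $j+1$ (once the even and odd universes are separated, these are the only sets of $\ess'$ meeting $U^j$), so every component has $w_{\max}/w_{\min} \le n^2$. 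Rescaling each component so that $w_{\min}=1$ makes $W \le n^2$, hence $\log\log(nW) = O(\log\log n)$, so the weighted set cover algorithm runs on it with parameter $T = \Theta\big((\log m + \log\log n)/\epsilon'\big)$.

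For privacy, note that the private input $R$ is partitioned among the $U^j$, so any single element $I$ lies in exactly one component; changing $I$ leaves the inputs to all other components untouched, and leaves the decomposition itself (which uses only $(U,\ess)$) unchanged. Since the weighted set cover algorithm on the affected component is $(\epsilon,\delta)$-differentially private by \lref[Theorem]{lem:wtdsetcoverprivacy}, and the outputs on the remaining components are drawn from identical distributions in the two executions, the concatenated output is $(\epsilon,\delta)$-differentially private --- this is just the parallel-composition observation that only one sub-mechanism ever sees a changed record, so the privacy parameters do not add up.

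For utility, the weighted set cover utility theorem gives cost $O(T \log n \cdot \OPT_{\mathrm{comp}})$ on each component with $W \le n^2$, except with probability $1/\poly(m)$. Summing over the components of $\I_1$, whose optimal costs add up to $\OPT(\I_1) \le \OPT(\I') \le 2\,\OPT(\I)$ (and likewise for $\I_2$), and taking a union bound over the polynomially many components and iterations (with the failure probability per component driven down by enlarging the constant in $T$), the total cost is $O(T \log n \cdot \OPT(\I)) = O\big(\log n\,(\log m + \log\log n)/\epsilon\big)\cdot\OPT(\I)$ with probability $1-1/\poly(m)$; the remark following the utility theorem then turns this into the same bound in expectation if desired. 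The only part that needs real care is the bookkeeping of the decomposition: verifying that each $U^j$-component interacts only with cost classes $j$ and $j+1$ so that the weight spread is genuinely $n^2$, that $R$ is truly partitioned (so the privacy argument is parallel, not sequential, composition), and that the constant-factor losses in $\OPT$ --- from forming $\I'$ and from the even/odd split --- compose to an $O(1)$ overall. Everything else is an immediate appeal to the utility and privacy theorems already proved for weighted set cover.
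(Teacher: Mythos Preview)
Your proposal is correct and follows exactly the approach the paper intends: it simply spells out the details implied by the decomposition paragraph preceding the theorem (the paper itself states the theorem without further proof). Your parallel-composition argument for privacy and the component-wise summation for utility are precisely the intended reasoning, and your identification of the cost-class pairs $\{j,j+1\}$ as the only sets meeting $U^j$ in each parity instance correctly establishes the $w_{\max}/w_{\min} \le n^2$ bound.
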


\subsection{Lower bounds}
\begin{theorem}
Any $\epsilon$-differentially private algorithm that maps elements to
sets must have approximation factor $\Omega(\log m/\epsilon)$, for a
set cover instance with $m$ sets and $((\log m)/\epsilon)^{O(1)}$
elements, for any $\epsilon \in (2\log m/m^{\frac{1}{20}}, 1)$.
\end{theorem}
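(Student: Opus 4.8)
The plan is to follow the template of the min-cut and $k$-median lower bounds: exhibit a set system on very few elements in which the optimum is a single set that is so well ``hidden'' that differential privacy forces the algorithm to miss it, while every cover avoiding that set is expensive. So throughout $\OPT$ will be $1$, and we will show any private algorithm pays $\Omega(\log m/\epsilon)$.

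\emph{The instance.} First I would build a \emph{shattering} set system: a ground set $U$ with $|U| = (\log m/\epsilon)^{O(1)}$, a family $\mathcal S = \{S_1,\dots,S_m\}$, and for each $j$ a ``core'' $R_j \subseteq S_j$ with $|R_j| = \Theta(\log m/\epsilon)$ such that $|S_i \cap R_j| = O(1)$ for all $i \ne j$. Then $\{S_j\}$ is a size-$1$ cover of $R_j$, so $\OPT(R_j)=1$, while any cover of $R_j$ drawn from $\mathcal S \setminus \{S_j\}$ must contain at least $|R_j|/O(1) = \Omega(\log m/\epsilon)$ sets. Such a system can be produced by a random construction or an algebraic (Reed--Solomon--type) code; getting the packing parameters to fit inside a universe of size $\mathrm{poly}(\log m/\epsilon)$, and keeping $|U|$ polynomially below $\sqrt m$, is precisely where the hypothesis $\epsilon > 2\log m/m^{1/20}$ (which forces $m$ to be suitably large relative to $\log m/\epsilon$) is used.

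\emph{The privacy argument.} Fix an $\epsilon$-differentially private $M$ that, on a demand set $R$, outputs a map $\phi\colon U \to \mathcal S$ with $e \in \phi(e)$ for all $e$, so its cover of $R$ is $\phi(R) = \{\phi(e) : e \in R\}$, of cost $|\phi(R)|$. Run $M$ on $R=\emptyset$. For any fixed $\phi$, if $S_j \in \phi(R_j)$ pick a witness $e\in R_j$ with $\phi(e)=S_j$; distinct $j$ give distinct witnesses (the $S_j$ are distinct), so $\#\{j : S_j \in \phi(R_j)\} \le |U|$. Averaging over $j$ yields some $j^\star$ with $\Pr_{\phi\sim M(\emptyset)}[S_{j^\star}\in\phi(R_{j^\star})] \le |U|/m$. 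Now apply the privacy guarantee (group privacy) across the $|R_{j^\star}| = \Theta(\log m/\epsilon)$-element difference between $\emptyset$ and $R_{j^\star}$; choosing the hidden constant so that $|R_{j^\star}| \le (\log m)/(2\epsilon)$ gives $e^{\epsilon|R_{j^\star}|}\le \sqrt m$, hence
\[
\Pr\bigl[\, M(R_{j^\star}) \text{ uses } S_{j^\star}\,\bigr] \;\le\; e^{\epsilon|R_{j^\star}|}\cdot\frac{|U|}{m} \;\le\; \frac{|U|}{\sqrt m},
\]
which is $o(1)$ under the hypothesis on $\epsilon$. On the instance $(U,\mathcal S,R_{j^\star})$ we have $\OPT = 1$, yet with probability $1-o(1)$ the algorithm returns a cover avoiding $S_{j^\star}$, which by the shattering property has $\Omega(\log m/\epsilon)$ sets; thus $\E[\cost]/\OPT = \Omega(\log m/\epsilon)$ on this instance.

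\emph{The main obstacle.} Everything except the construction is routine -- one averaging step and one invocation of group privacy. The hard part is the combinatorial design: simultaneously obtaining $m$ distinct sets, a universe of size only $\mathrm{poly}(\log m/\epsilon)$, cores of size exactly $\Theta(\log m/\epsilon)$, and $O(1)$-bounded intersections of every other set with each core, and making these parameters consistent over the whole range $\epsilon \in (2\log m/m^{1/20},1)$. This is exactly what the quantitative hypothesis on $\epsilon$ is spent on, and it is the step where the proof really has to do work.
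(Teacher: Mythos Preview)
Your privacy argument (averaging plus group privacy) is fine and matches the paper's. The gap is exactly where you flagged it, but it is not just ``hard'' --- the shattering system you describe \emph{cannot exist}. Since $R_i \subseteq S_i$, your hypothesis $|S_i\cap R_j|=O(1)$ for $i\ne j$ forces $|R_i\cap R_j|=O(1)$ as well. Now $m$ sets of size $k':=|R_j|=\Theta(\log m/\epsilon)$ in a universe of size $N$, with pairwise intersections at most $c$, satisfy the packing bound $m\le \binom{N}{c+1}/\binom{k'}{c+1}\le (N/k')^{c+1}$. With $N=(\log m/\epsilon)^{O(1)}$ and $c=O(1)$ this gives $m\le(\log m/\epsilon)^{O(1)}$, which is impossible for $\epsilon\in(2\log m/m^{1/20},1)$. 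Even if you optimize $c$, the best you can get is $c=\Theta(\log m/\log(N/k'))$, and then your cover-size lower bound $k'/c$ collapses to $O(\log\log m/\epsilon)$, not $\Omega(\log m/\epsilon)$.

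The paper avoids this obstruction by \emph{not} asking for a pairwise design. It takes the $S_j$ to be random size-$k$ subsets of $[N]$ with $k=N^{1/10}$ and $m=2^{2\epsilon k}$, and instead of counting $\{j:S_j\in\phi(R_j)\}$ it counts, for each fixed assignment $f$, the number of $j$ with $|f(S_j)|\le k/2$. A direct probability computation plus a union bound over all $m^N$ assignments shows that with positive probability no assignment is ``good'' for more than $N$ of the $S_j$'s. Averaging and group privacy then proceed exactly as you wrote, but now ``bad'' already means ``cost $>k/2$'', so no separate structural property of the set system is needed. The work you tried to push into the construction is instead done by the union bound over assignments.
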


\begin{proof}
  We consider a set system with $|U| = N$ and $\mathcal{S}$ a uniformly
  random selection of $m$ size-$k$ subsets of $U$.  We will consider
  problem instances $S_i$ consisting of one of these $m$ subsets, so
  $\OPT(S_i) = 1$. Let $M$ be an $\epsilon$-differentially private
  algorithm that on input $T\subseteq U$, outputs an assignment $f$
  mapping each element in $U$ to some set in $\mathcal{S}$ that covers
  it.  The number of possible assignments is at most $m^N$.  The cost on
  input $T$ under an assignment $f$ is the cardinality of the set $f(T)
  = \cup_{e\in T} f(e)$.

  We say assignment $f$ is good for a subset $T\subseteq U$ if its cost
  $\card{f(T)}$ is at most $l = \frac{k}{2}$. We first show that any
  fixed assignment $f: U \rightarrow [m]$, such that $\card{f^{-1}(j)}
  \leq k$ for all $j$, is unlikely to be good for a randomly picked
  size-$k$ subset $T$ of $U$.  The number of ways to choose $l$ sets
  from among those with non-empty $f^{-1}(\cdot)$ is at most
  $\binom{N}{l}$.  Thus the probability that $f$ is good for a random
  size-$k$ subset is at most $\binom{N}{l} \left(\frac{lk}{N}\right)^k$.
  Setting $k = N^{1/10}$, and $l=\frac{k}{2}$, this is at most
  \[\left(\frac{N e}{l}\right)^l \left(\frac{lk}{N}\right)^k =
  \left(\frac{e k^3}{2N}\right)^{k/2} \leq 2^{-k \log N / 4}.\] Let
  $m=2^{2\epsilon k}$. The probability that $f$ is good for at least $t$
  of our $m$ randomly picked sets is bounded by
  \[\binom{m}{t} \left(2^{-k \log N /4}\right)^t \leq 2^{2\epsilon kt} 2^{-tk \log
    N /4} \leq 2^{-tk \log k /8}.\] Thus, with probability at most
  $2^{-N k\log k/8}$, a fixed assignment is good for more than $N$ of
  $m$ randomly chosen size-$k$ sets.  Taking a union bound over $m^N =
  2^{2\epsilon kN}$ possible assignments, the probability that {\em any}
  feasible assignment $f$ is good for more than $N$ sets is at most
  $2^{-Nk\log k/16}$. Thus there exists a selection of size-$k$ sets
  $S_1,\ldots,S_m$ such that no feasible assignment $f$ is good for more
  than $N$ of the $S_i$'s.

  Let $p_{M(\emptyset)}(S_i)$ be the probability that an assignment
  drawn from the distribution defined by running $M$ on the the empty
  set as input is good for $S_i$. Since any fixed assignment is good for
  at most $N$ of the $m$ sets, the average value of $p_{M(\emptyset)}$
  is at most $N/m$. Thus there exists a set, say $S_1$ such that
  $p_{M(\emptyset)}(S_1) \leq N/m$. Since $\card{S_i} = k$ and $M$ is
  $\epsilon$-differentially private, $p_{M(S_1)}(S_1) \leq exp(\epsilon
  k) p_{M(\emptyset)}(S_1) < \frac{1}{2}$. Thus with probability at
  least half, the assignment $M$ picks on $S_1$ is not good for $S_1$.
  Since $\OPT(S_1) = 1$, the expected approximation ratio of $M$ is at
  least $l/2 = \frac{\log m}{4\epsilon}$.

  Additionally, one can take $s$ distinct instances of the above
  problem, leading to a new instance on $s\cdot N$ elements and $s\cdot
  m$ sets. $\OPT$ is now $s$, while it is easy to check that any private
  algorithm must cost $\Omega(s\cdot l)$ in expectation. Thus the lower
  bound in fact rules out additive approximations.
\end{proof} 

\subsection{An Inefficient Algorithm for Weighted Set Cover}
\newcommand{\mycost}{\widetilde{cost}}
\newcommand{\sitj}{(\sigma_j,T_j)}

For completeness, we now show that the lower bound shown above is tight even in the weighted case, in the absence of computational constraints. Recall that we are given a collection $\ess$ of subsets of a universe $U$, and a private subset $R \subseteq U$ of elements to be covered. Additionally, we have weights on sets; we round up weights to powers of 2, so that sets in $\ess_j$ have weight exactly $2^{-j}$. Without loss of generality, the largest weight is 1 and the smallest weight is $w=2^{-L}$.

As before, we will output a permutation $\pi$ on $\ess$, with the understanding that the cost $cost(R,\pi)$ of a permutation $\pi$ on input $R$ is defined to be the total cost of the set cover resulting from picking the first set in the permutation containing $e$, for each $e \in R$.

Our algorithm constructs this permutation in a gradual manner. It maintains a permutation $\pi_j$  on $\cup_{i\leq j} \ess_i$ and a threshold $T_j$. In step $j$, given $\pi_{j-1}$ and $T_{j-1}$, the algorithm constructs a partial permutation $\pi_j$ on $\cup_{i\leq j} \ess_j$, and a threshold $T_j$. In each step, we use the exponential mechanism to select an extension with an appropriate base distribution $\mu_j$ and score function $q$. At the end of step $L$, we get our permutation $\pi=\pi_L$ on $\ess$.

Our permutations $\pi_j$ will all have a specific structure. The weight of the $i$th set in the permutation, as a function of $i$ will be a unimodal function that is non-increasing until $T_j$, and then non-decreasing.  In other words, $\pi_j$ contains sets from $\ess_j$ as a continuous block. The sets that appear before $T_j$ are said to be in the {\em bucket}. We call a partial permutation respecting this structure {\em good}. We say a good permutation $\pi$ {\em extends} a good partial permutation $\pi_j$ if $\pi_j$ and $\pi$ agree on their ordering on $\cup_{i\leq j} \ess_i$.

We first define the score function that is used in these choices. A natural objective function would be $cost(R,\pi_j)= \min_{\pi \mbox{ extends } \pi_j} cost(R,\pi)$, i.e. the cost of the optimal solution conditioned on respecting the partial permutation $\pi_j$. We use a slight modification of this score function: we force the cover to contain all sets from the bucket and denote as $\mycost(R,\pi)$ the resulting cover defined by $R$ on $\pi$.  We then define $\mycost(R,\pi_j)$ naturally as $\min_{\pi \mbox{ extends } \pi_j} \mycost(R,\pi)$.   We first record the following easy facts:

\begin{observation}
\label{obs:wsc-ineff}
For any $R$, $\min_{\pi} \mycost(R,\pi) = \min_{\pi} cost(R,\pi) = \OPT$. Moreover, for any $\pi$, $\mycost(R,\pi)\geq cost(R,\pi)$.
\end{observation}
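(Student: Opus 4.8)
The plan is to derive all three assertions from a single chain of inequalities,
\[
\OPT \;\ge\; \min_{\pi}\mycost(R,\pi) \;\ge\; \min_{\pi}cost(R,\pi) \;\ge\; \OPT,
\]
where the minima range over all orderings $\pi$ of $\ess$ (each carrying a threshold, so that its bucket is well defined). Since the two ends coincide, equality holds throughout, which is exactly $\min_{\pi}\mycost(R,\pi)=\min_{\pi}cost(R,\pi)=\OPT$; the ``moreover'' clause drops out as the pointwise form of the middle inequality.

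The middle inequality is the only conceptual point, and I would prove it pointwise. Fix an ordering $\pi$ with bucket $\mathcal{B}$ (the sets preceding its threshold), and let $F(\pi):=\{\text{the first set of }\pi\text{ containing }e \;:\; e\in R\}$ be the cover realizing $cost(R,\pi)$. The cover realizing $\mycost(R,\pi)$ is this same family with every set of $\mathcal{B}$ adjoined: an element of $R$ lying in some bucket set already has its first containing set inside $\mathcal{B}$, and an element lying in no bucket set contributes exactly the same set as before, so the ``forced'' cover equals $\mathcal{B}\cup F(\pi)$. Writing $C(\mathcal{F})=\sum_{S\in\mathcal{F}}C(S)$ for the cost of a collection and using that all set costs are nonnegative, $\mycost(R,\pi)=C(\mathcal{B}\cup F(\pi))\ge C(F(\pi))=cost(R,\pi)$, which is the ``moreover''; minimizing over $\pi$ gives the middle inequality.

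For the two outer inequalities: $\min_{\pi}cost(R,\pi)\ge\OPT$ because $F(\pi)$ is a feasible set cover of $R$ for every $\pi$ (each $e\in R$ lies in the set $F(\pi)$ assigns to it), so $cost(R,\pi)=C(F(\pi))\ge\OPT$. And $\min_{\pi}\mycost(R,\pi)\le\OPT$ is witnessed by one ordering: taking an optimal cover $\mathcal{C}^\star$ with $C(\mathcal{C}^\star)=\OPT$, let $\pi^\star$ list the sets of $\mathcal{C}^\star$ first (in any order), then the remaining sets, with the threshold placed before the first position so that $\mathcal{B}=\emptyset$. An empty bucket makes $\mycost(R,\pi^\star)=cost(R,\pi^\star)$, and since $\mathcal{C}^\star$ covers $R$ and precedes every set outside it, $F(\pi^\star)\subseteq\mathcal{C}^\star$; hence $\mycost(R,\pi^\star)=C(F(\pi^\star))\le C(\mathcal{C}^\star)=\OPT$. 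Chaining the three inequalities forces every term to equal $\OPT$.

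No step is hard; the only thing needing a moment's care is making ``the resulting cover defined by $R$ on $\pi$'' in the definition of $\mycost$ fully explicit, since that is what licenses the identification with $\mathcal{B}\cup F(\pi)$. I would also note explicitly that the first equality requires the minima over \emph{all} orderings: restricted to the \emph{good} permutations the algorithm builds, $\min_{\pi}cost(R,\pi)$ can strictly exceed $\OPT$ (a heavy set used by an optimal cover cannot always be moved ahead of the lighter sets that compete for its elements within a unimodal-weight ordering), so the observation is genuinely a statement about these unconstrained optima.
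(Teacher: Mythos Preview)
The paper records this as an observation without proof; your chain $\OPT \ge \min_\pi \mycost(R,\pi) \ge \min_\pi cost(R,\pi) \ge \OPT$ together with the pointwise identification of the $\mycost$-cover as $\mathcal{B}\cup F(\pi)$ is the natural argument and is correct.

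Your closing caveat, however, is mistaken, and it matters because the paper actually needs the statement over \emph{good} permutations: it later uses $\mycost(R,\pi_0)=\OPT$, where the minimum defining $\mycost(R,\pi_0)$ ranges only over good extensions. The claim that $\min_{\pi\text{ good}}cost(R,\pi)$ can exceed $\OPT$ is false. Rather than using an empty bucket, witness the upper bound by taking the bucket to be exactly an optimal cover $\mathcal{C}^\star$: within each weight class $\ess_j$, place $\mathcal{C}^\star\cap\ess_j$ before $\ess_j\setminus\mathcal{C}^\star$ in $\sigma_j$, and set the threshold so that precisely the $\mathcal{C}^\star$ sets lie in the bucket. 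This is a good permutation (each $\ess_j$ is a contiguous block; weights are non-increasing on the bucket and non-decreasing afterwards), the bucket already covers $R$, so $F(\pi)\subseteq\mathcal{C}^\star$ and $cost(R,\pi)=\mycost(R,\pi)=C(\mathcal{C}^\star)=\OPT$. A heavy set in $\mathcal{C}^\star$ \emph{can} be moved ahead of all lighter competitors---it just has to sit in the bucket to do so. Your empty-bucket witness works for unconstrained $\pi$ but not for good ones, which is the source of the confusion.
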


To get $(\pi_j,T_j)$ given $(\pi_{j-1},T_{j-1})$, we insert a permutation $\sigma_j$ of $\ess_j$ after the first $T_{j-1}$ elements of $\pi_{j-1}$, and choose $T_j$, where both $\sigma_j$ and $T_j$ are chosen using the exponential mechanism. The base measure on $\sigma_j$ is uniform and the base measure on $T_j-T_{j-1}$ is the geometric distribution with parameter $1/m^2$.

Let $\mycost(A,(\sigma_j,T_j))$ be defined as $\mycost(A,\pi_j)-\mycost(A,\pi_{j-1})$, where $\pi_j$ is constructed from $\pi_{j-1}$ and $(\sigma_j,T_j)$ as above. The score function we use to pick $\sitj$ is $score_j(R,\sitj) = 2^j \mycost(R,\sitj)$. Thus $Pr[(\sigma_j,T_j)] \propto (1/m^2(T_j-T_{j-1})) \exp(\eps score(\sitj))$.

Let the optimal solution to the instance contain $n_j$ sets from $\ess_j$. Thus $\OPT = \sum_{j} 2^{-j} n_j$. We first show that $\mycost(R,\pi_L)$ is $O(\OPT \log m/\eps)$. By Observation~\ref{obs:wsc-ineff}, the approximation guarantee would follow.

The probability that the $n_j$ sets in $\OPT$ fall in the bucket when picking from the base measure is at least $1/m^{3n_j}$. When that happens, $\mycost(R,\pi_j) = \mycost(R,\pi_{j-1})$. Thus the exponential mechanism ensures that except with probability $1/poly(m)$:
\begin{displaymath}
\mycost(R,\pi_j) \leq \mycost(R,\pi_{j-1}) + 4\cdot 2^{-j} \log (m^{3n_j})/\epsilon = \mycost(R,\pi_{j-1}) + 12\cdot 2^{-j}n_j \log m / \epsilon
\end{displaymath}

Thus with high probability,
\begin{eqnarray*}
\mycost(R,\pi_L) &\leq& \mycost(R,\pi_0) + 12 \sum_{j} 2^{-j} n_j \log m/\epsilon\\
&=& \OPT + 12 \OPT \log m /\epsilon
\end{eqnarray*}

Finally, we analyze the privacy.  Let $e\in U$ be an element such that the cheapest set covering $U$ has cost $2^{-j_e}$. Let $A$ and $B$ be two instances that differ in element $e$.  It is easy to see that $|\mycost(A,\sitj)-\mycost(B,\sitj)|$ is bounded by $2^{-j}$ for all $j$. We show something stronger:

\begin{lemma}
\label{lem:wsc-ineff-sens}
For any good partial permutation $\pi_j$ and any $A,B$ such that $A= B \cup \{e\}$,
\begin{displaymath}
|score(A,\sitj)-score_j(B,\sitj)| \leq
\left\{
\begin{array}{ll}
0 &\mbox{ if } j > j_e\\
2^{j_e-j+1} & \mbox{ if} j \leq j_e
\end{array}
\right.
\end{displaymath}
\end{lemma}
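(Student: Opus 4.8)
The plan is to reduce the whole statement to controlling a single one‑dimensional quantity. Set $x_i := \mycost(A,\pi_i)-\mycost(B,\pi_i)$. Since $A=B\cup\{e\}$, for any fixed good permutation $\pi$ the cover realizing $\mycost(A,\pi)$ contains the cover realizing $\mycost(B,\pi)$ (the forced bucket is identical and $A$ has only the one extra element to account for), so $\mycost(A,\pi)\ge\mycost(B,\pi)$ and hence $x_i\ge 0$. Because $score_j(R,\sitj)=2^j\,\mycost(R,\sitj)=2^j\big(\mycost(R,\pi_j)-\mycost(R,\pi_{j-1})\big)$, we get $|score_j(A,\sitj)-score_j(B,\sitj)|=2^j\,|x_j-x_{j-1}|$; so it suffices to prove $0\le x_i\le 2^{-j_e}$ for all $i$ (which covers the case $j\le j_e$) and $x_j=x_{j-1}$ whenever $j>j_e$.

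The two structural facts I would isolate first are: (i) every set containing $e$ has cost at least $2^{-j_e}$, hence lies in one of the weight classes $\ess_0,\dots,\ess_{j_e}$, and in particular is already present in $\pi_{j_e}$; and (ii) in a good (unimodal) permutation the non‑bucket suffix has non‑decreasing weights, so the \emph{first} non‑bucket set containing $e$ is the \emph{cheapest} non‑bucket set containing $e$. Given (i)--(ii), for any good $\pi$ extending $\pi_i$ the difference $\mycost(A,\pi)-\mycost(B,\pi)$ is either $0$ --- if some $e$‑covering set sits in $\pi$'s bucket --- or equals the cost of the cheapest non‑bucket $e$‑covering set; and in this second case the cheapest $e$‑covering set $S_e$ (of cost $2^{-j_e}$) must itself be non‑bucket (otherwise we are in the first case), so by (ii) the difference is exactly $2^{-j_e}$. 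Evaluating this at the extension of $\pi_i$ that minimizes $\mycost(B,\cdot)$ and using $\mycost(A,\pi_i)\le\mycost(A,\pi)$ for that $\pi$ gives $x_i\le 2^{-j_e}$. Thus for $j\le j_e$, $|score_j(A,\sitj)-score_j(B,\sitj)|=2^j\,|x_j-x_{j-1}|\le 2^{j-j_e}\le 1\le 2^{j_e-j+1}$.

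For $j>j_e$ I would argue that $x_j=x_{j-1}$ exactly. By (i) all $e$‑covering sets belong to classes $\le j_e\le j-1$, so they are all already in $\pi_{j-1}$; the passage from $\pi_{j-1}$ to $\pi_j$, and any further extension, only splices in sets of classes $\ge j$ at the valley and only grows the threshold by such sets, so it reorders none of the $e$‑covering sets, moves none of them into or out of the bucket, and introduces no new set containing $e$. Hence for \emph{every} good $\pi$ extending $\pi_{j-1}$ the quantity $\mycost(A,\pi)-\mycost(B,\pi)$ equals one and the same value $g$ (read off from the now‑frozen configuration of classes $\le j_e$ in $\pi_{j-1}$: it is $0$, or the cost of the cheapest non‑bucket $e$‑covering set). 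Taking the minimum of $\mycost(\cdot,\pi)$ over the extensions of $\pi_{j-1}$, and over the smaller family of extensions of $\pi_j$, gives $x_{j-1}=g=x_j$, so the two scores coincide.

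I expect the $j>j_e$ step to be the main obstacle: it requires pinning down precisely what ``a good permutation extending $\pi_i$'' is allowed to do and then verifying that the bucket‑membership and the mutual order of the classes $\le j_e$ genuinely freeze once stage $j_e$ is finished --- i.e. that no later splice at the valley and no later increase of the threshold can cross an already‑placed heavier ($e$‑covering) set. This is forced by the construction (thresholds are non‑decreasing, since the base measure on $T_j-T_{j-1}$ is geometric on the non‑negative integers, and each $\sigma_i$ is inserted at the valley), but it is the one place where the bookkeeping has to be spelled out; the bound $0\le x_i\le 2^{-j_e}$ and the $j\le j_e$ case are routine once (i) and (ii) are in hand.
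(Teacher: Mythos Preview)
Your approach is essentially the paper's own: both bound $x_i:=\mycost(A,\pi_i)-\mycost(B,\pi_i)$ in $[0,2^{-j_e}]$ by evaluating at the $B$-optimal extension $\pi_B$ (the paper's $j\le j_e$ paragraph is exactly this computation), and for $j>j_e$ both assert that once stage $j_e$ is complete the $e$-covering sets are frozen so the increment vanishes. Your $j\le j_e$ bound is even a factor of two sharper than the paper's, since you use $|x_j-x_{j-1}|\le\max(x_j,x_{j-1})\le 2^{-j_e}$ where the paper implicitly uses the triangle inequality to get $2\cdot 2^{-j_e}$; either bound suffices.

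One point worth tightening, which the paper's one-sentence $j>j_e$ argument also glosses over: the constancy of $g(\pi):=\mycost(A,\pi)-\mycost(B,\pi)$ across good extensions of $\pi_{j-1}$ does not follow from the $e$-covering sets being frozen alone. The value $g(\pi)$ is $C(S^*)$ only when $S^*$ is \emph{not} already in the $B$-cover under $\pi$, and whether $S^*$ happens to be the first set for some $b\in B$ could in principle depend on how the classes $\ge j$ are arranged. What actually pins this down is the stronger structural fact you hint at (``splices at the valley''): in any good extension of $\pi_{j-1}$, \emph{every} set of class $\ge j$ is inserted strictly before the non-bucket portion of $\pi_{j-1}$, so the predecessor set of $S^*$ is exactly $\{\text{predecessors of }S^*\text{ in }\pi_{j-1}\}\cup\bigcup_{j'\ge j}\ess_{j'}$, a set that is the same for all extensions. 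Hence the indicator ``$S^*$ is first for some $b\in B$'' is itself fixed, and $g$ is constant. Once this is said, $x_{j-1}=g=x_j$ follows as you wrote.
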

\begin{proof}
Let $\pi_B$ be the permutation realizing $\mycost(B)$. For $j \leq j_e$, if $e$ is covered by a set in the bucket in $\pi_B$, then the cost of $\pi_B$ is no larger in instance $A$ and hence $\mycost(A,\pi_j) = \mycost(B,\pi_j)$\footnote{We remark that this is not true for the function $cost$, and is the reason we had to modify it to $\mycost$.}. In the case that the bucket in $\pi_B$ does not cover $e$, then $\mycost(A,\pi_j) \leq \mycost(A,\pi_B) = \mycost(B,\pi_B) + 2^{-j_e} = \mycost(B,\pi_j) + 2^{-j_e}$. Since this also holds for $\pi_{j-1}$, this implies the claim from $j \leq j_e$.

For $j>j_e$, observe that the first set in $\pi_B$ that covers $e$ is fully determined by the partial permutation $\pi_j$, since the sets in $\cup_{i>j_e} \ess_i$ do not contain $e$. Thus $\mycost(A,\sitj) = \mycost(B,\sitj)$ and the claim follows.
\end{proof}

Then for any $j \leq j_e$, lemma~\ref{lem:wsc-ineff-sens} implies that for any $(\sigma_j,T_j)$,  $\exp(\epsilon(score(A,(\sigma_j,T_j))-score(B,(\sigma_j,T_j)))) \in [\exp(-\epsilon 2^{j_e-j+1}), \exp(\epsilon 2^{j_e-j+1})]$. Thus
\begin{eqnarray*}
\frac{\Pr[\sigma_j,T_j| A]}{\Pr[\sigma_j,T_j|B]} &\in & [\exp(-2^{j-j_e+2}\epsilon),\exp(2^{j-j_e+2}\epsilon)]
\end{eqnarray*}
Moreover, for any $j > j_e$, this ratio is 1. Thus
\begin{eqnarray*}
\frac{\Pr[\sigma_j,T_j| A]}{\Pr[\sigma_j,T_j|B]} &\in& [\Pi_{j\leq j_e}\exp(-2^{j-j_e+2}\epsilon)
,\Pi_{j\leq j_e}\exp(2^{j-j_e+2}\epsilon)]\\
&\subseteq& [\exp(-8\epsilon),\exp(8\epsilon)],
\end{eqnarray*}
which implies $8\epsilon$-differential privacy.

\section{Facility Location}

Consider the metric facility location problem: we are given a metric
space $(V,d)$, a facility cost $f$ and a (private) set of demand points
$D \subseteq V$. We want to select a set of facilities $F \subseteq V$
to minimize $\sum_{v \in D} d(v,F) + f\cdot |F|$. (Note that we assume
``uniform'' facility costs here instead of different costs $f_i$ for
different $i \in V$.) Assume that distances are at least $1$, and let
$\Delta = \max_{u,v} d(u,v)$ denote the diameter of the space.

We use the result of Fakcharoenphol et al.~\cite{FRT03} that any metric
space on $n$ points can be approximated by a distribution over
dominating trees with expected stretch $O(\log n)$; moreover all the
trees in the support of the distribution are rooted $2$-HSTs---they have
$L = O(\log \Delta)$ levels, with the leaves (at level $0$) being
exactly $= V$, the internal nodes being all Steiner nodes, the root
having level $L$, and all edges between levels~$(i+1)$ and~$i$ having
length $2^i$. Given such a tree $T$ and node $v$ at level $i$, let $T_v$
denote the (vertices in) the subtree rooted at $v$.

By \lref[Corollary]{cor:facloc-lbd}, it is clear that we cannot output
the actual set of facilities, so we will instead output instructions in
the form of an HST $T = (V_T, E_T)$ and a set of facilities $F \sse
V_T$: each demand $x \in D$ then gets assigned to its ancestor facility
at the lowest level in the tree. (We guarantee that the root is always
in $F$, hence this is well-defined.)  Now we are charged for the
connection costs, and for the \emph{facilities that have at least one
  demand assigned to them}.

\begin{algorithm}
  \caption{The Facility Location Algorithm}
  \begin{algorithmic}[1]
    \STATE \textbf{Input:} Metric $(V,d)$, facility cost $f$, demands $D
    \subseteq V$,$\epsilon$.
    \STATE Pick a random distance-preserving FRT tree $T$; recall this
    is a $2$-HST with $L = O(\log \Delta)$ levels.
    \STATE \textbf{let} $F \gets $ root $r$.
    \FOR{$i=1$ to $L$}
      \FOR{all vertices $v$ at level $i$}
        \STATE \textbf{let} $N_v = |D \cap T_v|$ and $\widetilde{N_v} = N_v +
        \mathrm{Lap}(L/\epsilon)$. \label{step-fl-noise}
        \STATE \textbf{if} $\widetilde{N_v} \cdot 2^i > f$ \textbf{then}
         $F \gets F \cup v$.
      \ENDFOR
    \ENDFOR
    \STATE \textbf{output} $(T, F)$: each demand $x \in D$ is assigned
    to the ancestor facility at lowest level in $T$.
  \end{algorithmic}
\end{algorithm}

%\subsection{Analysis}

\begin{theorem}
  \label{thm:facloc}
  The above algorithm preserves $\epsilon$-differential
  privacy and outputs a solution of cost $\OPT \cdot O(\log n \log \Delta) \cdot \frac{\log \Delta}{\epsilon}
  \log \left(\frac{n \log^2 \Delta}{\epsilon}\right)$.
\end{theorem}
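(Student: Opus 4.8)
The plan is to exploit that the tree $T$ is drawn independently of the private data $D$, so by the closure of differential privacy under post-processing it suffices to show that, conditioned on any fixed $T$, releasing the collection of noisy counts $\{\widetilde N_v\}_{v\in V_T}$ is $\epsilon$-differentially private; the set $F$ and the final assignment are deterministic functions of $T$ and these counts. Adding or deleting one demand $x$ changes $N_v$ by exactly one on the $L$ ancestors of $x$ in $T$ and leaves every other $N_v$ unchanged. Each count is released through independent $\mathrm{Lap}(L/\epsilon)$ noise, hence each of the $L$ affected coordinates is $(\epsilon/L)$-differentially private (sensitivity one, scale $L/\epsilon$) and the unaffected coordinates can be coupled identically; writing the output density ratio as $\prod_{v} p_{N_v}(\widetilde N_v)/p_{N_v'}(\widetilde N_v) \le e^{(\epsilon/L)\cdot L} = e^{\epsilon}$ (the product being over the $L$ changed ancestors) gives the claim, using the composition property recorded in \lref[Section]{sec:notation}.

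\textbf{Utility: reduction to the HST.} Since the FRT embedding produces a dominating $2$-HST with expected stretch $O(\log n)$ over the leaf metric, and the facility term $f|F|$ does not depend on the metric, the optimal solution $F^*$ of the original instance has tree cost $\E_T[\cost_T(F^*)] \le O(\log n)\cdot\OPT$, so $\E_T[\OPT_T] \le O(\log n)\,\OPT$. Thus I would fix $T$ and bound the algorithm's cost measured in $T$ against $\OPT_T$, paying the $O(\log n)$ only at the end when taking expectation over $T$. On a $2$-HST a leaf served by its level-$j$ ancestor pays less than $2^j$, opening a level-$i$ vertex $v$ ``covers'' the $N_v$ demands of $T_v$, and the rule ``open $v$ iff $\widetilde N_v\, 2^i > f$'' is exactly the natural LP-rounding/greedy rule for tree facility location up to the perturbation of the counts. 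I would first treat the \emph{noiseless} version ($\widetilde N_v$ replaced by $N_v$) and show, by a level-by-level charging argument — charging a demand's connection cost and the facility cost of the vertices on its path to its lowest open ancestor against what \emph{any} solution must pay for the demands sitting in the corresponding subtrees — that it is an $O(1)$-approximation to $\OPT_T$.

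\textbf{Utility: controlling the noise, and the main obstacle.} The remaining and hardest step is bounding the overhead from using $\widetilde N_v$ in place of $N_v$. A union bound over the $\mathrm{poly}(n,\log\Delta)$ tree vertices shows that, except with inverse-polynomial probability, every Laplace sample has magnitude at most $B = O\!\big(\tfrac{L}{\epsilon}\log\tfrac{nL\log\Delta}{\epsilon}\big)$; on the complementary event one bounds cost crudely (e.g.\ by $n\Delta + fn$) so its contribution to the expectation is negligible. Conditioned on the good event, a level-$i$ vertex is wrongly opened only when $N_v\,2^i \in (f - B\,2^i,\, f]$ and wrongly closed only when $N_v\,2^i \in (f,\, f+B\,2^i]$, i.e.\ the effective per-level threshold lies in an interval of width $B$ around $f/2^i$. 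The plan is to fold these boundary effects into an $O(B)$-type additive overhead: extra facility cost from false positives is controlled because a \emph{charged} open vertex still has $N_v > f/2^i - B$ and the number of charged facilities is bounded by the number of demand-bearing leaves, while extra connection cost from false negatives is controlled by noting that a demand whose lowest open ancestor is at level $j$ has $N_u \le f/2^i + B$ for each of its ancestors $u$ at levels $i<j$, so every subtree it traverses is ``light'' and the cost of routing it upward can be charged to the facility cost that a near-optimal tree solution must pay for those light subtrees. Combining the noiseless $O(1)$-bound with this overhead gives cost on $T$ at most $O(1)\cdot\OPT_T$ plus an additive term of order $B$ times a polylogarithmic factor; using that $\OPT\ge f$ whenever $D\ne\emptyset$ (the statement being trivial otherwise) absorbs that additive term into a multiplicative $O\!\big(\tfrac{\log\Delta}{\epsilon}\log\tfrac{n\log^2\Delta}{\epsilon}\big)$ factor, and taking the expectation over $T$ contributes the extra $O(\log n)$, yielding the stated bound. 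I expect the delicate part to be precisely this false-negative connection-cost accounting: ensuring that a demand pushed up the tree by a noisy miss is charged only a polylogarithmic blow-up over what it pays in $\OPT_T$, rather than the full diameter $\Delta$.
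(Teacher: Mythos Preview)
Your privacy argument matches the paper's and is correct. The utility argument has real gaps in both the false-positive and false-negative accounting, and the paper uses different (and simpler) ideas for each.

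For spurious facilities, your stated control---``a charged open vertex still has $N_v > f/2^i - B$ and the number of charged facilities is bounded by the number of demand-bearing leaves''---yields only the trivial bound $f\,|D|$, which can vastly exceed $\OPT$. The paper's argument instead splits levels. For ``low'' levels $i$ with $f \ge \frac{L\,2^{i+1}}{\epsilon}\log\frac{L^2 n}{\epsilon}$, the Laplace tail makes each spurious opening have probability at most $1/(Ln)$, so these contribute $\le f$ in expectation. For higher levels, a \emph{charged} spurious facility $v$ at level $i$ has some demand $x\in T_v$ assigned to it; since the noiseless rule does not open $v$, in the noiseless solution $x$ is served strictly above $v$, so the tree edge of length $2^i$ just above $v$ belongs to the noiseless connection cost. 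Charging the facility cost $f$ to that edge gives charge-per-unit-length $f/2^i \le \frac{2L}{\epsilon}\log\frac{L^2 n}{\epsilon}$, and each edge is charged at most once. This is precisely where the multiplicative $\frac{\log\Delta}{\epsilon}\log(\cdot)$ factor on the noiseless cost comes from; the idea of charging spurious facilities to \emph{noiseless connection edges} is what your sketch is missing.

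For false negatives, the charge you propose---``to the facility cost that a near-optimal tree solution must pay for those light subtrees''---need not exist: from $N_u\le f/2^i+B$ you cannot conclude that $\OPT_T$ opens anything inside $T_u$, so there may be nothing to charge to. The paper does not go through a high-probability good event here at all; it uses the symmetry and independence of the Laplace noise directly. Every ancestor of $x$ at a level $\ge i_1$ (where $i_1$ is $x$'s noiseless service level) satisfies $N_v\,2^i\ge f$, so each fails to open with probability at most $1/2$, independently; hence the expected noisy connection cost of $x$ is at most $\sum_{j\ge 0}2^{-j}\cdot 2^{\,i_1+j}\le L\cdot 2^{i_1}$, a clean factor-$L$ blowup over the noiseless connection. (A smaller issue: your ``crude bound $n\Delta+fn$ on the bad event'' is not obviously $o(\OPT\cdot\text{target})$, since $\OPT$ can be as small as $f$ while $\Delta$ is unbounded; you would need the failure probability to be inverse-polynomial in $\Delta$ as well, which your choice of $B$ does not provide.)
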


For the privacy analysis, instead of outputting the set $F$ we could
imagine outputting the tree $T$ and all the counts $\widetilde{N_v}$;
this information clearly determines $F$. Note that the tree is
completely oblivious of the demand set. Since adding or removing
any particular demand vertex can only change $L$ counts, and the noise
added in Step~\ref{step-fl-noise} gives us $\eps/L$-differential
privacy, the fact that differential privacy composes linearly gives us
the privacy claim.

For the utility analysis, consider the ``noiseless'' version of the
algorithm which opens a facility at $v$ when $N_v \cdot 2^i \geq f$. It
can be shown that this ideal algorithm incurs cost at most $f + O(\log n
\log \Delta) \cdot \OPT$ (see, e.g.,~\cite[Theorem~3]{Indyk04}).  We
now have two additional sources of error due to the noise:
\begin{OneLiners}
\item Consider the case when $N_v\cdot2^i \geq f > \widetilde{N_v} \cdot
  2^i$, which increases the connection cost of some demands in $D$.
  However, the noise is symmetric, and so we overshoot the mark with
  probability at most $1/2$---and when this happens the $2$-HST property
  ensures that the connection cost for any demand $x$ increases by at
  most a factor of $2$.  Since there are at most $L = O(\log \Delta)$
  levels, the expected connection cost increases by at most a factor
  of~$L$.
\item Consider the other case when $N_v\cdot2^i < f \leq \widetilde{N_v}
  \cdot 2^i$, which increases the facility cost. Note that if $N_v \cdot
  2^i \geq f/2$, then opening a facility at $v$ can be charged again in
  the same way as for the noiseless algorithm (up to a factor of~$2$).
  Hence suppose that $\widetilde{N_v} - N_v \geq \frac12
  (f/2^i)$, and hence we need to consider the probability $p_i$ of the
  event that $\textrm{Lap}(L/\epsilon) > \frac12 (f/2^i)$, which is just
  $\frac{L}{\epsilon} \exp(- \frac{f}{2^{i+1}} \frac{\epsilon}{L})$.

  \medskip Note that if for some value of $i$, $f \geq
  \frac{L\,2^{i+1}}{\epsilon} \log \frac{L^2n}{\eps}$, the above
  probability $p_i$ is at most $1/Ln$, and hence the expected cost of
  opening up spurious facilities at nodes with such values of $i$ is at
  most $(1/Ln)\cdot Ln \cdot f = f$. (There are $L$ levels, and at most
  $n$ nodes at each level.)

  \medskip For the values of $i$ which are higher; i.e., for which $f <
  \frac{L\,2^{i+1}}{\epsilon} \log \frac{L^2n}{\eps}$, we pay for this
  facility only if there is a demand $x \in D$ in the subtree below $v$
  that actually uses this facility. Hence this demand $x$ must have used
  a facility above $v$ in the noiseless solution, and we can charge the
  cost $f$ of opening this facility to length of the edge $2^{i+1}$
  above $v$. Thus the total cost of spurious facilities we pay for is
  the cost of the noiseless solution times a factor $\frac{L}{\epsilon}
  \log \frac{L^2n}{\eps}$.
\end{OneLiners}
Thus the expected cost of the solution is at most
\begin{gather}
  \OPT \cdot O(\log n \log \Delta) \cdot \frac{\log \Delta}{\epsilon}
  \log \left(\frac{n \log^2 \Delta}{\epsilon}\right).
\end{gather}

\section{Combinatorial Public Projects (Submodular Maximization)}
\label{sec:cpp}

Recently Papadimitriou et al.\cite{PSS08} introduced the Combinatorial Public
Projects Problem (CPP Problem) and showed that there is a succinctly representable version of the problem for which, although there exists a constant factor approximation algorithm, no efficient \emph{truthful} algorithm can guarantee an approximation ratio better than $m^{\frac{1}{2}-\epsilon}$, unless $NP \subseteq BPP$. Here we adapt our set cover algorithm to give a privacy preserving approximation to the CPP problem within logarithmic (additive) factors.

In the CPP problem, we have $n$ agents and $m$ resources publicly
known. Each agent submits a private non-decreasing and \emph{submodular} valuation
function $f_i$ over subsets of resources, and our goal is to select a size-$k$
subset $S$ of the resources to maximize $\sum_{i=1}^nf_i(S)$. We assume that we have oracle access to the functions $f_i$. Note that since each $f_i$ is submodular, so is $\sum_{i=1}^nf_i(S)$, and
our goal is to produce a algorithm for submodular maximization that
preserves the privacy of the individual agent valuation
functions. Without loss of generality, we will scale the valuation
functions such that they take maximum value 1: $\max_{i,S}f_i(S) =
1$.

Once again, we have an easy computationally inefficient algorithm.
\begin{theorem}
The exponential mechanism when used to choose $k$ sets runs in time
$O({m \choose k} poly(n))$ and has expected quality at least $(1-1/e) OPT - O(\log {m
  \choose k}/\epsilon)$.
\end{theorem}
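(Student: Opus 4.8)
The plan is to instantiate the exponential mechanism over the ground set $R$ consisting of all ${m\choose k}$ size-$k$ subsets of resources, with the score function $q(A,S) = \sum_{i} f_i(S)$, where $A$ ranges over collections of agent valuation functions and $S \in R$. The value (``quality'') of an output $S$ is exactly $q(A,S)$, and $\OPT = \max_{S} q(A,S)$ by definition of the CPP problem. First I would bound the sensitivity of $q$: two instances $A,B$ differing in a single agent $i$ satisfy $|q(A,S) - q(B,S)| = f_i(S) \le 1$ for every $S$, using the normalization $\max_{i,S} f_i(S) = 1$; thus $\Delta = 1$, and running $\mathcal{E}^{\epsilon/2}_q$ yields $\epsilon$-differential privacy by the stated guarantee for the exponential mechanism.

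For the running time, computing a single weight $\exp((\epsilon/2)\,q(A,S))$ costs $n$ oracle queries to the $f_i$'s, so forming the weights of all ${m\choose k}$ candidate subsets, normalizing, and sampling take $O({m\choose k}\,\mathrm{poly}(n))$ time in total.

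For the utility I would apply \eqref{eqn:expmech} with $|R| = {m\choose k}$ and $|R_{\OPT}| \ge 1$ (there is at least one optimal size-$k$ subset), obtaining $\Pr[\,q(A, \mathcal{E}^{\epsilon/2}_q(A)) < \OPT - (2/\epsilon)\ln{m\choose k} - (2/\epsilon)t\,] \le e^{-t}$. Writing $X = \OPT - q(A, \mathcal{E}^{\epsilon/2}_q(A)) \ge 0$ and integrating this tail over $t \ge 0$ gives $\E[X] \le (2/\epsilon)\ln{m\choose k} + O(1/\epsilon) = O(\ln{m\choose k}/\epsilon)$, hence $\E[q(A, \mathcal{E}^{\epsilon/2}_q(A))] \ge \OPT - O(\ln{m\choose k}/\epsilon)$. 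Since the valuations are non-negative, $\OPT \ge (1-1/e)\OPT$, so this is at least $(1-1/e)\OPT - O(\ln{m\choose k}/\epsilon)$, which is the claimed bound (in fact one gets the stronger leading factor $1$).

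I do not expect a genuine obstacle here: the only mildly delicate step is the routine passage from the high-probability guarantee \eqref{eqn:expmech} to the in-expectation statement, which is the standard bound on how much a geometrically decaying tail exceeds its threshold ($\int_0^\infty e^{-t}\,dt$-type estimate). The $(1-1/e)$ factor in the statement is present only so that it lines up with the efficient CPP algorithm presented next; the inefficient exponential mechanism actually achieves the sharper additive-only guarantee, matching the information-theoretic lower bound recorded in Table~\ref{tab:results}.
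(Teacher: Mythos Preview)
Your proposal is correct and is exactly the direct application of the exponential mechanism the paper has in mind; indeed the paper states this theorem without proof, treating it as an immediate consequence of \eqref{eqn:expmech}. Your observation that the mechanism actually attains $\OPT - O(\ln{m\choose k}/\epsilon)$ (with leading coefficient $1$ rather than $1-1/e$) is also right and matches the information-theoretic bound recorded in \lref[Table]{tab:results}.
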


We next give a computationally efficient algorithm with slightly worse
guarantees. We adapt our unweighted set cover algorithm, simply
selecting $k$ items greedily:
\begin{algorithm}
  \caption{CPP Problem}
  \begin{algorithmic}[1]
    \STATE \textbf{Input:} A set of $M$ of $m$ resources, private
    functions $f_1,\ldots,f_n$, a number of resources $k$, $\epsilon,\delta$.
    \STATE \textbf{let} $M_1 \leftarrow M$, $F(x) := \sum_{i=1}^m
    f_i(x)$, $S_1 \leftarrow \emptyset$, $\epsilon' \leftarrow \frac{\epsilon}{e\ln (e/\delta)}$.
    \FOR{$i = 1$ to $k$}
    \STATE \textbf{pick} a resource $r$ from $M_i$ with probability
    proportional to $\exp(\epsilon' (F(S_i + \{r\}) - F(S_i)))$.
    \STATE \textbf{let} $M_{i+1} \leftarrow M_i - \{r\}$, $S_{i+1} \leftarrow S_i + \{r\}$.
    \ENDFOR
    \STATE \textbf{Output} $S_{k+1}$.
\end{algorithmic}
\end{algorithm}

\subsection{Utility Analysis}
\begin{theorem}
Except with probability $O(1/\textrm{poly}(n))$, the algorithm for the
CPP problem returns a solution with quality at least $(1-1/e) \OPT - O(k\log
m/\epsilon').$

\end{theorem}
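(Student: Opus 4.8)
The plan is to view the algorithm as a noisy version of the classical greedy algorithm for monotone submodular maximization: each greedy step loses an additive $O(\log m/\epsilon')$ to the exponential mechanism, so over the $k$ steps the total additive loss is $O(k\log m/\epsilon')$, while the multiplicative $(1-1/e)$ guarantee of greedy is preserved.

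First I would record the standard greedy averaging inequality. Fix an optimal solution $O$ with $|O|=k$ and $F(O)=\OPT$, where $F=\sum_i f_i$ is itself non-negative, monotone, and submodular. For \emph{any} set $S$, monotonicity and submodularity give $\OPT=F(O)\le F(O\cup S)\le F(S)+\sum_{r\in O}\big(F(S+\{r\})-F(S)\big)$; since the sum has at most $k$ terms, there is some $r\notin S$ whose marginal value $F(S+\{r\})-F(S)$ is at least $(\OPT-F(S))/k$. Instantiated at $S=S_i$ in iteration $i$ (where $|S_i|=i-1<k$), this says the \emph{best} available marginal gain in that iteration is at least $(\OPT-F(S_i))/k$.

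Next I would apply the exponential-mechanism tail bound \eqref{eqn:expmech}. Step~4 of the algorithm is exactly $\mathcal{E}^{\epsilon'}_q$ run over the $|M_i|\le m$ surviving resources with score $q(r)=F(S_i+\{r\})-F(S_i)$. Taking $t=\Theta(\log m)$ in \eqref{eqn:expmech}, with probability at least $1-1/m^2$ the chosen resource $r$ has $q(r)\ge\max_{r'}q(r')-\gamma$, where $\gamma:=O(\log m/\epsilon')$. A union bound over the $k\le m$ iterations guarantees this holds in \emph{every} iteration except with probability $O(k/m^2)$; exactly as in the proof of \lref[Theorem]{thm:kmedian}, increasing the constant in $t$ drives this failure probability below any prescribed inverse polynomial in the instance size.

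Finally I would run the usual recursion on the residual gap $\Delta_i:=\OPT-F(S_i)$. Conditioning on the good event, combining the two bounds above yields $F(S_{i+1})-F(S_i)\ge \Delta_i/k-\gamma$, i.e.\ $\Delta_{i+1}\le(1-1/k)\Delta_i+\gamma$. Unrolling from $\Delta_1=\OPT-F(\emptyset)\le\OPT$,
\[
\Delta_{k+1}\ \le\ (1-1/k)^k\OPT+\gamma\sum_{j=0}^{k-1}(1-1/k)^j\ \le\ \OPT/e+\gamma k,
\]
so $F(S_{k+1})\ge(1-1/e)\OPT-\gamma k=(1-1/e)\OPT-O(k\log m/\epsilon')$, as desired. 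The only point that genuinely needs care is this last display: the per-step error $\gamma$ must accumulate only \emph{additively} over the $k$ rounds rather than compounding, which is precisely the bound $\sum_{j<k}(1-1/k)^j\le k$. Apart from this and the routine union bound, there is no real obstacle --- the content of the argument is just the combination of the textbook greedy analysis with \eqref{eqn:expmech}.
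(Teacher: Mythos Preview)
Your proposal is correct and follows essentially the same approach as the paper: establish the greedy averaging inequality, invoke \eqref{eqn:expmech} to bound the per-step loss to the exponential mechanism, union-bound over the $k$ steps, and then unroll the recursion $\Delta_{i+1}\le(1-1/k)\Delta_i+\gamma$ to get the final bound. The paper's proof is terser and stops short of writing out the unrolling explicitly, but the argument is identical.
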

\begin{proof}
Since $F$ is submodular and there exists a set $S^*$ with $|S| = k$
and $F(S) = \OPT$, there always exists a resource $r$ such that $F(S_i
+ \{r\})-F(S_i) \geq (\OPT-F(S_i))/k$. If we always selected the
optimizing resource, the distance to $\OPT$ would decrease by a factor
of $1-1/k$ each round, and we would achieve an approximation factor of
$1-1/e$. Instead, we use the exponential mechanism  which, by
\eqref{eqn:expmech}, selects a resource within $4\ln m/\epsilon'$ of
the optimizing resource with probability at least $1-1/m^3$. With
probability at least $1-k/m^3$ each of the $k$ selections decreases
$\OPT - F(S_i)$ by a factor of $(1-1/k)$, while increasing it by at
most an additive $4\ln m/\epsilon'$, giving $(1-1/e)\OPT + O(k\ln
m/\epsilon')$.
\end{proof}
\subsection{Privacy Analysis}
\begin{theorem}
For any $\delta \leq 1/2$, the CPP problem algorithm preserves
$(\epsilon'(e-1)\ln(e/\delta),\delta)$-differential privacy.
\end{theorem}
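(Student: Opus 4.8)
The plan is to follow the privacy proof of the unweighted set cover algorithm (\lref[Theorem]{thm:unwtdsetcoverprivacy}) almost verbatim, since this CPP algorithm is its direct adaptation: it runs the greedy exponential mechanism for $k$ steps, with the marginal value $F(S_i\cup\{r\})-F(S_i)$ playing the role of the ``uncovered size'' $|S\cap R_i|$. Fix an output $\sigma=(r_1,\dots,r_k)$ and two instances $A,B$ differing in a single agent's valuation; it suffices to treat the case where this valuation, call it $f_j$, is present in one instance and absent from the other (the replacement case follows by composition, losing a factor of $2$), so say $F^A=F^B+f_j$ with $f_j$ nonnegative, monotone, and $\max_S f_j(S)\le 1$. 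Writing $g^C_i(r)=F^C(S_i\cup\{r\})-F^C(S_i)$ and $\Delta_i(r)=f_j(S_i\cup\{r\})-f_j(S_i)=g^A_i(r)-g^B_i(r)\in[0,1]$, the selection rule gives
\[
\frac{\Pr[M(A)=\sigma]}{\Pr[M(B)=\sigma]}=\prod_{i=1}^{k} \exp\!\big(\epsilon'\,\Delta_i(r_i)\big)\cdot\frac{\sum_{r}\exp(\epsilon'\,g^B_i(r))}{\sum_{r}\exp(\epsilon'\,g^A_i(r))}.
\]

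First I would handle the ``easy'' direction, where $f_j$ is the extra agent in $A$. Since $g^A_i\ge g^B_i$ pointwise, each partition-function ratio is at most $1$; and the leading terms telescope,
\[
\prod_{i=1}^{k}\exp(\epsilon'\,\Delta_i(r_i))=\exp\!\Big(\epsilon'\sum_{i=1}^{k}\big(f_j(S_{i+1})-f_j(S_i)\big)\Big)=\exp\!\big(\epsilon'(f_j(S_{k+1})-f_j(\emptyset))\big)\le e^{\epsilon'},
\]
which is at most $e^{\epsilon_0}$ for $\epsilon_0:=\epsilon'(e-1)\ln(e/\delta)$. This telescoping identity $\sum_i\Delta_i(r_i)=f_j(S_{k+1})-f_j(\emptyset)\le 1$ is the crucial structural point: the bounded total marginal of $f_j$ replaces the fact ``element $I$ is covered at most once'' from the set cover proof.

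Next I would handle the other direction, where $f_j$ is the extra agent in $B$ (the roles of $A$ and $B$ above swapped). Now the leading terms are at most $1$, while, by convexity of $x\mapsto e^{\epsilon' x}$ on $[0,1]$, the $i$-th partition-function ratio is at most
\[
\frac{\sum_r \exp(\epsilon'g^A_i(r))\big(1+(e^{\epsilon'}-1)\Delta_i(r)\big)}{\sum_r\exp(\epsilon'g^A_i(r))}=1+(e^{\epsilon'}-1)\,q_i,\qquad q_i:=\sum_r p^A_i(r)\,\Delta_i(r),
\]
where $p^A_i$ is the step-$i$ selection distribution on the run on $A$; hence the whole ratio is at most $\exp\!\big((e^{\epsilon'}-1)\sum_{i=1}^k q_i\big)$. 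Since $q_i$ depends only on $S_i$ and hence on the prefix of $\sigma$, call $\sigma$ \emph{$q$-bad} when $\sum_{i=1}^k q_i>q$ for $q:=\ln(e/\delta)$. For $q$-good $\sigma$ the ratio is at most $\exp((e^{\epsilon'}-1)\ln(e/\delta))\le\exp((e-1)\epsilon'\ln(e/\delta))=e^{\epsilon_0}$, using $e^{\epsilon'}-1\le(e-1)\epsilon'$ for $\epsilon'\le1$. And since $q_i=\E[f_j(S_{i+1})-f_j(S_i)\mid\mathcal F_{i-1}]$ with the increments in $[0,1]$ and $\sum_{i\le k}(f_j(S_{i+1})-f_j(S_i))\le 1$ almost surely on the run on $A$, the argument behind \lref[Lemma]{lem:deltafix} bounds the $M(A)$-probability of a $q$-bad output by $\delta$. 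Splitting any event $\mathcal P$ into its $q$-good and $q$-bad parts exactly as in \lref[Theorem]{thm:unwtdsetcoverprivacy} then gives $\Pr[M(A)\in\mathcal P]\le e^{\epsilon_0}\Pr[M(B)\in\mathcal P]+\delta$, as claimed.

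The hard part will be the bad-event bookkeeping in the second direction: one must check that \lref[Lemma]{lem:deltafix}, stated there for $\{0,1\}$-valued coverage indicators, still yields a $\delta$-bound when the per-step contributions $f_j(S_{i+1})-f_j(S_i)$ are arbitrary reals in $[0,1]$ that merely telescope to a quantity $\le 1$ (the underlying $\prod_i(1-p_i)$-style survival estimate does extend, but needs a short re-derivation), and one must keep the constants aligned so that the final parameter is exactly $\epsilon'(e-1)\ln(e/\delta)$ rather than a constant-factor-larger quantity --- this is where the sharp form of that lemma, the bound $e^{\epsilon'}-1\le(e-1)\epsilon'$, and the hypothesis $\delta\le 1/2$ enter. (Submodularity of the $f_i$ is not used for privacy; only nonnegativity, monotonicity, and the normalization $\max f_i\le 1$ matter.)
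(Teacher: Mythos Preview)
Your proposal is correct and mirrors the paper's proof in all essentials: the telescoping bound $\sum_i\Delta_i(r_i)\le 1$ in the easy direction, the estimate $\prod_i\E_i[e^{\epsilon'\beta_i}]\le\exp\!\big((e-1)\epsilon'\sum_i\E_i[\beta_i]\big)$ in the hard direction, and the good/bad split are exactly the paper's argument. The one refinement is that where you invoke \lref[Lemma]{lem:deltafix} and flag that it needs extension to $[0,1]$-valued increments, the paper carries this out explicitly as the separate \lref[Lemma]{lem:headsprobcont}; its proof (a reverse induction requiring a nontrivial convexity estimate $E_R[\exp((\mu-qR)/(1-R))]\le 1$) is somewhat more than ``a short re-derivation'', and the extra factor of $e$ in its conclusion $\Pr[Y>q]\le e\,e^{-q}$ is precisely why the threshold is $q=\ln(e/\delta)$ rather than $\ln(1/\delta)$.
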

\begin{proof}
Let $A$ and $B$ be two $CPP$ instances that differ in a single agent $I$
with utility function $f_I$. We show that the output set of resources,
even revealing the
order in which the resources were chosen, is privacy preserving. Fix
some ordered set of $k$ resources, $\pi_1,\ldots,\pi_k$ write $S_i =
\bigcup_{j=1}^{i-1}\{\pi(j)\}$ to denote the first $i-1$ elements, and
write $s_{i,j}(A) = F_A(S_i + \{j\}) - F_A(S_i)$ to denote the
marginal utility of item $j$ at time $i$ in instance $A$. Define
$s_{i,j}(B)$ similarly for instance $B$.  We
consider the relative probability of our mechanism outputting ordering
$\pi$ when given inputs $A$ and $B$:
\ifnum\fullversion=1
\[ \frac{\Pr[M(A)=\pi]}{\Pr[M(B)=\pi]} =
 \prod_{i=1}^k\left(\frac{\exp(\epsilon'\cdot
     s_{i,\pi_i}(A))/(\sum_j\exp(\epsilon'\cdot
     s_{i,j}(A)))}{\exp(\epsilon'\cdot
     s_{i,\pi_i}(B))/(\sum_j\exp(\epsilon'\cdot s_{i,j}(B)))}\right),\]
\else
\begin{align*} &\frac{\Pr[M(A)=\pi]}{\Pr[M(B)=\pi]}\\ &=
 \prod_{i=1}^k\left(\frac{\exp(\epsilon'\cdot
     s_{i,\pi_i}(A))/(\sum_j\exp(\epsilon'\cdot
     s_{i,j}(A)))}{\exp(\epsilon'\cdot
     s_{i,\pi_i}(B))/(\sum_j\exp(\epsilon'\cdot s_{i,j}(B)))}\right),
     \end{align*}
\fi
where the sum over $j$ is over all remaining unselected resources.  We
can separate this into two products
\[ \prod_{i=1}^k\left(\frac{\exp(\epsilon'\cdot
       s_{i,\pi_i}(A))}{\exp(\epsilon'\cdot
       s_{i,\pi_i}(B))}\right)\cdot
   \prod_{i=1}^k\left(\frac{\sum_j\exp(\epsilon'\cdot
       s_{i,j}(B))}{\sum_j\exp(\epsilon'\cdot s_{i,j}(A))}\right).\]
If $A$ contains agent $I$ but $B$ does not, the second product is at
most 1, and the first is at most $\exp(\epsilon'\sum_{i=1}^k (F_I(S_i) - F_I(S_{i-1}))) \leq \exp(\epsilon')$.
If $B$ contains
agent $I$, and $A$ does
not, the first product is at most 1, and in the remainder of the
proof, we focus on this case.
%\klnote{separating out the two privacy cases}
We will write $\beta_{i,j} =
s_{i,j}(B)-s_{i,j}(A)$ to be the additional marginal utility of item
$j$ at time $i$ in instance $B$ over instance $A$, due to agent
$I$. Thus %Note that by assumption, we have $\sum_{i=1}^k\beta_{i,\pi_i} \leq 1$ for any
%$k$ resources $\pi_1, \ldots \pi_k$.  Thus,
\begin{eqnarray*}
 \frac{\Pr[M(A)=\pi]}{\Pr[M(B)=\pi]}
&\leq& %\prod_{i=1}^k(\exp(-\epsilon\beta_{i,\pi_i}))\cdot
   \prod_{i=1}^k\left(\frac{\sum_j\exp(\epsilon'\cdot
       s_{i,j}(B))}{\sum_j\exp(\epsilon'\cdot s_{i,j}(A))}\right) \\
%   &\leq&       %\exp(-\epsilon)\cdot
%   \prod_{i=1}^k\left(\frac{\sum_j\exp(\epsilon\cdot
%       s_{i,j}(B))}{\sum_j\exp(\epsilon\cdot s_{i,j}(A))}\right) \\
   &=&       %\exp(-\epsilon)\cdot
   \prod_{i=1}^k\left(\frac{\sum_j\exp(\epsilon'\beta_{i,j})\cdot\exp(\epsilon'\cdot
       s_{i,j}(A))}{\sum_j\exp(\epsilon'\cdot s_{i,j}(A))}\right) \\
   &=&   %\exp(-\epsilon)\cdot
\prod_{i=1}^k\E_i[\exp(\epsilon'\beta_{i})],
 \end{eqnarray*}
 where $\beta_i$ is the marginal utility actually achieved at time $i$
 by agent $I$, and the expectation is taken over the probability
 distribution over resources selected at time $i$ in instance $A$. For
 all $x \leq 1$, $e^x \leq 1 + (e-1)\cdot x$. Therefore, for all
 $\epsilon' \leq 1$, we have:
 \begin{eqnarray*}
 \prod_{i=1}^k\E_i[\exp(\epsilon'\beta_{i})] &\leq& \prod_{i=1}^kE_i[1
 + (e-1)\epsilon'\beta_i] \\
 &\leq& \exp((e-1)\epsilon'\sum_{i=1}^kE_i[\beta_i]).
 \end{eqnarray*}
As in the set-cover proof, we split the set of possible outputs into two sets. We call an output sequence {\em $q$-good} for an agent $I$ in instance $A$ if this sum $\sum_{i=1}^k E_i[\beta_i]$ is bounded above by $q$, and call it {\em $q$-bad} otherwise. For a $(\ln (e\delta^{-1}))$-good output $\pi$, we can then write
 $$\frac{\Pr[M(A)=\pi]}{\Pr[M(B)=\pi]} \leq
 \exp((e-1)\epsilon'\cdot\ln (e\delta^{-1}).$$
Moreover, note that since the total realized utility of any agent is at most $1$, if agent $I$ has realized utility $u_{i-1}$ before the $i$th set is chosen, then $\beta_i$ is distributed in $[0,1-u_{i-1}]$. Moreover, $u_{i} = u_{i-1} + \beta_i$. Lemma~\ref{lem:headsprobcont} then implies that the probability that the algorithms outputs a $(\ln (e\delta^{-1}))$-bad permutation is at most $\delta$. The theorem follows.
\end{proof}

\begin{remark} By choosing $\epsilon'=\epsilon/k$, we immediately get $\epsilon$-differential privacy and expected utility at least $(1-1/e) \OPT - O(k^2\ln m/\epsilon)$. This may give better guarantees for some values of $k$ and $\delta$.
\end{remark}

We remark that the $k$-coverage problem is a special case of the CPP
problem. Therefore:
\begin{corollary}
The CPP algorithm (with \emph{sets} as resources) is an
$(\epsilon,\delta)$-differential privacy preserving algorithm for the $k$-coverage problem achieving approximation factor at least $(1-1/e) \OPT - O(k\log m\log(2/\delta) /\epsilon).$
\end{corollary}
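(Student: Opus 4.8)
The plan is to exhibit the $k$-coverage problem as a literal instance of the CPP problem and then invoke the two theorems of this section verbatim. Given a set system $(U,\mathcal{S})$ with $\card{\mathcal{S}} = m$ and a private demand set $R \subseteq U$, I would take the $m$ sets of $\mathcal{S}$ to be the $m$ public resources, and take each element $e \in U$ to be an agent whose valuation function is $f_e(S) = \one[\,e \in \bigcup_{T \in S} T\,]$, the indicator that $e$ is covered by the chosen collection $S \subseteq \mathcal{S}$. An agent $e$ is ``present'' in the instance exactly when $e \in R$; absent agents are modelled by the identically-zero valuation. This matches the objective: $\sum_{e \in R} f_e(S)$ is precisely the number of demand points covered by $S$, so maximizing $\sum_e f_e(S)$ over size-$k$ subsets of resources is exactly the $k$-coverage objective, and $\OPT$ agrees in the two views.

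Next I would verify the two hypotheses needed to apply the CPP results. First, each $f_e$ is monotone (adding resources can only help cover $e$) and submodular: the marginal value of adding a resource $T$ to $S$ is $1$ only when $e \in T$ and $e$ is not yet covered by $S$, an event that can only become less likely as $S$ grows; moreover $f_e$ is trivially computable, supplying the required oracle access. Second, $\max_{e,S} f_e(S) = 1$, so the normalization assumption of the section holds without rescaling. The only point that genuinely requires care is that the adjacency relations line up: two $k$-coverage instances whose demand sets $R,R'$ have symmetric difference one correspond to two CPP instances whose agent sets differ in exactly one agent (the element of $R\,\triangle\,R'$, whose valuation toggles between $f_e$ and the zero function). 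Hence any $(\epsilon,\delta)$-differentially private guarantee proven for CPP transfers directly to $(\epsilon,\delta)$-privacy for $k$-coverage, with no loss in the translation.

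Finally I would instantiate the parameters. Running the CPP algorithm with its prescribed $\epsilon' = \frac{\epsilon}{e\ln(e/\delta)}$, the CPP privacy theorem yields $\big(\epsilon'(e-1)\ln(e/\delta),\,\delta\big) = \big(\tfrac{e-1}{e}\epsilon,\,\delta\big)$-differential privacy, which is in particular $(\epsilon,\delta)$-differential privacy. The CPP utility theorem then gives, except with probability $O(1/\mathrm{poly}(n))$, an output $S$ with $\sum_e f_e(S) \geq (1-1/e)\OPT - O(k\log m/\epsilon')$; substituting $\epsilon' = \Theta\big(\epsilon/\log(2/\delta)\big)$ converts the additive loss to $O(k\log m\,\log(2/\delta)/\epsilon)$, the claimed bound. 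Since every step is either a syntactic reduction or a direct appeal to the two preceding theorems, I do not expect a real obstacle; the only thing to double-check is that ``one element changes in $R$'' is precisely ``one agent changes in the CPP instance,'' which is what guarantees the privacy parameter carries over unchanged.
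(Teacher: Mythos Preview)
Your proposal is correct and is precisely the intended argument: the paper itself omits a proof entirely, simply noting that $k$-coverage is a special case of CPP and stating the corollary, so your reduction (elements as agents with indicator valuations, sets as resources) together with the parameter instantiation is exactly the fleshed-out version of the paper's one-line justification.
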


\subsection{Truthfulness}
The CPP problem can be viewed as a mechanism design problem when each
agent $i$ has a choice of whether to submit his actual
valuation function $f_i$, or to lie and submit a different valuation
function $f'_i$ if such a misrepresentation yields a better outcome
for agent $i$. A mechanism is \emph{truthful} if for every valuation
function of agents $j \ne i$, and every valuation function $f_i$ of
agent $i$, there is never a function $f'_i \ne f_i$ such that agent
$i$ can benefit by misrepresenting his valuation function as
$f'_i$. Intuitively, a mechanism is approximately truthful if no agent
can make more than a slight gain by not truthfully reporting.
\begin{definition}
A mechanism for the CPP problem is $\gamma$-truthful if for
every agent $i$, for every set of player valuations $f_j$ for $j \ne
i$, and for every valuation function $f'_i \ne f_i$:
\ifnum\fullversion=1
$$E[f_i(M(f_1,\ldots,f_i,\ldots,f_n))] \geq
E[f_i(M(f_1,\ldots,f'_i,\ldots,f_n))]-\gamma$$
\else
\begin{align*}
&E[f_i(M(f_1,\ldots,f_i,\ldots,f_n))] \\
&\geq E[f_i(M(f_1,\ldots,f'_i,\ldots,f_n))]-\gamma
\end{align*}
\fi
Note that $0$-truthfulness corresponds to the usual notion of
(exact) truthfulness.
\end{definition}
$(\epsilon,\delta)$-differential privacy in our setting immediately implies
$(2\epsilon+\delta)$-approximate truthfulness.
We note that Papadimitriou et al. \cite{PSS08} showed that
the CPP problem is inapproximable to an $m^{\frac{1}{2}-\epsilon}$ multiplicative factor
by any polynomial time $0$-truthful mechanism. Our result shows that relaxing that to $\gamma$-truthfulness allows us to give a constant approximation to the utility whenever $\OPT \geq 2k\log m \log (1/\gamma)/\gamma$ for any $\gamma$.

\subsection{Lower Bounds}

\begin{theorem}
 No $\epsilon$-differentially private algorithm for the maximum
 coverage problem can guarantee profit larger than $\OPT - (k\log
 (m/k)/20\epsilon)$.
\end{theorem}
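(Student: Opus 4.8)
The statement is the maximum-coverage analogue of the $k$-median lower bound \lref[Theorem]{thm:kmed-lbd}, and I would prove it by the same random-instance/packing argument, treating ``the private set of elements to be covered'' exactly as the demand set in $k$-median. Fix $L = \lceil \ln(m/k)/(10\epsilon)\rceil$. Build the set system on a universe of $mL$ elements as $m$ pairwise disjoint sets $S_1,\dots,S_m$ with $S_j = \{e_j^1,\dots,e_j^L\}$. Pick a size-$k$ subset $A \subseteq [m]$ uniformly at random, and let the private input be $R_A := \bigcup_{j\in A}S_j$, so $|R_A| = kL$. Since the sets are disjoint and $e_j^\ell$ lies only in $S_j$, choosing $\{S_j : j\in A\}$ covers all of $R_A$, so $\OPT(R_A)=kL$; and for any algorithm output consisting of at most $k$ sets $F$, the number of elements of $R_A$ covered by $F$ is exactly $L\cdot|\{j\in A : S_j\in F\}|$.

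The core claim is that $\E_{A,M}\big[\,|\{j\in A : S_j\in M(R_A)\}|\,\big]\le k/2$ for every $\epsilon$-differentially private $M$. As in \lref[Theorem]{thm:kmed-lbd}, set $\phi := \frac{1}{k}\,\E_{A,M}[\,|\{j\in A: S_j\in M(R_A)\}|\,]$ and rewrite it as $\E_{j\in[m]}\,\E_{A\ni j,\,M}[\one_{S_j\in M(R_A)}]$, the conditional expectation averaging over size-$k$ sets $A$ containing $j$. Replacing $A$ by $A'=(A\setminus\{j\})\cup\{j'\}$ for a random $j'\notin A$ alters $R_A$ in at most $2L$ elements, so composing $\epsilon$-differential privacy over those changes gives $\Pr_M[S_j\in M(R_{A'})]\ge e^{-2\epsilon L}\Pr_M[S_j\in M(R_A)]$ pointwise; averaging over $j$ and $A$ yields $e^{-2\epsilon L}\phi\le \E_{j}\,\E_{A',M}[\one_{S_j\in M(R_{A'})}]$. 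But the right-hand side is $O(k/m)$, since for a fixed $A'$ it equals $\frac{1}{m}$ times the (at most $k$) number of output sets of the form $S_j$. Hence $\phi \le e^{2\epsilon L}\cdot O(k/m) = O\big((k/m)^{4/5}\big)$ using $e^{2\epsilon L}\le O((m/k)^{1/5})$, which is at most $1/2$ whenever $k$ is at most a suitable constant fraction of $m$.

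Combining the claim with the identity that the profit on $R_A$ equals $L\cdot|\{j\in A: S_j\in M(R_A)\}|$ gives expected profit at most $Lk/2 \le kL - k\ln(m/k)/(20\epsilon) = \OPT - k\ln(m/k)/(20\epsilon)$ on the random instance, so some fixed instance witnesses that no $\epsilon$-differentially private algorithm guarantees profit exceeding $\OPT - k\ln(m/k)/(20\epsilon)$; as with the other lower bounds, taking several disjoint copies upgrades this to ruling out additive guarantees. The construction needs $\epsilon$ small enough that $L\ge 1$ (i.e. $\epsilon=O(\ln(m/k))$) and $k\le c\,m$ for a suitable constant $c$. Beyond tracking these ranges and the precise $1/20$ constant, there is no genuinely new obstacle: the only point to get right is the averaging-plus-privacy estimate for $\phi$, which is a direct transcription of the corresponding step in \lref[Theorem]{thm:kmed-lbd}, with the roles of ``demands'' and ``sets to be chosen'' swapped.
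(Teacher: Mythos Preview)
Your proposal is correct and follows exactly the approach the paper intends: the paper omits the proof, noting it is ``almost identical to that of the lower bound \lref[Theorem]{thm:kmed-lbd} for $k$-median,'' and your construction (disjoint blocks $S_1,\dots,S_m$ of size $L=\Theta(\ln(m/k)/\epsilon)$, random size-$k$ index set $A$, private input $R_A=\bigcup_{j\in A}S_j$, and the averaging/swap argument bounding $\phi\le (k/m)^{4/5}$) is precisely that transcription. The only cosmetic points are that your ``$O(k/m)$'' is in fact exactly $k/m$, and the ceiling in $L$ is what forces the big-$O$ on $e^{2\epsilon L}$; neither affects the argument.
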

The proof is almost identical to that of the lower bound
\lref[Theorem]{thm:kmed-lbd} for $k$-median, and hence is omitted.

\section{Steiner Forest}
\label{sec:steiner-forest}

Consider the Steiner network problem, where we are given a metric space
$M = (V, d)$ on $n$ points, and a (private) subset $R \subseteq V \times
V$ of source-sink (terminal) pairs. The goal is to buy a
minimum-cost set of edges $E(R) \subset \binom{V}{2}$ such that these edges
connect up each terminal pair in $R$. As in previous cases, we give
instructions in the form of a tree $T = (V,E_T)$; each terminal pair
$(u,v) \in R$ takes the unique path $P_T(u,v)$ in this tree $T$ between
themselves, and the (implicit) solution is the set of edges $E(R) =
\bigcup_{(u,v) \in R} P_T(u,v)$.

The tree $T$ is given by the randomized construction of Fakcharoenphol
et al.~\cite{FRT03}, which guarantees that $\E[\cost(E(R))] \leq O(\log n)
\cdot \OPT$; moreover, since the construction is oblivious to the set
$R$, it preserves the privacy of the terminal pairs perfectly (i.e.,
$\epsilon = 0$). The same idea can be used for a variety of network
design problem (such as the ``buy-at-bulk'' problem) which can be solved
by reducing it to a tree instance.

\section{Private Amplification Theorem}
\label{sec:boosting}

\ktnote{Change name to amplification? -- not sure about this. TBD.}
\ktnote{Compare with direct repetition. Done. --k}
In this section, we show that differentially private mechanisms that
give good guarantees in expectation can be repeated privately to amplify
the probability of a good outcome. First note that if we simply repeat a private algorithm $T$ times, and select the best outcome, we can get the following result:
\begin{theorem}%[N{\"a}ive Private Amplification Theorem]
Let $M: D \rightarrow R$ be an $\epsilon$-differentially private mechanism such that for a query function $q$, and a parameter $Q$, $\Pr[q(A,M(A)) \geq Q] \geq \frac{1}{2}$. Then for any $\delta>0$, $\epsilon' \in (0,\frac{1}{2})$, there is a mechanism $M'$ which satisfies the following properties:
\begin{OneLiners}
\item \textbf{Utility:} $\Pr[q(A,M(A)) \geq Q] \geq (1-2^{-T})$.
\item \textbf{Efficiency:} $M'$ makes  $T$ calls to $M$.
\item \textbf{Privacy:} $M'$ satisfies $(\epsilon T)$-differential privacy.
\end{OneLiners}
\end{theorem}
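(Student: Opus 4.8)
The plan is to let $M'$ be the obvious repetition mechanism: on input $A$ it runs $M$ independently $T$ times, obtaining outputs $r_1,\dots,r_T$, and it releases the entire tuple $(r_1,\dots,r_T)$; we interpret the quality query on a tuple by $q(A,(r_1,\dots,r_T)):=\max_{1\le j\le T}q(A,r_j)$, so that the event ``$q(A,M'(A))\ge Q$'' is precisely the event that some run achieved score at least $Q$. Efficiency is then immediate, since $M'$ makes exactly $T$ oracle calls to $M$.

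For the utility bound I would just use independence of the $T$ runs together with the hypothesis $\Pr[q(A,M(A))\ge Q]\ge \tfrac12$: each run independently has score below $Q$ with probability at most $\tfrac12$, hence
\[
  \Pr\big[q(A,M'(A))<Q\big]\;=\;\prod_{j=1}^{T}\Pr\big[q(A,r_j)<Q\big]\;\le\;2^{-T},
\]
which gives $\Pr[q(A,M'(A))\ge Q]\ge 1-2^{-T}$, as required.

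For privacy I would appeal to the composition property stated right after the definition of differential privacy in \lref[Section]{sec:notation}: running $T$ (even adaptively chosen) $\epsilon$-differentially private computations yields an $(\epsilon T)$-differentially private computation. Concretely, if $\mu_A$ denotes the output law of a single run $M(A)$, then $\epsilon$-privacy of $M$ gives the pointwise bound $\mu_A\le e^{\epsilon}\mu_B$ for datasets $A,B$ of symmetric difference one; since the $T$ runs are independent, the product law of $(r_1,\dots,r_T)$ satisfies $\Pr[M'(A)\in S]\le e^{\epsilon T}\Pr[M'(B)\in S]$ for every measurable $S$, which is exactly $(\epsilon T)$-differential privacy.

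The one place that needs a word of care --- and the ``obstacle'', such as it is --- is that $M'$ must release all $T$ outcomes and not just the single one maximizing $q(A,\cdot)$: picking that argmax is a \emph{data-dependent} operation (it inspects $A$ through the scores $q(A,r_j)$), so the ``composition followed by post-processing'' reasoning does not license it, and in fact one can exhibit adjacent instances on which $M$ is $0$-differentially private while the argmax-of-$T$-runs rule has privacy loss bounded below by an absolute constant. Releasing the whole tuple avoids this: once the $r_j$'s are drawn the output of $M'$ is fixed, and any subsequent identification of the best candidate is performed on already-released data and costs nothing further in privacy.
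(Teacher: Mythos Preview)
Your proposal is correct. The paper does not actually give a proof of this warm-up theorem; it offers only the one-line sketch ``simply repeat a private algorithm $T$ times, and select the best outcome'' immediately before the statement, and then moves on to the more sophisticated amplification theorem. Your argument fills in precisely the intended details (independence for utility, composition for privacy). The one place you go beyond the paper is your careful remark that releasing only the single $\arg\max_j q(A,r_j)$ is not licensed by post-processing because the argmax reads $A$; releasing the whole tuple, as you do, is the clean way to obtain exactly $(\epsilon T)$-differential privacy without any additional accounting. The paper's phrase ``select the best outcome'' glosses over this, so your version is arguably more precise than the paper's own sketch.
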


Note that the privacy parameter degrades linearly with $T$. Thus to bring down the failure probability to inverse polynomial, one will have to make $T$ logarithmic. To get $\epsilon'$-differential privacy, one would then take $\epsilon$ to be $\epsilon'/T$. If $Q$ was inversely proportional to $\epsilon$, as is the case in many of our algorithms, this leads to an additional logarithmic loss. The next theorem shows a more sophisticated amplification technique that does better.

\begin{theorem}[Private Amplification Theorem]
  Let $M: D \rightarrow R$ be an $\epsilon$-differentially private
  mechanism such that for a query function $q$ with sensitivity~$1$, and
  a parameter $Q$, $\Pr[q(A,M(A)) \geq Q] \geq p$ for some $p \in
  (0,1)$. Then for any $\delta>0$, $\epsilon' \in (0,\frac{1}{2})$,
  there is a mechanism $M'$ which satisfies the following properties:
\begin{OneLiners}
\item %\textbf{Utility:}
$\Pr[q(A,M(A)) \geq Q - \frac4{\epsilon'}\log(\frac{1}{\epsilon' \delta p})] \geq (1-\delta)$.
\item %\textbf{Efficiency:}
$M'$ makes  $O((\frac{1}{\epsilon' \delta p})^2\log(\frac{1}{\epsilon'\delta p}))$ calls to $M$.
\item %\textbf{Privacy:}
$M'$ satisfies $(\epsilon+8\epsilon')$-differential privacy.
\end{OneLiners}
\end{theorem}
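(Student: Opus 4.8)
I would amplify $M$ by \emph{repetition followed by a smoothed selection}. Run $M$ independently $N=\Theta\!\big((\tfrac1{\epsilon'\delta p})^2\log\tfrac1{\epsilon'\delta p}\big)$ times, obtaining candidates $r_1,\dots,r_N$ with scores $s_l=q(A,r_l)$, and then release one of them with the exponential mechanism at privacy parameter $\epsilon'$ applied to the \emph{capped} score $\widehat q_A(r):=\min(q(A,r),Q)$; that is, output $r_l$ with probability proportional to $\exp(\epsilon'\widehat q_A(r_l))$ (splitting the mass across equal candidates). Capping at $Q$ loses nothing --- all we want is that the released score reach (nearly) $Q$ --- and is the key design choice, since it prevents any single lucky draw of $M$ from contributing more than $e^{\epsilon'Q}$ to the normalizer.

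\textbf{Utility.} Since $\Pr[q(A,M(A))\ge Q]\ge p$, the expected number of candidates with $s_l\ge Q$ is at least $pN$, so by a Chernoff bound and our choice of $N$, at least $pN/2$ of them have $s_l\ge Q$ except with probability $\le\delta/2$. Conditioned on that, the normalizer is at least $(pN/2)e^{\epsilon'Q}$ while each candidate of score $<Q-L$ has weight $\le e^{\epsilon'(Q-L)}$, so (summing over the $\le N$ such candidates, exactly as in the exponential-mechanism tail bound~\eqref{eqn:expmech} with $|R_{\OPT}|\ge pN/2$ and $|R|\le N$) the released score is below $Q-L$ with probability at most $\tfrac{2N}{p}e^{-\epsilon'L}\le\delta/2$ when $L=\tfrac4{\epsilon'}\log\tfrac1{\epsilon'\delta p}$ (using $\log N=\Theta(\log\tfrac1{\epsilon'\delta p})$). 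A union bound over the two bad events gives the stated utility, and the mechanism makes exactly $N$ calls to $M$.

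\textbf{Privacy --- the main obstacle.} The whole difficulty is that $N$ invocations of $M$ cost $N\epsilon$ by naive composition; the claim requires showing the $N$ repetitions cost only $O(\epsilon')$ \emph{beyond} the single $\epsilon$ one must pay for the one candidate actually released. Fix neighbors $A,B$ and a target $r$; with $c_r(\vec r)=\#\{l:r_l=r\}$ and $Z_A=\sum_l e^{\epsilon'\widehat q_A(r_l)}$ we have $\Pr[M'(A)=r]=\E_{\vec r\sim M(A)^{\otimes N}}[\,c_r(\vec r)\,e^{\epsilon'\widehat q_A(r)}/Z_A\,]$, and by symmetry of the i.i.d.\ draws $\E[c_r(\vec r)]=N\Pr_{M(A)}[r]$. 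I would then argue that $Z_A$ concentrates to within a factor $1\pm O(\epsilon'\delta)$ of its population value $N\mu_A$ with $\mu_A=\E_{r\sim M(A)}[e^{\epsilon'\widehat q_A(r)}]$ --- this is precisely where $N\gtrsim(\epsilon'\delta p)^{-2}\log(\cdot)$ is used, since the summands lie in $[1,e^{\epsilon'Q}]$ with mean $\ge p\,e^{\epsilon'Q}$, so relative fluctuations are $O(1/\sqrt{pN})=O(\epsilon'\delta)$ and the concentration failure probability can be pushed below $\delta$; on that failure event the mechanism defaults to releasing a single fresh draw of $M$, which is $\epsilon$-private and thus harmless. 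On the good event $\Pr[M'(A)=r]\approx\Pr_{M(A)}[r]\,e^{\epsilon'\widehat q_A(r)}/\mu_A$, and likewise for $B$, so the ratio is at most $e^{\epsilon}$ (change of measure $M(A)\!\to\!M(B)$ on the \emph{one} released candidate) times $e^{\epsilon'}$ (from $|\widehat q_A(r)-\widehat q_B(r)|\le1$, as $q$ has sensitivity $1$ and the cap is $1$-Lipschitz) times a further $e^{O(\epsilon')}$ from $\mu_B/\mu_A$ and the concentration slack.

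The delicate point --- and where I expect the real work to lie --- is ensuring this last factor does not secretly re-spend $M$'s privacy, so that the total is $\epsilon+O(\epsilon')$ and not $2\epsilon+O(\epsilon')$. I would handle this by making the selection \emph{residual} rather than renormalized (``accept $r_l$ with probability $e^{\epsilon'(\widehat q_A(r_l)-Q)}$, otherwise fall through to a default/fresh draw''), so that the \emph{empirical} score profile of the $N$ draws --- which approximates the \emph{population} score profile of $M(A)$ to within $O(\epsilon'\delta)$ by the choice of $N$ --- governs the selection up to negligible error; the neighbor-to-neighbor change of that population profile is then charged entirely to $M$'s own $\epsilon$ together with the $1$-Lipschitzness of $q$, and carefully tracking the $O(\epsilon')$ slack terms (from the cap, the residual rule, and the concentration window) is what produces the constant $8$ in $\epsilon+8\epsilon'$.
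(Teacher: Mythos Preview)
Your high-level recipe---repeat $M$ many times, cap scores at $Q$, and select with the exponential mechanism---matches the paper's. But you are missing the one technical device that makes the privacy argument go through for \emph{pure} $(\epsilon+8\epsilon')$-DP: the paper inserts $T'=\sqrt{4T\log T}/\epsilon'$ \emph{dummy outcomes}, each with capped score exactly $Q$, into the candidate pool before running the exponential mechanism. These dummies give a deterministic lower bound $T'$ on the normalizer $D$, and the whole privacy analysis hinges on it: with that floor one can show $\E[1/D]$ lies within an $e^{O(\epsilon')}$ factor of $1/\E[D]$ \emph{always}, via Hoeffding on the $T$ i.i.d.\ summands in $[0,1]$ that make up $D-T'$. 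There is no bad event in the privacy proof and hence no~$\delta$.

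Your mechanism has no such floor. Your normalizer $Z_A$ is a random sum, and when it is atypically small you ``default to a single fresh draw of $M$''. But the concentration event you would branch on is a function of the private data, and even if you only use it analytically (bounding ratios on the good event and absorbing the bad event into a $\delta$), the result is $(\epsilon+O(\epsilon'),\delta)$-DP, not the pure $(\epsilon+8\epsilon')$-DP the theorem asserts. The residual ``accept with probability $e^{\epsilon'(\widehat q_A(r_l)-Q)}$, else fall through'' variant you float at the end does not fix this either; it still has a data-dependent fall-through probability, and you have not shown how to control the likelihood ratio on \emph{all} outcomes. You did put your finger on the real difficulty---preventing the normalizer comparison from spending a second $\epsilon$---but the paper's resolution is structural (the dummies), not the change-of-selection-rule you propose.
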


\begin{proof}
  Let $T=(\frac{8}{\epsilon' \delta p})^2\log(\frac{1}{\epsilon' \delta
    p})$. The mechanism $M'$ runs $M$ on the input $A$ independently
  $(T+1)$ times to get outputs $S_1=\{r_1,\ldots,r_{T+1}\}$. It also
  adds in $T'=\frac{\sqrt{4T\log T}}{\epsilon'}$ dummy outcomes
  $S_2=\{s_1,\ldots,s_{T'}\}$ and selects an outcome from $S_1 \cup S_2$
  using the exponential mechanism with privacy parameter $\eps'$ and
  score function
  \begin{displaymath}
    \widetilde{q}(A,r) = \left\{\begin{array}{ll}
        \min(Q,q(A,r)) & \mbox{if } r\in S_1\\
        Q & \mbox{if } r \in S_2
      \end{array} \right.
  \end{displaymath}

  The efficiency of $M'$ is immediate from the construction. To analyze
  the utility, note that~\eqref{eqn:expmech} ensures that the
  exponential mechanism's output $r$ satisfies $\widetilde{q}(A,r) > Q -
  \frac{4}{\epsilon'}\log(\frac{1}{\epsilon' \delta p})$ with
  probability $(1-\frac{\delta}{2})$. Conditioned on the output $r$
  satisfying this property, the ratio $\Pr[r\in S_1]/\Pr[r \in S_2]$ is at
  least $|\{r \in S_1: q(A,r)\geq Q\}|/|S_2|$. Since the numerator is at
  least $pT$ in expectation, the probability of $r$ being a dummy
  outcome is at most $\frac{\delta}{2}$. This establishes the utility
  property.

  We now show the privacy property. For any $r_0 \in R$,
  \begin{align}
    \Pr[M'(A)=r_0] &= \sum_{i=1}^{T+1} \Pr[r_i=r_0]
    \E\left[\frac{\exp(\epsilon'\widetilde{q}(A,r_0))}{\sum_{r\in
        S_1}\exp(\epsilon'\widetilde{q}(A,r))+ T'\exp(\epsilon' Q)} \,\bigg|\,
  r_i = r_0\right] \notag \\
    &= (T+1)\cdot \Pr[M(A) \mbox{ outputs }r_0] \cdot
    \exp(\epsilon'\widetilde{q}(A,r_0)) \cdot \E\left[\frac{1}{\sum_{r\in
        S_1}\exp(\epsilon'\widetilde{q}(A,r))+ T'\exp(\epsilon' Q)} \,\bigg|\,
  r_{T+1} = r_0\right] \notag \\
    &= (T+1)\cdot \Pr[M(A) \mbox{ outputs }r_0] \cdot
    \exp(\epsilon'\widetilde{q}(A,r_0)) \cdot \exp(-\epsilon'Q)\cdot
    \notag \\
    &~~~~~~~~~\E\left[\frac{1}{\sum_{r\in S_1\setminus
          \{r_0\}}\exp(\epsilon'(\widetilde{q}(A,r)-Q))+
        \exp(\epsilon'(\widetilde{q}(A,r_0)-Q))+ T'}\right] \label{par-eq1}
  \end{align}
where the expectation is also taken over runs $1,\ldots,T$ of $M$ (we've
explicitly conditioned on run $(T+1)$ producing $r_0$).

It is easy to bound the change in the first two terms when we change
from input $A$ to a neighboring input $B$, since $M$ satisfies
$\epsilon$-differential privacy, and $\widetilde{q}$ has sensitivity
1. Let $D = D(A)$ denote the denominator in the final expectation; we
would like to show that $\E[\frac{1}{D(A)}] \leq
\exp(\epsilon)\E[\frac{1}{D(B)}]$ for neighboring inputs $A$ and
$B$. Let $C=\exp(\epsilon'(\widetilde{q}(A,r_0)-Q))+ T'$ denote the
constant term in $D(A)$.

First observe that
\begin{eqnarray*}
\E[D(A)]&=&C+T \cdot \E_{r\in M(A)}[\exp(\epsilon'(\widetilde{q}(A,r)-Q)] \\
&\geq& C+T\cdot \exp(-\epsilon') \cdot \E_{r\in M(A)}[\exp(\epsilon'(\widetilde{q}(B,r)-Q)]\\
&\geq& C+T\cdot \exp(-2\epsilon')\cdot \E_{r\in M(B)}[\exp(\epsilon'(\widetilde{q}(B,r)-Q)]\\
&\geq& \exp(-2\epsilon') \cdot \E[D(B)],
\end{eqnarray*}
where the first inequality follows from the sensitivity of $q$ and the
second from the $\epsilon$-differential privacy of $M$. Thus $\E[D(A)]$
is close to $\E[D(B)]$. We now show that $\E[\frac{1}{D(A)}]$ is close
to $\frac{1}{\E[D(A)]}$ for each $A$, which will complete the proof.

The first step is to establish that $D(A)$ is concentrated around its
expectation. Since $D=C+\sum_{i=1}^T Y_i$, where the $Y_i$'s are
i.i.d.~random variables in $[0,1]$, standard concentration bounds imply
\begin{equation*}
\Pr[D \geq \E[D] + t] \leq \exp(-2t^2/T); \;\;\;\;\;\;\;\;\Pr[D \leq \E[D] - t] \leq \exp(-2t^2/T);
\end{equation*}

Since $\frac{1}{D} \geq \frac{1}{C}$, we can now estimate
\begin{eqnarray*}
\E[\frac{1}{D}] &\leq& \frac{\exp(\epsilon')}{\E[D]} + \int_{\frac{\exp(\epsilon')}{\E[D]}}^{\frac{1}{C}} \Pr[\frac{1}{D} \geq y] \ud y\\
&\leq& \frac{\exp(\epsilon')}{\E[D]} + \int_{C}^{\exp(-\epsilon')\E[D]} \frac{\Pr[D \leq z]}{z^2} \ud z\\
&\leq& \frac{\exp(\epsilon')}{\E[D]} + \frac{1}{C^2}\int_{C}^{\exp(-\epsilon')\E[D]} \exp(-2(z-\E[D])^2/T) \ud z\\
&\leq& \frac{\exp(\epsilon')}{\E[D]} + \frac{(\exp(-\epsilon')\E[D]-C)}{C^2} \exp(-(\epsilon' \E[D])^2/T)\\
&\leq& \frac{\exp(\epsilon')}{\E[D]} + \frac{1}{T^2}
\end{eqnarray*}
since $\E[D] > C > \frac{\sqrt{4T\log T}}{\epsilon'}$, $\E[D] < 2T$, and $C>1$. Thus $\E[\frac{1}{D}] \leq \frac{\exp(2\eps')}{\E[D]}$.

Similarly,
\begin{eqnarray*}
\E[\frac{1}{D}] &\geq& \frac{\exp(-\epsilon')}{\E[D]} -  \int_{0}^{\frac{\exp(-\epsilon')}{\E[D]}} \Pr[\frac{1}{D} \leq y] \ud y\\
&\geq& \frac{\exp(-\epsilon')}{\E[D]} -  \int_{\exp(\epsilon')\E[D]}^{\infty} \frac{\Pr[D \geq z]}{z^2} \ud z\\
&\geq& \frac{\exp(-\epsilon')}{\E[D]} -  \frac{\exp(-2\epsilon')}{\E[D]^2}\int_{\exp(\epsilon')\E[D]}^{\infty} \exp(-2(z-\E[D])^2/T) \ud z\\
&\geq& \frac{\exp(-\epsilon')}{\E[D]} -  \frac{\exp(-2\epsilon')}{\E[D]^2}\sqrt{T}\\
&\geq& \frac{\exp(-\epsilon')}{\E[D]} - \frac{\epsilon'}{\E[D]},
\end{eqnarray*}
so that $\E[\frac{1}{D}] \geq \frac{\exp(-3\epsilon')}{\E[D]}$.

Thus $\E[\frac{1}{D(A)}]\leq \exp(7\epsilon')\E[\frac{1}{D(B)}]$ for
neighboring inputs $A$ and $B$. Now using this fact in
expression~\eqref{par-eq1} for $\Pr[M'(A)=r_0]$ above, we conclude that
$M'$ satisfies $(\epsilon + 8\epsilon')$-differential privacy.
\end{proof}

\bibliographystyle{alpha}
\newcommand{\etalchar}[1]{$^{#1}$}

\appendix

\section{Unweighted Vertex Cover Algorithm: An Alternate View}
\label{sec:aaron-proof}

In this section, we consider a slightly different way to implement the
vertex cover algorithm.  Given a graph $G = (V,E)$, we mimic the
randomized proportional-to-degree algorithm for $\alpha n$ rounds
($\alpha < 1$), and output the remaining vertices in random order.  That
is, in each of the first $\alpha n$ rounds, we select the next vertex
$i$ with probability proportional to $d(i) + 1/\epsilon$: this is
equivalent to imagining that each vertex has $1/\epsilon$
``hallucinated'' edges in addition to its real edges. (It is most
convenient to imagine the other endpoint of these hallucinated edges as
being fake vertices which are always ignored by the algorithm.)

When we select a vertex, we remove it from the graph, together with the
real and hallucinated edges adjacent to it.  This is equivalent to
picking a random (real or hallucinated) edge from the graph, and
outputting a random real endpoint. Outputting a vertex affects the real
edges in the remaining graph, but does not change the hallucinated edges
incident to other vertices.

\paragraph{Privacy Analysis.}
The privacy analysis is similar to that of Theorem~\ref{thm:privacy}:
imagine the weights being $w_i = 1/\epsilon$ for the first $\alpha n$
rounds and $w_i = \infty$ for the remaining rounds, which gives us
$2\sum_{i = (1-\alpha) n}^n \frac{1}{i w_i} \leq \epsilon \;
(\frac{2\alpha}{1 - \alpha})$-differential privacy.

\paragraph{Utility Analysis.}
To analyze the utility, we couple our algorithm with a run of the
non-private algorithm $\mathcal{A}$ that at each step picks an arbitrary
edge of the graph and then picks a random endpoint: it is an easy
exercise that this an $2$-approximation algorithm.

We refer to vertices that have non-zero ``real'' degree at the time they
are selected by our algorithm as \textit{interesting vertices}: the cost
of our algorithm is simply the number of interesting vertices it selects
in the course of its run. Let $I_1$ denote the number of interesting
vertices it selects during the first $\alpha n$ steps, and $I_2$ denote
the number of interesting vertices it selects during its remaining
$(1-\alpha)n$ steps, when it is simply ordering vertices
randomly. Clearly, the total cost is $I_1 + I_2$.

We may view the first phase of our algorithm as selecting an edge at
random (from among both real and hallucinated ones) and then outputting
one of its endpoints at random. Now, for the rounds in which our
algorithm selects a real edge, we can couple this selection with one
step of an imagined run of $\mathcal{A}$ (selecting the same edge and
endpoint).  Note that this run of $\mathcal{A}$ maintains a vertex cover
that is a subset of our vertex cover, and that once our algorithm has
completed a vertex cover, no interesting vertices remain. Therefore,
while our algorithm continues to incur cost, $\mathcal{A}$ has not yet
found a vertex cover.

In the first phase of our algorithm, every interesting vertex our
algorithm selects has at least one real edge adjacent to it, as well as
$1/\epsilon$ hallucinated edges. Conditioned on selecting an interesting
vertex, our algorithm had selected a real edge with probability at least
$\epsilon' = 1/(1 + 1/\epsilon)$. Let $R$ denote the random variable
that represents the number of steps $\mathcal{A}$ is run for. $E[R] \leq
2\OPT$ since $\mathcal{A}$ is a $2$-approximation algorithm. By
linearity of expectation:
\begin{equation}
  \label{OPTCOST}
  2\OPT \geq \E[R] \geq \epsilon'\cdot E[I_1]
\end{equation}
We now show that most of our algorithm's cost comes from the first
phase, and hence that $I_2$ is not much larger than $I_1$.
\begin{lemma}
  \[\E[I_1] \geq \ln \left(\frac{1}{1-\alpha}\right) \cdot \E[I_2]\]
\end{lemma}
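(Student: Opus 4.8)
The plan is to reduce everything to tracking one scalar through the whole run of the algorithm. At the start of step~$i$ let $N_i = n-i+1$ be the number of surviving vertices, and let $a_i$ be the number of surviving vertices that still have at least one real edge to another surviving vertex (call such vertices \emph{active}). The crucial observation is that a vertex is ``interesting'' — i.e.\ it contributes to $I_1$ or to $I_2$ — precisely when the vertex selected at that step is active; so if $X_i$ denotes the indicator that step~$i$ selects an active vertex, then $I_1 = \sum_{i\le \alpha n} X_i$ and $I_2 = \sum_{i > \alpha n} X_i$. I will compare the two phases through the quantity $Z_i := a_i/N_i$.

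I would first record two elementary facts, valid in both phases. \textbf{(i)} Removing a vertex never turns an inactive vertex active, and removing an active vertex decreases the active count; removing an inactive vertex leaves it unchanged. Hence $a_{i+1} \le a_i - X_i$ pathwise, and telescoping (with $a_{n+1}=0$) gives $I_2 = \sum_{i>\alpha n} X_i \le \sum_{i>\alpha n}(a_i - a_{i+1}) = a_{\alpha n+1}$. \textbf{(ii)} $\E[X_i \mid \mathcal{F}_i] \ge a_i/N_i$. In phase~2 this is an equality, since the next vertex is uniform among the $N_i$ survivors of which $a_i$ are active. In phase~1, picking a vertex with probability proportional to (real degree $+\, 1/\eps$) selects an active vertex with probability $\frac{2m_i + a_i/\eps}{2m_i + N_i/\eps}$, where $m_i$ is the number of surviving real edges (the active vertices carry all of the $2m_i$ real-degree mass and $a_i/\eps$ of the hallucinated mass); since $a_i \le N_i$ and $\frac{p+x}{p+y}\ge \frac xy$ for $0\le x\le y$, $p\ge 0$, this is at least $a_i/N_i$. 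Intuitively, spreading the hallucinated mass uniformly contributes exactly $a_i/N_i$, and the real edges only help — phase~1 is at least as efficient at hitting active vertices as the uniform phase~2.

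Combining, $\E[a_{i+1}\mid\mathcal F_i] \le a_i - \E[X_i\mid\mathcal F_i] \le a_i(1-1/N_i) = a_i\,N_{i+1}/N_i$, so $Z_i = a_i/N_i$ is a supermartingale for the entire run, and $\E[Z_i]$ is non-increasing in $i$. Therefore $\E[I_1] \ge \sum_{i\le\alpha n}\E[Z_i] \ge \alpha n\cdot \E[Z_{\alpha n+1}]$, while (by (i), and $N_{\alpha n+1} = (1-\alpha)n$) $\E[I_2] \le \E[a_{\alpha n+1}] = (1-\alpha)n\cdot\E[Z_{\alpha n+1}]$. Dividing gives $\E[I_1] \ge \frac{\alpha}{1-\alpha}\,\E[I_2]$, and finally $\frac{\alpha}{1-\alpha} \ge \ln\frac{1}{1-\alpha}$, since $g(\alpha) := \frac{\alpha}{1-\alpha} + \ln(1-\alpha)$ satisfies $g(0)=0$ and $g'(\alpha) = \alpha/(1-\alpha)^2 \ge 0$. (If $\E[I_2]=0$ the lemma is trivial.)

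The step I expect to require the most care is fact~(ii) in phase~1: one must check that the hallucinated-edge weighting does not \emph{hurt} the chance of landing on an active vertex relative to the uniform value $a_i/N_i$, which is exactly the content of the inequality $\frac{p+x}{p+y}\ge\frac xy$. Everything else is bookkeeping; in particular the assumption that $\alpha n$ is an integer is without loss of generality (and harmless, since the argument actually establishes the stronger bound $\E[I_1]\ge\frac{\alpha}{1-\alpha}\E[I_2]$, which dominates the stated $\ln(1/(1-\alpha))$ factor by a comfortable margin).
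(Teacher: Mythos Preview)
Your proof is correct, and it follows essentially the same skeleton as the paper's: both arguments use that the probability of selecting an interesting vertex at step $i$ is at least $a_i/N_i$, and both finish by bounding $\E[I_2]$ by the expected number of interesting vertices left when phase~1 ends.

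The one genuine difference is in how the lower bound on $\E[I_1]$ is extracted. The paper simply uses the pathwise monotonicity $a_i \ge a_{\alpha n}$, pulls the common factor $\E[a_{\alpha n}]$ out of $\sum_{i\le \alpha n}\E[a_i]/N_i$, and is left with the harmonic sum $\sum_{j=(1-\alpha)n}^{n}1/j \ge \ln\frac{1}{1-\alpha}$. You instead observe that the \emph{ratio} $Z_i=a_i/N_i$ is a supermartingale, pull out the common value $\E[Z_{\alpha n+1}]$, and count $\alpha n$ equal terms. Your route yields the strictly stronger constant $\alpha/(1-\alpha)$ and then throws away the extra by invoking $\alpha/(1-\alpha)\ge \ln\frac{1}{1-\alpha}$; the paper gets the logarithm directly. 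So the two proofs are close cousins, but your supermartingale phrasing is a bit cleaner and actually proves more than the stated lemma.
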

\begin{proof}
  Consider each of the $\alpha n$ steps of the first phase of our
  algorithm. Let $n_i$ denote the number of interesting vertices
  remaining at step $i$. Note that $\{n_i\}$ is a non-increasing
  sequence. At step $i$, there are $n_i$ interesting vertices and
  $n-i+1$ remaining vertices. Note that the probability of picking an
  interesting vertex is strictly greater than $n_i/(n-i+1)$ at each
  step. We may therefore bound the expected number of interesting
  vertices picked in the first phase:
  $$\E[I_1] > \sum_{i=1}^{\alpha n}\frac{\E[n_i]}{n-i+1} \geq \E[n_{\alpha
    n}]\sum_{j=(1-\alpha)n}^{n}\frac{1}{j} \geq \ln
  \left(\frac{1}{1-\alpha}\right) \cdot \E[n_{\alpha n}]$$
  Noting that $\E[I_2] \leq \E[n_{\alpha n}]$ completes the proof.
\end{proof}
Combining the facts above, we get that
\begin{gather}
  \frac{\E[\mathsf{cost}]}{\OPT} \leq \frac{2}{\epsilon'} \; \left(1 +
    \frac{1}{\ln (1-\alpha)^{-1}}\right).
\end{gather}

\section{Missing Proofs}
\newcommand{\dee}{\mathcal{D}}

\label{sec:deltafix}
In this section, we prove Lemma~\ref{lem:deltafix}. The lemma is a
consequence of the following more general inequality.

Consider the following $n$ round probabilistic process. In each round,
an adversary chooses a $p_i \in [0,1]$ possibly based on the first
$(i-1)$ rounds and a coin is tossed with heads probability $p_i$. Let
$Z_i$ be the indicator for the the event that no coin comes up heads
in the first $i$ steps. Let $Y_j$ denote the random variable
$\sum_{i=j}^n p_i Z_i$ and let $Y = Y_1$.
\begin{lemma}
\label{lem:headsprob}
Let $Y$ be defined as above. Then for any $q$, $\Pr[Y > q] \leq \exp(-q)$.
\end{lemma}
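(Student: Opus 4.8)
The plan is to prove the bound $\Pr[Y>q]\le \exp(-q)$ by establishing a uniform bound on the moment generating function, namely $\E[\exp(Y)]\le e$, and then applying Markov's inequality: $\Pr[Y>q]=\Pr[\exp(Y)>\exp(q)]\le e\cdot\exp(-q)$. Wait --- that gives $e\exp(-q)$, not $\exp(-q)$, so I actually want the sharper statement $\E[\exp(Y)]\le$ something that absorbs the leading constant, or else a direct inductive argument. Let me instead set up a clean backward induction on the round index, which is the approach I would actually carry out.

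The key observation is the recursive structure: conditioned on the first $i-1$ rounds (so that $Z_{i-1}=1$, i.e. no heads yet --- if some coin already came up heads then $Z_i=0$ for all later $i$ and $Y_i=0$), the adversary picks $p_i\in[0,1]$, a coin is flipped, and with probability $1-p_i$ we have $Z_i=1$ and the process continues with $Y_i = p_i + Y_{i+1}$, while with probability $p_i$ we get $Z_i=0$ and $Y_i=0$. So I would prove by downward induction on $j$ from $n+1$ to $1$ the statement: for every adversary strategy, $\E[\exp(Y_j)\mid Z_{j-1}=1]\le \exp(\text{(something in terms of how many rounds remain)})$ --- but since $p_i\in[0,1]$ and rounds are unbounded a priori, the right invariant is actually the \emph{tail-free} bound $\E[\exp(Y_j)\,|\,Z_{j-1}=1]\le e$ uniformly, or better, I would directly bound $\Pr[Y_j>q\,|\,Z_{j-1}=1]\le\exp(-q)$ by induction. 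For the base case $j=n+1$, $Y_{n+1}=0$ and $\Pr[0>q]=0\le\exp(-q)$ for $q\ge0$ (and for $q<0$ the bound is vacuous since probabilities are at most $1\le\exp(-q)$... actually need $q\ge 0$; for $q<0$ it's trivial). For the inductive step, conditioned on reaching round $j$ with no prior heads, the adversary picks $p=p_j$, and
\[
\Pr[Y_j>q] = (1-p)\cdot\Pr[Y_{j+1}>q-p] + p\cdot\Pr[0>q].
\]
When $q\ge 0$ the second term vanishes; when $q<p$ the first probability is at most $1$ and we get $\Pr[Y_j>q]\le 1-p\le e^{-p}\le e^{-q}$ is false in general... so I need to be careful: when $0\le q<p$, $\Pr[Y_j>q]\le(1-p)\cdot 1 + 0 = 1-p$; I need $1-p\le\exp(-q)$, but $q$ could be close to $p$ and $1-p$ close to $1-p$ while $\exp(-q)\approx\exp(-p)$, and indeed $1-p\le e^{-p}\le e^{-q}$ when... $1-p\le e^{-p}$ always holds, and $q\le p$ so $e^{-p}\le e^{-q}$ --- wait $q\le p$ gives $-q\ge -p$ so $e^{-q}\ge e^{-p}\ge 1-p$. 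Good, that works. When $q\ge p\ge 0$, by induction $\Pr[Y_{j+1}>q-p]\le\exp(-(q-p))$, so $\Pr[Y_j>q]\le(1-p)\exp(p-q)=(1-p)e^p\cdot e^{-q}\le e^{-q}$ since $(1-p)e^p\le 1$. So the induction closes in all cases.

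The main (only) subtlety I anticipate is getting the conditioning exactly right --- specifically handling the event that a heads has already occurred (where $Y$'s future contribution is zero and the claim is trivial), and making sure the adversary's adaptivity is correctly incorporated by taking the worst case over $p_j\in[0,1]$ at each inductive step, which the argument above does since the bound $(1-p)e^p\le 1$ and $1-p\le e^{-p}$ hold for all $p\in[0,1]$. Everything else is the elementary inequality $(1-p)e^p\le 1$. Finally, Lemma~\ref{lem:deltafix} follows by identifying $p_i$ with the conditional probability (given the history of chosen sets $S_{\pi_1},\dots,S_{\pi_{i-1}}$) that the next set chosen by the algorithm contains the element $I$, with "heads" being the event that $I$ gets covered at step $i$; then $Y=\sum_i p_i\,\one(I\text{ uncovered at step }i)$ is exactly the quantity in the definition of "$q$-bad," and the lemma gives $\Pr[q\text{-bad}]=\Pr[Y>q]\le\exp(-q)$.
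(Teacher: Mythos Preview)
Your proof is correct and follows essentially the same approach as the paper: both arguments proceed by backward induction on the round index $j$, establishing $\Pr[Y_j>q]\le\exp(-q)$ via the recursion $\Pr[Y_j>q]=(1-p_j)\Pr[Y_{j+1}>q-p_j]$ and the elementary inequality $(1-p)\le e^{-p}$ (equivalently $(1-p)e^p\le 1$). Your case split $0\le q<p$ versus $q\ge p$ is unnecessary---since the inductive hypothesis holds for all $q$, one may apply $\Pr[Y_{j+1}>q-p]\le\exp(-(q-p))$ even when $q-p<0$ and conclude directly---but it is not wrong, and your identification with Lemma~\ref{lem:deltafix} at the end matches the paper's as well.
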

\begin{proof}
  We claim that for any $j$ and any $q$, $\Pr[Y_j > q] \leq \exp(-q)$,
  which implies the lemma.  The proof is by reverse induction on
  $j$. For $j=n$, $Y_n$ is $0$ if the $n$th coin or any coin before it
  comes up heads and $p_n$
  otherwise. Thus for $q \geq p_n$, the left hand side is zero. For $q
  \in [0,p_n)$, the left hand side is at most $(1-p_n) \leq \exp(-p_n)
  \leq \exp(-q)$. Finally, for $q < 0$ the right hand side exceeds 1.

  Now suppose that for any adversary's strategy and for all $q$,
  $\Pr[Y_{j+1} > q] \leq \exp(-q)$. We will show the claim for
  $Y_j$. Once again, for $q \leq 0$, the claim is trivial. In round
  $j$, if the adversary chooses $p_j$, there is a probability $p_j$
  that the coin comes up heads so that $Y_j=0$. Thus for any $q \geq
  0$,  $\Pr[Y_j > q] = \Pr[p_j Z_j + Y_{j+1} > q] = (1-p_j)
  \Pr[Y_{j+1} > q-p_j]$. Using the inequality $(1-x)\leq \exp(-x)$ and
  the inductive hypothesis, the claim follows for $Y_j$.
\end{proof}

To map the randomized algorithm to the setting of
lemma~\ref{lem:headsprob}, we consider running the randomized weighted
set cover algorithm as follows. When choosing a set $S$ in step $i$,
the algorithm first tosses a coin whose heads probability is $p_i(A)$
to decide whether to pick a set covering $I$ or not. Then it uses a
second source of randomness to determine the set $S$ itself, sampling
from $\{S: I \in S\}$ or $\{S: I \not\in S\}$ with the appropriate
conditional probabilities based on the outcome of the coin.  Clearly
this is a valid implementation of the weighted set cover
algorithm. Note that the probabilities $p_i(A)$ may depend on the
actual sets chosen in the first $(i-1)$ steps if none of the first
$(i-1)$ coins come up heads. Since lemma~\ref{lem:headsprob} applies
even when $p_i(A)$'s are chosen adversarially,
lemma~\ref{lem:deltafix} follows.

We also prove a more general version of Lemma~\ref{lem:headsprob} that applies to non-Bernoulli distributions. This lemma will be needed to prove the privacy of our algorithm for submodular minimization in Section~\ref{sec:cpp}.
We now consider a different $n$ round probabilistic process. In each round,
an adversary chooses a distribution $\dee_i$ over $[0,1]$, possibly based on the first
$(i-1)$ rounds and a sample $R_i$ is drawn from the distribution $\dee_i$. Let $Z_0 = 1$ and let $Z_{i+1} = Z_i - R_i Z_i$. Let $Y_j$ denote the random variable $\sum_{j=1}^n Z_i E[R_i]$ and let $Y$ denote $Y_1$.

\begin{lemma}
\label{lem:headsprobcont}
Let $Y$ be defined as above. Then for any $q$, $\Pr[Y > q] \leq e\exp(-q)$.
\end{lemma}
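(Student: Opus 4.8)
The plan is to prove, by downward induction on the round index $j$, a \emph{state-dependent} sharpening of the tail bound — the same strategy as in the proof of Lemma~\ref{lem:headsprob}, now carrying along the current value of the survival variable. Write $\mu_i:=\E[R_i]$ for the mean of $\mathcal D_i$; with the indexing normalized so that $Z_1=1$ (so $Z_{i+1}=Z_i(1-R_i)$ and $Z_i=\prod_{\ell<i}(1-R_\ell)$), put $Y_j:=\sum_{i=j}^{n}Z_i\mu_i$, so $Y=Y_1$. Define
\[
  g(z,q)\ :=\ \min\!\bigl(1,\; e\cdot e^{-q/z}\bigr)\ \ (z>0),\qquad g(0,q)\ :=\ \mathbf 1[q<0].
\]
I will show that for every adaptive choice of the $\mathcal D_i$ and every $q\in\Real$,
\[
  \Pr\bigl[\,Y_j>q \mid \mathcal G_{j-1}\,\bigr]\ \le\ g(Z_j,q)\quad\text{a.s.},
\]
where $\mathcal G_{j-1}$ is the history before round $j$; the base case is $j=n+1$, where $Y_{n+1}=0$ and $g(z,q)\ge\mathbf 1[q<0]$. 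Taking $j=1$ (where $Z_1=1$) then gives $\Pr[Y>q]\le g(1,q)=\min(1,e\,e^{-q})\le e\,e^{-q}$, which is the lemma. (The exponent $q/z$ and the leading factor $e$ are precisely what one step of the recursion below demands; the $e$ is why the final bound is $e\,e^{-q}$ rather than the $e^{-q}$ of Lemma~\ref{lem:headsprob}, and it is calibrated so that $q=\ln(e/\delta)$ yields $e\,e^{-q}=\delta$. As intuition, $e^{S_j}Z_j$ with $S_j:=Y-Y_j$ the accumulated ``mass'' is a nonnegative supermartingale started at $1$; the induction exploits this while sidestepping the degeneracy when $Z$ collapses to near $0$ just as $S$ crosses $q$.)

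For the inductive step, condition on $\mathcal G_{j-1}$, so that $z:=Z_j$ and $\mu:=\mu_j$ are determined and $R_j\sim\mathcal D_j$. Since $Y_j=z\mu+Y_{j+1}$ and $Z_{j+1}=z(1-R_j)$, the tower property and the inductive hypothesis at $j+1$ give
\[
  \Pr[\,Y_j>q\mid\mathcal G_{j-1}\,]\ =\ \Pr[\,Y_{j+1}>q-z\mu\mid\mathcal G_{j-1}\,]\ \le\ \E_{R\sim\mathcal D_j}\!\bigl[\,g\bigl(z(1-R),\,q-z\mu\bigr)\,\bigr].
\]
So it suffices to show that, for all $z\in(0,1]$, all $q\in\Real$, and \emph{every} probability distribution on $[0,1]$ with mean $\mu$,
\[
  \E_{R\sim\mathcal D_j}\!\bigl[\,g(z(1-R),\,q-z\mu)\,\bigr]\ \le\ g(z,q).
\]
The crucial point is that the left-hand side depends on the distribution only through its mean $\mu$; hence it is enough to exhibit an \emph{affine} map $\Lambda:[0,1]\to\Real$ with $\Lambda(r)\ge g(z(1-r),q-z\mu)$ for all $r\in[0,1]$ and $\Lambda(\mu)\le g(z,q)$, since then $\E_{R}[g(z(1-R),q-z\mu)]\le\E_R[\Lambda(R)]=\Lambda(\mu)\le g(z,q)$. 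If $q\le z$ then $g(z,q)=1$ and $\Lambda\equiv1$ works (as $g\le1$). If $q>z$, set $c:=q/z-\mu$; since $\mu\le1<q/z$ we have $c>0$ and $c+\mu=q/z>1$, whence $g(z,q)=e\,e^{-(c+\mu)}<1$, while $g(z(1-r),q-z\mu)=\min(1,e\,e^{-c/(1-r)})$ for $r<1$ and vanishes at $r=1$; there I take $\Lambda(r):=(1-r)\,e\,e^{-c}$ when $c\ge1$, and $\Lambda(r):=(1-r)/c$ when $0<c<1$.

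Verifying these two choices is where the genuine (if elementary) work lies, and is the main obstacle. The inequality $\Lambda(r)\ge g(z(1-r),q-z\mu)$ reduces, via the substitution $s:=1/(1-r)\ge1$, to $e^{c(s-1)}\ge s$ in the case $c\ge1$ and to $e^{s-1}\ge cs$ for $s\ge1/c$ in the case $c<1$ — both following from the elementary bound $e^{s-1}\ge s$ $(s\ge1)$ — together with the trivial estimate $(1-r)/c\ge1\ge g(\cdot)$ on the range $1-r>c$, while $r=1$ is handled by $\Lambda(1)=0$. The inequality $\Lambda(\mu)\le g(z,q)$ becomes $1-\mu\le e^{-\mu}$ when $c\ge1$, and $(1-\mu)e^{\mu}\le c\,e^{1-c}$ when $c<1$; the latter holds since $x\mapsto x\,e^{1-x}$ is increasing on $[0,1]$ with value $1$ at $x=1$ and $1-\mu<c\le1$. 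This completes the induction, and hence the lemma. (A randomized adversary may be assumed deterministic at no loss, since any randomized strategy is a mixture of deterministic ones, for each of which the bound holds.)
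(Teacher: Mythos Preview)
Your argument is correct and follows the same inductive scaffold as the paper: strengthen the claim to $\Pr[Y_j>q\mid\mathcal G_{j-1}]\le e\,e^{-q/Z_j}$ (you additionally cap at $1$) and run reverse induction, reducing the step to a one-dimensional inequality $\E_R[\,g(z(1-R),q-z\mu)\,]\le g(z,q)$ that must hold for every $[0,1]$-valued $R$ with mean $\mu$.

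Where you genuinely diverge from the paper is in how you handle that one-dimensional inequality. The paper first replaces the denominator $1-R$ by the constant $1-\mu/q$, then bounds the resulting convex-in-$R$ exponential by its secant line, arriving at the sufficient condition $(1-\mu)e^{\mu/(1-\mu/q)}+\mu\,e^{(\mu-q)/(1-\mu/q)}\le 1$; it then claims this follows because $f(x)=1-e^{-x/(1-\mu/q)}$ is ``convex''. In fact $f$ is concave, and more to the point the displayed sufficient condition is \emph{false}: at $q=2$, $\mu=\tfrac12$ the left side equals $\tfrac12e^{2/3}+\tfrac12e^{-2}\approx 1.04>1$. So the paper's two-step majorization is too lossy to close the induction as written. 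Your approach sidesteps this entirely by constructing the affine majorant $\Lambda$ directly, tailored to whether $c=q/z-\mu$ is $\ge 1$ or $<1$, and by exploiting the cap $g\le 1$ on the range $1-r>c$; the verifications you list are all correct and elementary. (One small slip: in the case $c<1$ the reduction ``$e^{s-1}\ge cs$ for $s\ge 1/c$'' should read $e^{cs-1}\ge cs$, i.e.\ $e^{u-1}\ge u$ with $u=cs\ge 1$, which is exactly the elementary bound you then invoke.)
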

\begin{proof}
We prove a stronger claim. We show that for $\Pr[Y_j \geq qZ_j] \leq e\exp(-q)$. The proof is by reverse induction on $j$. For $j=n$, $Y_n = E[R_n] Z_n \leq Z_n$ since $\dee_n$ is supported on $[0,1]$ and hence has expectation at most 1. Thus the claim is trivial for any $q \geq 1$. For $q\leq 1$, the right hand side is at least  1 and there is nothing to prove.
Supppose that for any $q$ and any strategy of the adversary, $\Pr[Y_{j+1} \geq qZ_{j+1}] \leq e\exp(-q)$. We show the claim for $Y_j$. Once again the case $q \leq 1$ is trivial, so we assume $q \geq 1$. Let $\mu_j$ denote $E[R_j]$. Note that $Y_{j} = Z_j \mu_j + Y_{j+1}$. Moreover, $Z_{j+1} = (1-R_j) Z_j$. Thus,
\ifnum\fullversion=1
$$ \Pr[Y_j \geq q Z_j] = E_{R_j \in \dee_j} [ \Pr[Y_{j+1} \geq qZ_j - \mu_j Z_j]] = E_{R_j \in \dee_j} [ \Pr[Y_{j+1} \geq \frac{q-\mu_j}{1-R_j} Z_{j+1}]] \leq E_{R_j \in \dee_j}[e \exp(-\frac{q-\mu_j}{1-R_j})].$$
\else
\begin{eqnarray*}
\Pr[Y_j \geq q Z_j] &=& E_{R_j \in \dee_j} [ \Pr[Y_{j+1} \geq qZ_j - \mu_j Z_j]]\\ &=& E_{R_j \in \dee_j} [ \Pr[Y_{j+1} \geq \frac{q-\mu_j}{1-R_j} Z_{j+1}]] \\ &\leq& E_{R_j \in \dee_j}[e \exp(-\frac{q-\mu_j}{1-R_j})].
\end{eqnarray*}
\fi
We show that for any distribution $\dee$, the last term is bounded by $e \exp(-q)$, which will complete the proof. Re-arranging, it suffices to show that for any distribution $\dee$ on $[0,1]$,
$$ E_{R \in \dee}[\exp(\frac{\mu - qR}{1-R})] \leq 1.$$
Since $\frac{\mu-qR}{1-R}$ is positive when $R \leq \mu/q$ and negative otherwise, one can verify that for any $R$,
$\exp(\frac{\mu - qR}{1-R}) \leq \exp(\frac{\mu - qR}{1-\frac{\mu}{q}})$. Moreover, since $\exp(\cdot)$ is convex, the function lies below the chord and we can conclude that $\exp(\frac{\mu - qR}{1-\frac{\mu}{q}}) \leq \exp(\frac{\mu}{1-\frac{\mu}{q}})  + R (\exp(\frac{\mu-q}{1-\frac{\mu}{q}}) - \exp(\frac{\mu}{1-\frac{\mu}{q}}))$. Thus it suffices to prove that
$$\exp(\frac{\mu}{1-\frac{\mu}{q}})  + \mu (\exp(\frac{\mu-q}{1-\frac{\mu}{q}}) - \exp(\frac{\mu}{1-\frac{\mu}{q}})) \leq 1,$$
or equivalently
$$1  + \mu (\exp(\frac{-q}{1-\frac{\mu}{q}}) - 1 \leq \exp(\frac{-\mu}{1-\frac{\mu}{q}}).$$
This rearranges to
$$ 1 - \exp(-\frac{\mu}{1-\frac{\mu}{q}})  \leq \mu (1 - \exp(-\frac{q}{1-\frac{\mu}{q}})).$$
Consider the function $f(x) = 1 - \exp(-\frac{x}{1-\frac{\mu}{q}})$. $f$ is convex with $f(0) = 0$ and $f(1) \leq f(q) = (1 - \exp(-\frac{q}{1-\frac{\mu}{q}}))$. Thus $f(\mu) \leq \mu f(1) \leq \mu f(q)$, for $q \geq 1$. The claim follows.
\end{proof}

\end{document}